\newlength{\dx}
\newlength{\dy}
\setlist{leftmargin=*}
\newcommand{\do}{}
\renewcommand{\paragraph}[1]{ \vspace{1ex} \noindent \textbf{#1}}
\newtheorem{theorem}{Theorem}
\newtheorem{lemma}[theorem]{Lemma}
\newtheorem{definition}[theorem]{Definition}
\newtheorem{example}[theorem]{Example}
\definecolor{mLightBrown}{HTML}{EB811B}
\def\define#1#2#3%
\renewcommand*{\do}[1]{%
 \expandafter\newcommand\csname
 #1\endcsname{#2}
}
\newcommand{\PTime}{\textsc{P}\xspace}
\newcommand{\study}[1]{\url{https://clinicaltrials.gov/ct2/show/#1}\xspace}
\newcommand{\EL}{\ensuremath{\mathcal{E\kern-.25ex L}}\xspace}
\newcommand{\ELH}{\ensuremath{\mathcal{E\kern-.25ex LH}}\xspace}
\newcommand{\ELHb}{\ensuremath{\mathcal{E\kern-.25ex LH}_\bot}\xspace}
\newcommand{\bELHb}{\ensuremath{\mathcal{E\kern-.25ex LH}_\bot}\xspace}
\newcommand{\bit}{\ensuremath{\mathtt{bit}}}
\newcommand{\sign}{\ensuremath{\mathtt{sign}}}
\newcommand{\ovf}{\ensuremath{\mathtt{ovf}}}
\newcommand{\aboxf}{\ensuremath{\mathtt{\Imf_\Amc^{\mathrm{fin}}}}} 
\newcommand{\aboxi}{\ensuremath{\mathtt{\Imf_\Amc}}} 
\newcommand{\rew}[1]{\ensuremath{#1_\Tmc}}
\newcommand{\repr}[2]{\ensuremath{\mathtt{rep}^{#1}(#2)}\xspace}
\newcommand{\trans}[4][\xbf]{\ensuremath{[#2]^{#3}(#1,#4)}\xspace}
\newcommand{\MFOTL}{MFOTL\xspace}
\definecolor{shadecolor}{RGB}{237, 237, 171}
\newcommand{\ie}{i.e.\ }
\newcommand{\eg}{e.g.\ }
\newcommand{\wrt}{w.r.t.\ }
\newcommand{\qed}{\mathproofbox}
\newcommand{\BreastCancerPatientIdx}{1}
\newcommand{\SkinAndBreastPatientIdx}{2}
\newcommand{\SkinOfBreastPatientIdx}{3}
\newcommand{\SkinAndBreastPatient}{\ensuremath{p_\SkinAndBreastPatientIdx}\xspace}
\newcommand{\BreastCancerPatient}{\ensuremath{p_\BreastCancerPatientIdx}\xspace}
\newcommand{\SkinOfBreastPatient}{\ensuremath{p_\SkinOfBreastPatientIdx}\xspace}
\newcommand{\findingSite}{findingSite\xspace}
\newcommand{\Structure}{Structure\xspace}
\newcommand{\diag}{\ensuremath{\text{diagnosedWith}}\xspace}
\newcommand{\ChemoP}{\ensuremath{\text{ChemotherapyPatient}}\xspace}
\newcommand{\CancerP}{\ensuremath{\text{CancerPatient}}\xspace}
\newcommand{\ex}[1]{\text{#1}\xspace}
\newcommand{\lnext}{{\ocircle}}
\newcommand{\lprev}{{\ocircle^-}}
\newcommand{\leventually}{\scaleobj{1.2}{\lozenge}}
\newcommand{\lalways}{\Box}
\newcommand{\luntil}{\,\mathcal{U}}
\newcommand{\lsince}{\,\mathcal{S}}
\renewcommand{\diamond}[1][]{\ensuremath{\hspace*{.3em}\clap{\upshape\fontsize{5.5}{6.5}\selectfont\text{#1}}\clap{\leventually}\hspace*{.3em}}\xspace}
\newcommand{\diamondp}{\diamond[\raisebox{.67ex}{$\boldsymbol+$}]}
\newcommand{\diamondm}{\diamond[\raisebox{.67ex}{$\boldsymbol-$}]}
\newcommand{\diamondpm}{\diamond[\raisebox{.9ex}{$\boldsymbol\pm$}]}
\newcommand{\diamondc}[1][]{\ensuremath{\diamond[\raisebox{.77ex}{$\hskip-.05em\rlap{$\mathsf{c}$}\hskip-0.01em\mathsf{c}$}]\ifthenelse{\equal{#1}{}}{}{_{#1}}}\xspace}
\newcommand{\da}{\diamond[\raisebox{.65ex}{$\boldsymbol\star$}]}
\newcommand{\db}{\diamond[\raisebox{.65ex}{$\bullet$}]}
\newcommand{\dc}{\diamond[\raisebox{.55ex}{$\boldsymbol\dagger$}]}
\newcommand{\dd}{\diamond[\raisebox{.6ex}{$\boldsymbol\ddagger$}]}
\newcommand{\Ds}{\ensuremath{\Dmf}\xspace}
\newcommand{\Dpm}{\ensuremath{\Dmf^\pm}\xspace}
\newcommand{\Dc}{\ensuremath{\Dmf^\mathsf{c}}\xspace}
\newcommand{\id}{\ensuremath{\mathrm{id}_{2^\Zbb}}\xspace}
\newcommand{\sdiamond}[1][]{\ensuremath{\smash{\hspace*{.3em}\clap{\fontsize{5}{6}\selectfont\text{#1}}\clap{$\lozenge$}\hspace*{.3em}}}\xspace}
\newcommand{\sdiamondc}{\sdiamond[\raisebox{.65ex}{$\hskip-0.05em\mathsf{c}$}]}
\newcommand{\bsdiamondc}{\ensuremath{\smash{\hspace*{.3em}\clap{\fontsize{5}{6}\selectfont\text{\raisebox{.8ex}{$\hskip-0.05em\mathsf{c}$}}}\clap{$\boldsymbol\lozenge$}\hspace*{.3em}}}\xspace}
\newcommand{\LTLbin}{\ensuremath{\text{LTL}^{\text{bin}}}\xspace}
\newcommand{\ELHbLTL}{\ensuremath{\mathcal{E\kern-.25ex LH}_\bot}-\LTLbin\xspace}
\newcommand{\TELdinfl}{\ensuremath{\mathcal{T\kern-.25ex E\kern-.25ex L^{\sdiamond}_{\mathsf{infl}}}}\xspace}
\newcommand{\dELHb}{\ensuremath{\mathcal{T}\kern-.25ex\ELHb^{\sdiamondc\!,\mathsf{lhs}}}\xspace}
\newcommand{\bdELHb}{\ensuremath{\mathbcal{T}\kern-.25ex\bELHb^{\bsdiamondc\!,\mathsf{lhs}}}\xspace}
\newcommand{\dELHbm}{\ensuremath{\mathcal{T}\kern-.25ex\ELHb^{\sdiamondc\!,\mathsf{lhs},-}}\xspace}
\DeclareMathOperator{\Var}{Var}
\DeclareMathOperator{\Ind}{Ind}
\DeclareMathOperator{\ans}{ans}
\DeclareMathOperator{\cert}{cert}
\DeclareMathOperator{\tem}{tem(\Amc)}
\DeclareMathOperator{\zaplus}{rep(\Amc)}
\DeclareMathOperator{\rewo}{rew}
\DeclareMathOperator{\mwa}{mwa}
\DeclareMathOperator{\PRED}{\mathsf{Pred}}
\DeclareMathOperator{\POS}{\mathsf{Pos}}
\DeclareMathOperator{\NEG}{\mathsf{Neg}}
\tikzset{%
  ->,>=stealth',shorten >=1pt,
  state/.style={draw=none,minimum size=20pt,inner sep=1pt,fill=black!25},
  query-inner/.style={rectangle, fill=black!25,rounded corners=0.1cm},
  query-outer/.style={draw=darkgreen,thick,rounded corners=.55cm,inner sep=2pt},
  named/.style={state,query-inner},
  anonymous/.style={state,circle},
  label/.style={fill=none,minimum size=17pt,inner sep=0pt, font=\scriptsize},
  diagnosedWith/.style={draw=diagnosedWith},
  findingSite/.style={draw=findingSite},
  invisible/.style={opacity=0},
  visible on/.style={alt=#1{}{invisible}},
  alt/.code args={<#1>#2#3}{%
    \alt<#1>{\pgfkeysalso{#2}}{\pgfkeysalso{#3}} 
  },
}
\title[Temporal Minimal-World Semantics for Sparse ABoxes]{Temporal Minimal-World Query Answering over Sparse ABoxes}
\author[Stefan Borgwardt, Walter Forkel, Alisa Kovtunova]{STEFAN BORGWARDT and WALTER FORKEL and ALISA KOVTUNOVA\\
Chair for Automata Theory, Technische Universität Dresden, Germany\\
\email{firstname.lastname@tu-dresden.de}}
\begin{document}


\maketitle

\begin{abstract}
  Ontology-mediated query answering is a popular paradigm for enriching answers to user queries with background knowledge.
  For querying the \emph{absence} of information, however, there exist only few ontology-based approaches.
  Moreover, these proposals conflate the closed-domain and closed-world assumption, and therefore are not suited to deal with the anonymous objects that are common in ontological reasoning.
  %
  Many real-world applications, like processing electronic health records (EHRs), also contain a temporal dimension, and require efficient reasoning algorithms.
  Moreover, since medical data is not recorded on a regular basis, reasoners must deal with sparse data with potentially large temporal gaps.
  %

  Our contribution consists of two main parts:
  In the first part we introduce a new closed-world semantics for answering conjunctive queries with negation over ontologies formulated in the description logic \ELHb, which is based on the \emph{minimal} canonical model.
  We propose a rewriting strategy for dealing with negated query atoms, which shows that query answering is possible in polynomial time in data complexity.
  In the second part, we extend this minimal-world semantics for answering metric temporal conjunctive queries with negation over the lightweight temporal logic \dELHbm and obtain similar rewritability and complexity results.
  This paper is under consideration in Theory and Practice of Logic Programming (TPLP).
\end{abstract}

\section{Introduction}

\emph{Ontology-mediated query answering (OMQA)} allows using background knowledge for answering user queries, supporting data-focused applications offering search, analytics, or data integration functionality.
An \emph{ontology} is a logical theory formulated in a decidable fragment of first-order logic, \eg a description logic (DL)~\cite{BHLS-17}, with a trade-off between the expressivity of the ontology language and the efficiency of query answering.
\emph{Rewritability} is a popular topic of research, the idea being to reformulate ontological queries, such as conjunctive queries (CQs) enhanced by an ontology, into database queries that can be answered by traditional database management systems~\cite{MuTh-RW14,EOS+-AAAI12,bienvenu2015ontology,CCK+-SWJ17,KHS+-JWS17}.

Ontology-based systems do not use the \emph{closed-domain} and \emph{closed-world} semantics of databases.
Instead, they acknowledge that unknown (\emph{anonymous}) objects may exist (\emph{open domain}) and that facts that are not explicitly stated may still be true (\emph{open world}).
Anonymous objects are related to \emph{null} values in databases; for example, if we know that every person has a mother, then first-order models include all mothers, even though they may not be mentioned explicitly in the input dataset.
In addition, the open-world assumption ensures that, if the dataset does not contain an entry on, \eg whether a person is male or female, then we do not infer that this person is neither male nor female, but rather consider all possibilities.

In the last decade, OMQA has been extended to temporal description logics (DLs) that combine terminological and temporal knowledge representation capabilities~\cite{WoZa-FroCoS98,LuWZ-TIME08,artale2017ontology}.
To obtain tractable reasoning procedures, lightweight temporal DLs have been developed \cite{AKL+-TIME07,GuJK-IJCAI16}.
The idea is to use temporal operators, often from the linear temporal logic LTL, inside DL axioms.
For example, $\diamondm\exists\ex{diagnosis}.\ex{BrokenLeg}\sqsubseteq\exists\ex{treatment}.\ex{LegCast}$ states that after breaking a leg one has to wear a cast.
However, this basic approach cannot represent the distance of events, \eg that the cast only has to be worn for a fixed amount of time.
Recently, metric temporal ontology languages have been investigated \cite{GuJO-ECAI16,BBK+-FroCoS17,BKR+-JAIR18}, which allow to replace~$\diamondm$ in the above axiom with~$\diamond_{[-8,0]}$, \ie wearing the cast is required only if the leg was broken $\le8$ time points (\eg weeks) ago.

The biomedical domain is a fruitful area for OMQA methods, due to the availability of large ontologies covering a multitude of topics%
\footnote{https://bioportal.bioontology.org}
and the demand for managing large amounts of patient data, in the form of \emph{electronic health records (EHRs)} \cite{SCWB-17}.
For example, for the preparation of clinical trials%
\footnote{https://clinicaltrials.gov}
a large number of patients need to be screened for eligibility, and an important area of current research is how to automate this process \cite{patel2007matching,besana2010using,KoPr-JMIR14,TAC+-14,NWP+-MIDM15}.

In particular, many clinical trials contain temporal eligibility criteria~\cite{crowe2015designing}, such as:

``type 1 diabetes with duration at least 12 months''\footnote{\study{NCT02280564}};
``known history of heart disease or heart rhythm abnormalities''\footnote{\study{NCT02873052}};

``CD4+ lymphocytes count > 250/mm3, for at least 6 months''\footnote{\study{NCT02157311}};
or
``symptomatic recurrent paroxysmal atrial fibrillation (PAF) (> 2 episodes in the last 6 months)''\footnote{\study{NCT00969735}}.
Moreover, measurements, diagnoses, and treatments in a patients' EHR are clearly valid only for a certain amount of time.
To automatically screen patients according to the temporal criteria above, one needs a sufficiently powerful formalism that can reason about biomedical and temporal knowledge.

Additionally, ontologies and EHRs mostly contain \emph{positive} information, while clinical trials also require certain \emph{exclusion criteria} to be absent in the patients.
For example, we may want to select only patients that have \emph{not} been diagnosed with cancer,%
\footnote{An exclusion criterion in https://clinicaltrials.gov/ct2/show/NCT01463215}
but such information cannot be entailed from the given knowledge.
The culprit for this problem is the open-world semantics, which considers a cancer diagnosis possible unless it has been explicitly ruled out.
Unfortunately, existing approaches like (partial) closed-world semantics \cite{LuSW-IJCAI13,AhOS-IJCAI16} or epistemic logics are unable to deal with closed-world knowledge over anonymous objects \cite{wolter2000first,calvanese2006epistemic}.

In the first part of this paper, we introduce a new closed-world semantics to answer \emph{conjunctive queries with (guarded) negation} \cite{BatO-VLDB12} over ontologies formulated in \ELHb, an ontology language that covers many biomedical ontologies, \eg SNOMED\,CT.%
\footnote{\url{https://www.snomed.org/}}
Our semantics, called \emph{minimal-world semantics}, is based on the \emph{minimal canonical model}, which encodes all inferences of the ontology in the most concise way possible.
As a side effect, this means that ordinary CQs without negation are interpreted under the standard open-world semantics.
In order to properly handle negative knowledge about anonymous objects, however, we have to be careful in the construction of the canonical model, in particular about the number and types of anonymous objects that are introduced.
Since in general the minimal canonical model is infinite, we develop a rewriting technique, in the spirit of the combined approach of \cite{lutz2009conjunctive,KLT+-IJCAI11}, and most closely inspired by \cite{EOS+-AAAI12,bienvenu2015ontology}, which allows us to evaluate conjunctive queries with negation over a finite part of the canonical model, using traditional database techniques.

In the second part of the paper, we extend the minimal-world semantics to support also temporal information.

When working with EHRs, which contain information for specific points in time only, it is especially important to be able to infer what happened to the patient in the meantime.
For example, if a patient is diagnosed with a (currently) incurable disease like \ex{Diabetes}, they will still have the disease at any future point in time.
Similarly, if the EHR contains two entries of \ex{CD4Above250} four weeks apart, one may reasonably infer that this was true for the whole four weeks.

We use the qualitative temporal DL \dELHbm \cite{BoFoKo-RRML} that can express the former statement by declaring \ex{Diabetes} as \emph{expanding} via the axiom $\diamondm\ex{Diabetes}\sqsubseteq\ex{Diabetes}$.
Additionally, by a special kind of metric temporal operators it is allowed to write $\diamondc[4]\ex{CD4Above250}\sqsubseteq\ex{CD4Above250}$, making the measurement \emph{convex} for a specified length of time~$n$ (\eg $4$ weeks).
This means that information is interpolated between time points of distance less than~$n$, thereby computing a convex closure of the available information.
The threshold~$n$ allows to distinguish the case where two mentions of \ex{CD4Above250} are years apart, and are therefore unrelated.
Reasoning in \dELHbm is tractable in data complexity, because $\diamond$-operators are only allowed on the left-hand side of concept inclusions \cite{BoFoKo-RRML,GuJK-IJCAI16}, which is also common for temporal DLs based on \textit{DL-Lite} \cite{AKWZ-IJCAI13,AKK+-IJCAI15}.

Here we consider the problem of answering metric temporal queries over \dELHbm knowledge bases with the minimal-world semantics introduced in the first part of the paper.
Our query language extends the temporal conjunctive queries from~\cite{BaBL-JWS15} by metric temporal operators~\cite{GuJO-ECAI16,BBK+-FroCoS17} and negation.
For example, we can use queries like $\lalways_{[-12,0]}(\exists y.\diag(x,y) \land \ex{Diabetes}(y)) $ to select all patients~$x$ for which the first criterion from above is satisfied.

By extending our combined rewriting approach, we show that the data complexity of temporal query answering is not higher than for (positive) atemporal queries in \ELHb, and also provide a tight combined complexity result of \ExpSpace.
Unlike most research on temporal query answering~\cite{BaBL-JWS15,AKK+-IJCAI15}, we do not assume that input data is given for all time points in a certain interval, but rather at sporadic time points with arbitrarily large gaps~\cite{BKR+-JAIR18}.
The main technical difficulty is to determine which additional time points are relevant for answering a query, and how to access these time points without having to fill infinitely many gaps.

This paper extends the conference paper~\cite{BoFo-JELIA19} by allowing also non-rooted queries and extends parts of~\cite{BoFoKo-RRML} by providing full proofs of all results and improving some of the running examples.

\section{Preliminaries}
\label{sec:preliminaries}

We recall the definitions of \ELHb and first-order queries, which are needed for our rewriting of conjunctive queries with negation.

\paragraph{The Description Logic \bELHb.}
Let $\NC, \NR, \NI$ be countably infinite sets of \emph{concept}, \emph{role}, and \emph{individual names}, respectively.
A \emph{concept} is built according to the syntax rule
$ C ::= A \mid \top \mid \bot \mid C \sqcap C \mid \exists r.C $,
where $A \in \NC$ and $r \in \NR$.
An \emph{ABox} is a finite set of \emph{concept assertions} $A(a)$ and \emph{role assertions} $r(a,b)$, where $a,b \in \NI$.
A \emph{TBox} is a finite set of \emph{concept inclusions} $C \sqsubseteq D$ and \emph{role inclusions} $r \sqsubseteq s$, where $C,D$ are concepts and $r,s$ are roles.
In the following we assume the TBox to be in normal form, \ie that it contains only inclusions of the form
\[
  A_1 \sqcap \dots \sqcap A_n \sqsubseteq B,
  \qquad
  A \sqsubseteq \exists r.B,
  \qquad
  \exists r.A \sqsubseteq B,
  \qquad
  r \sqsubseteq s
\]
where $A_{(i)} \in \NC \cup \{ \top \}$, $B \in \NC \cup \{ \bot \}$, $r,s \in \NR$, and $n \geq 1$.

A \emph{knowledge base (KB)} (or \emph{ontology}) $\Kmc = \Tmc \cup \Amc$ consists of a TBox~$\Tmc$ and an ABox~$\Amc$.
We refer to the set of individual names occurring in $\Kmc$ by $\Ind(\Kmc)$.
We write $C\equiv D$ to abbreviate the two inclusions $C\sqsubseteq D$, $D\sqsubseteq C$, and similarly for role inclusions.

The semantics of \ELHb is defined as usual \cite{BCM+-07} in terms of \emph{interpretations} $\Imc = (\Delta^\Imc, \cdot^\Imc)$, where the \emph{domain} $\Delta^\Imc$ is a non-empty set and the functions~$\cdot^{\Imc}$ are extended from concept and role names inductively as follows:
\begin{align*}
  \top^{\Imc} &:= \Delta^\Imc
  &
  \bot^{\Imc} &:= \emptyset
  \\
  (C\sqcap D)^{\Imc} &:= C^{\Imc}\cap D^{\Imc}
  &
  (\exists r.C)^{\Imc} &:= \big\{ d\in\Delta^\Imc \mid \exists e\in C^{\Imc}\colon (d,e)\in r^{\Imc} \big\}
\end{align*}
$\Imc$ is a \emph{model} of (or \emph{satisfies})

  a concept inclusion $C\sqsubseteq D$ if $C^{\Imc}\subseteq D^{\Imc}$ holds,
  a role inclusion $r\sqsubseteq s$ if $r^{\Imc}\subseteq s^{\Imc}$ holds,
  a concept assertion $A(a)$ if $a\in A^{\Imc}$,
  a role assertion $r(a,b)$ if $(a,b)\in r^{\Imc}$,
  and the KB~\Kmc if it satisfies all axioms in~\Kmc.

In the following, we assume all KBs to be consistent and
make the standard name assumption, \ie that for every individual name~$a$ in any interpretation $\Imc$ we have $a^\Imc = a$.
An axiom~$\alpha$ is \emph{entailed} by~\Kmc (written $\Kmc \models \alpha$) if $\alpha$ is satisfied in all models of~$\Kmc$.
We abbreviate $\Kmc\models C\sqsubseteq D$ to $C\sqsubseteq_\Tmc D$, and similarly for role inclusions; note that the ABox does not influence the entailment of inclusions.
Entailment in \ELHb can be decided in polynomial time \cite{BaBL-IJCAI05}.

\paragraph{The Temporal Description Logic \dELHbm.}
\label{sec:delhb}
We recall the relevant definitions from~\cite{BoFoKo-RRML}, starting with the MTL operators that are used in \dELHbm.

LTL formulas are formulated over a finite set~$P$ of \emph{propositional variables}. In this section, we consider only formulas built according to the syntax rule $\varphi ::= p \mid \varphi\land\varphi \mid \varphi\lor\varphi \mid \diamond_I\varphi$, where $p\in P$ and $I$ is an interval in~\Zbb.
The semantics is given by \emph{LTL-structures} $\Wmf=(w_i)_{i\in \Zbb}$, where $w_i\subseteq P$.
We write
\begin{align*}
  \Wmf,i\models p &\text{ iff } p\in w_i \text{ if $p\in P$},
  &&&
  \Wmf,i\models\varphi\land\psi &\text{ iff } \Wmf,i\models\varphi\text{ and }\Wmf,i\models\psi,
  \\
  \Wmf,i\models\diamond_I \varphi &\text{ iff } \exists k\in I\colon \; \Wmf,i+k\models\varphi,
  &&&
  \Wmf,i\models\varphi\lor\psi &\text{ iff } \Wmf,i\models\varphi\text{ or }\Wmf,i\models\psi.
\end{align*}
In \dELHbm the following derived operators are used, where $n\ge 1$.
\begin{align}
  \diamondpm \varphi &:=\diamond_{(-\infty, \infty)}\varphi
  &&&
  \diamondp \varphi &:= \diamond_{[0,\infty)} \varphi
  \qquad\qquad\qquad
  \diamondm \varphi := \diamond_{(-\infty,0]} \varphi \nonumber
  \\
  \diamondc \varphi &:= \diamond_{(-\infty,0]}\varphi \land \diamond_{[0,\infty)}\varphi
  &&&
  \diamondc[n] \varphi &:= \bigvee\limits_{\substack{k,m\geq 0\\k+m=n-1}} (\diamond_{[-k,0]}\varphi \land \diamond_{[0,m]}\varphi) \label{eq:convex-diamond}
\end{align}
The operator~\diamondp is the \enquote{eventually} operator of classical LTL, and $\diamondm,\diamondpm$ are two variants that refer to the past, or to both past and future, respectively.
The operator~\diamondc requires that $\varphi$ holds \emph{both} in the past and in the future, thereby distinguishing time points that lie within an interval enclosed by time points at which~$\varphi$ holds.
This can be used to express the convex closure of time points, as described in the introduction. 
Finally, the operators~\diamondc[n] represent a metric variant of~\diamondc, requiring that different occurrences of~$\varphi$ are at most $n-1$ time points apart, \ie enclose an interval of length~$n$.

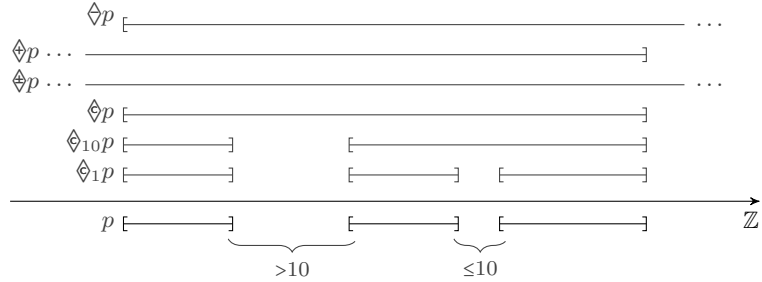
\begin{figure}
  \centering
\colorlet{alert}{darkgray}
\colorlet{alertt}{darkgray}
\newlength{\oph}
\setlength{\oph}{0.4cm}

\newlength{\opoff}
\setlength{\opoff}{0.3cm}

\tikzstyle{interval}=[]

\newcommand{\temporalbase}{
\draw[black,->] (0,0) -- (10,0);
\node[black] at (9.85,-0.25) {$\Zbb$};
\draw[interval, {[-]}] (1.5,-\opoff) -- (3  ,-\opoff);
\draw[interval, {[-]}] (4.5,-\opoff) -- (6  ,-\opoff);
\draw[interval, {[-]}] (6.5,-\opoff) -- (8.5,-\opoff);
\node[alert,anchor=east] at (1.5,-\opoff-0.1) {$p$};
}

\begin{tikzpicture}

  \temporalbase
  \begin{scope}[shift={(0,0\oph+\opoff)}]
    \draw[interval,{[-]},alertt] (1.5,0.05) -- (3,0.05);
    \draw[interval,{[-]},alertt] (4.5,0.05) -- (6,0.05);
    \draw[interval,{[-]},alertt] (6.5,0.05) -- (8.5,0.05);
    \node[alertt,anchor=east] at (1.5,0.1) {$\diamondc[1] p$};
  \end{scope}

  \begin{scope}[shift={(0,1\oph+\opoff)}]
    \draw[interval,{[-]},alertt] (1.5,0.05) -- (3,0.05);
    \draw[interval,{[-]},alertt] (4.5,0.05) -- (8.5,0.05);
    \node[alertt,anchor=east] at (1.5,0.1) {$\diamondc[10] p$};
  \end{scope}

  \begin{scope}[shift={(0,2\oph+\opoff)}]
    \draw[interval, {[-]},alertt] (1.5,0.05) -- (8.5,0.05);
    \node[alertt,anchor=east] at (1.5,0.1) {$\diamondc p$};
  \end{scope}

  \begin{scope}[shift={(0,3\oph+\opoff)}]
    \draw[interval, alertt, {-}] (1,0.05) -- (9,0.05);
    \node[alertt,anchor=east] at (0.5,0.1) {$\diamondpm p$};
    \node[alertt,anchor=west] at (9,0.05) {$\dots$};
    \node[alertt,anchor=east] at (1,0.05) {$\dots$};
  \end{scope}

  \begin{scope}[shift={(0,4\oph+\opoff)}]
    \draw[interval, alertt, {-]}] (1,0.05) -- (8.5,0.05);
    \node[alertt,anchor=east] at (0.5,0.1) {$\diamondp p$};
    \node[alertt,anchor=east] at (1,0.05) {$\dots$};
  \end{scope}

  \begin{scope}[shift={(0,5\oph+\opoff)}]
    \draw[interval, alertt, {[-}] (1.5,0.05) -- (9,0.05);
    \node[alertt,anchor=east] at (1.5,0.2) {$\diamondm p$};
    \node[alertt,anchor=west] at (9,0.05) {$\dots$};
  \end{scope}

  \draw[alert,-,decoration={brace,mirror,amplitude=5},decorate]
    (5.9,-0.5) -- (6.6,-0.5);
  \node[alertt] at (6.25,-0.9) {\footnotesize${\le}10$};

  \draw[alert,-,decoration={brace,mirror,amplitude=5},decorate]
    (2.9,-0.5) -- (4.6,-0.5);
  \node[alertt] at (3.75,-0.9) {\footnotesize${>}10$};
\end{tikzpicture}
\caption{The resulting intervals of the induced functions of the diamond operators when applied to the intervals at which $p$ holds (denoted below the timeline). While $\diamondc$ closes all gaps, $\diamondc_{10}$ does not close the first gap, since it is greater than $10$ time units.}
\label{fig:operators}
\end{figure}

To make the behavior of these operators more clear, we consider their semantics in a more abstract way:
given a set of time points where a certain information is available (\eg a diagnosis), described by a propositional variable~$p$, we consider the resulting set of time points at which $\da p$ holds, where \da is a placeholder for one of the operators defined above (we will similarly use $\db,\dc,\dd$ as placeholders for different $\diamond$-operators in the following).

  We consider the sets $\Dc := \{ \diamondc  \} \cup \{ \diamondc[i]\mid i \ge 1 \}$, $\Dpm = \{ \diamondm, \diamondp, \diamondpm\}$, and $\Ds :=  \Dpm \cup \Dc$ of \emph{diamond operators}.
  Each $\da\in\Ds$ induces a function $\da\colon 2^\Zbb\to 2^\Zbb$ with $\da(M):=\{i\mid \Wmf_M,i\models\da p \}$ for all $M\subseteq \Zbb$, with the LTL-structure $\Wmf_M:=(w_i)_{i\in \Zbb}$ such that $w_i := \{p\}$ if $i\in M$, and $w_i := \emptyset$ otherwise.

%
For an illustration see~\Cref{fig:operators}.
We will omit the parentheses in $\da(M)$ for a cleaner presentation.

If~$M$ is empty, then $\da M$ is also empty, for any $\da\in\Ds$.
For any non-empty $M\subseteq\Zbb$, the following expressions are obtained, where $\max M$ may be~$\infty$ and $\min M$ may be $-\infty$.
\begin{align*}
  \diamondpm M &= \Zbb
  &&&\!\!\!\!\!\!
  \diamondp M &= (-\infty,\max M]
  \quad
  \diamondm M = [\min M,\infty)
  \quad
  \diamondc M = [\min M, \max M]
  \\
  \diamondc[1]M &= M
  &&&\!\!\!\!\!\!
  \diamondc[n]M &= \{ i \in \Zbb \mid \exists j,k \in M \text{ with } j \leq i \leq k \text{ and } k-j < n \}
\end{align*}

Based on the operators in~\Ds we give the definition of \dELHbm in the following.

A \emph{\dELHbm concept} is built using the rule
$C ::= A \mid \top \mid \bot \mid C \sqcap C \mid \exists r.C \mid \da C$,
where $A \in \NC$, $\da\in\Ds$, and $r$ is a role.
Such a~$C$ is an \emph{\ELHb concept} (or \emph{atemporal concept}) if it does not contain any diamond operators.

A \emph{\dELHbm TBox} is a finite set of \emph{concept inclusions (CIs)} $C \sqsubseteq D$ and \emph{role inclusions (RIs)} $r \sqsubseteq s$, where $C$ is a \dELHb concept, $D$ is an atemporal concept, and $r,s \in \NR$ are roles.
Note that temporal roles are not allowed in \dELHbm.
An \emph{ABox} is a finite set of \emph{concept assertions} $A(a,i)$ and \emph{role assertions} $r(a,b,i)$, where $A\in\NC$, $r\in\NR$, $a,b \in \NI$, and $i\in\Zbb$.

The set of time points $i\in\Zbb$ occurring in $\Amc$ we denote as $\tem$. Also we assume each time point is encoded in binary.
In the following, we always assume a KB $\Kmc=\Tmc\cup\Amc$ to be given.

A \emph{temporal interpretation} $\Imf = ( \Delta^{\Imf}, (\Imc_i)_{i\in \Zbb})$, is a collection of interpretations $\Imc_i = (\Delta^\Imf, \cdot^{\Imc_i})$, $i \in \Zbb$, over~$\Delta^\Imf$.

The functions~$\cdot^{\Imc_i}$ are extended to temporal concepts as follows:
\begin{align*}
  (\da C)^{\Imc_i} &:= \big\{ d \in \Delta^\Imf \mid i \in \da\{ j \mid d \in C^{\Imc_j} \} \big\}
\end{align*}
$\Imf$ is a \emph{model} of (or \emph{satisfies})

  a concept inclusion $C\sqsubseteq D$ if $C^{\Imc_i}\subseteq D^{\Imc_i}$ holds for all $i\in\Zbb$,
  a role inclusion $r\sqsubseteq s$ if $r^{\Imc_i}\subseteq s^{\Imc_i}$ holds for all $i\in\Zbb$,
  a concept assertion $A(a,i)$ if $a\in A^{\Imc_i}$,
  a role assertion $r(a,b,i)$ if $(a,b)\in r^{\Imc_i}$,
  and the KB~\Kmc if it satisfies all axioms in~\Kmc.

This fact is denoted by $\Imf\models\alpha$, where $\alpha$ is an \emph{axiom} (\ie inclusion or assertion) or a KB.
Entailment of CIs and assertions in \dELHbm is \PTime-complete~\cite{BoFoKo-RRML}.

As in many lightweight temporal logics, diamonds are not allowed to occur on the right-hand side of CIs, because that would allow to simulate concept disjunction and make the logic intractable~\cite{AKL+-TIME07,GuJK-IJCAI16}.

As usual, we can simulate CIs involving complex concepts by introducing fresh concept and role names as abbreviations.
For example, $\exists r.\diamondm A\sqsubseteq B$ can be split into $\diamondm A\sqsubseteq A'$, and $\exists r.A'\sqsubseteq B$.
Hence, we can restrict ourselves w.l.o.g.\ to inclusions in the following \emph{normal form}:
\begin{equation}
  \da A\sqsubseteq B,\ A\sqsubseteq\exists r.B,\ A_1\sqcap A_2\sqsubseteq B,\ r\sqsubseteq s,\  \exists r.A\sqsubseteq B, \label{eq:normal1}
\end{equation}
where $\da \in \Ds$, $A,A_1,A_2,B\in \NC\cup\{\bot,\top\}$, and $r,s\in \NR$.

\paragraph{Conjunctive Query Answering.}

Let $\NV$ be a countably infinite set of \emph{variables}. The set of \emph{terms} is $\NT := \NV \cup \NI$.

A \emph{first-order query} $\phi(\mathbf{x})$ is a first-order formula built from \emph{concept atoms} $A(t)$ and \emph{role atoms} $r(t, t')$ with $A\in \NC$, $r\in \NR$, and $t_i \in \NT$, using the boolean connectives $(\land, \lor, \neg, \to)$ and universal and existential quantifiers $(\forall x, \exists x)$.
The free variables $\mathbf{x}$ of $\phi(\mathbf{x})$ are called \emph{answer variables} and we say that $\phi$ is $k$-ary if there are $k$ answer variables. The remaining variables are the \emph{quantified variables}. We use $\Var(\phi)$ to denote the set of all variables in $\phi$.
A query without any answer variables is called a \emph{Boolean query}.

Let $ \Imc  = (\Delta, \cdot^\Imc)$ be an interpretation. An \emph{assignment} $\pi \colon \Var(\phi) \to \Delta$

\emph{satisfies} $\phi$ in $ \Imc $, if $\Imc, \pi \models \phi$ under the standard semantics of first-order logic. We write $\Imc \models \phi$ if there is a satisfying assignment for~$\phi$ in~$\Imc$.
Let $\Kmc$ be a KB.
A $k$-tuple~$\mathbf{a}$ of individual names from $\Ind(\Kmc)$ is an \emph{answer} to~$\phi$ in~$\Imc$ if $\phi$ has a satisfying assignment~$\pi$ in~$\Imc$ with $\pi(\mathbf{x}) = \mathbf{a}$;
it is a \emph{certain answer} to~$q$ over~$\Kmc$ if it is an answer to~$q$ in all models of~$\Kmc$. 
We denote the set of all answers to $\phi$ in $\Imc$ by $\ans(\phi, \Imc)$, and the set of all certain answers to~$\phi$ over~$\Kmc$ by $\cert(\phi,\Kmc)$.

A \emph{conjunctive query} (CQ) $q(\mathbf{x})$ is a first-order query of the form $\exists \mathbf{y}.\, \varphi(\mathbf{x}, \mathbf{y})$, where $\varphi$ is a conjunction of atoms.
Abusing notation, we write $\alpha \in q$ if the atom~$\alpha$ occurs in~$q$, and conversely may treat a set of atoms as a conjunction.
The \emph{leaf variables}~$x$ in~$q$ are those that do not occur in any atoms of the form $r(x,y)$.
Clearly, $q$ is satisfied in an interpretation if there is a satisfying assignment for $\varphi(\mathbf{x}, \mathbf{y})$, which is often called a \emph{match} for $q$.
A CQ is \emph{rooted} if all variables are connected to an answer variable through (a sequence of) role atoms.

CQ answering over \ELHb KBs is \emph{combined first-order rewritable} \cite{lutz2009conjunctive}:
For any CQ $q$ and consistent KB $\Kmc=(\Tmc,\Amc)$ we can find a first-order query $q_\Tmc$ and a finite interpretation $\Imc_\Kmc^\mathrm{fin}$ such that
$\cert(q, \Kmc) = \ans(q_\Tmc, \Imc_\Kmc^\mathrm{fin})$.
Importantly, $\Imc_\Kmc^\mathrm{fin}$ is independent of~$q$, \ie can be reused to answer many different queries, while $q_\Tmc$ is independent of~$\Amc$, \ie each query can be rewritten without using the (possibly large) dataset.
The rewritability results are based crucially on the \emph{canonical model} property of \ELHb: For any consistent KB $\Kmc$ one can construct a model~$\Imc_\Kmc$ that is homomorphically contained in any other model. This is a very useful property since any match in the canonical model corresponds to matches in all other models of~$\Kmc$, and therefore $\cert(q, \Kmc) = \ans(q, \Imc_\Kmc)$ holds for all CQs~$q$.

For the complexity analysis, one considers the decision problem of whether a given tuple~$\mathbf{a}$ belongs to $\cert(q,\Kmc)$.
The complexity of query answering is usually viewed in two different ways: In \emph{combined complexity} the knowledge base, the data and the query are considered as input, while in \emph{data complexity}, which is closely related to rewritability, only the data is considered as input. The latter is often viewed as the more appropriate, since the queries and the knowledge base are tend to be fixed and relatively small, compared to potentially huge amounts of data, that are updated frequently.
CQ answering for \ELHb is \PTime-complete in data complexity and \NP-complete in combined complexity \cite{Rosa-DL07}.

\section{Conjunctive Queries With Negation}
\label{sec:ncqs}

We are interested in answering queries of the following form.

\begin{definition}
  \emph{Conjunctive queries with (guarded) negation} (NCQs) are constructed by extending CQs with \emph{negated concept atoms} $\neg A(t)$ and \emph{negated role atoms} $\neg r(t,t')$, such that, for any negated atom over terms $t$ (and $t'$) the query contains at least one positive atom over $t$ (and $t'$).
\end{definition}

An NCQ is \emph{rooted} if its variables are all connected via role atoms to an answer variable (from $\xbf$) or an individual name.
An NCQ is \emph{Boolean} if it does not have answer variables.
To determine whether $\Imc\models\phi$ holds for an NCQ~$\phi$ and an atemporal interpretation~\Imc, we use standard first-order semantics.

We first discuss different ways of handling the negated atoms, and then propose a new semantics that is based on a particular kind of \emph{minimal} canonical model.
For this, we consider an example based on real EHRs (ABoxes) from the MIMIC-III database~\cite{JPS+-SD16}, 
criteria (NCQs) from clinicaltrials.gov, and the large medical ontology SNOMED\,CT\footnote{https://www.snomed.org/snomed-ct} (the TBox). We omit here the \enquote{role groups} used in SNOMED\,CT, which do not affect the example. We also simplify the concept names and their definitions for ease of presentation.
We assume that the ABoxes have been extracted from EHRs by a natural language processing tool based, \eg on existing concept taggers like \cite{aronson2001effective,savova2010mayo}; of course, this extraction is an entire research field in itself, which we do not attempt to tackle in this paper.

\begin{example}
  \label{example:cancer-patient}
We consider three patients.
Patient~\BreastCancerPatient (patient 2693 in the MIMIC-III dataset) is diagnosed with breast cancer and an unspecified form of cancer (this often occurs when there are multiple mentions of cancer in a patient's EHR, which cannot be resolved to be the same entity).
Patient~\SkinAndBreastPatient (patient 32304 in the MIMIC-III dataset) suffers from breast cancer and skin cancer (\enquote{[S]tage IV breast cancer with mets to skin, bone, and liver}).
For~\SkinOfBreastPatient (patient 88432 in the MIMIC-III dataset), we know that~\SkinOfBreastPatient has breast cancer that involves the skin (\enquote{Skin, left breast, punch biopsy: Poorly differentiated carcinoma}).

Since SNOMED\,CT does not model patients, we add a special role name \emph{diagnosedWith} that connects patients with their diagnoses.
One can use this to express diagnoses in two ways.
First, one can explicitly introduce individual names for diagnoses in assertions like $\text{diagnosedWith}(\BreastCancerPatient,d_1)$, $\text{BreastCancer}(d_1)$, $\text{diagnosedWith}(\BreastCancerPatient,d_2)$, $\text{Cancer}(d_2)$, implying that these diagnoses are treated as distinct entities under the standard name assumption.
Alternatively, one can use complex assertions like $\exists\text{diagnosedWith}.\text{Cancer}(\BreastCancerPatient)$, which allows the logical semantics to resolve whether two diagnoses actually refer to the same object.
Since ABoxes only contain concept names, in this case one has to introduce auxiliary definitions like $\text{CancerPatient}\equiv\exists\text{diagnosedWith}.\text{Cancer}$ into the TBox.
We use both variants in our example, to illustrate their different behaviors.

  We obtain the KB $\Kmc_C$, containing knowledge about different kinds of cancers and cancer patients, together with information about the three patients. The information about cancers is taken from SNOMED\,CT (in simplified form):
  \begin{align*}
  \text{SkinCancer} &\equiv \text{Cancer} \sqcap \exists\text{findingSite}.\text{SkinStructure}                 \\
  \text{BreastCancer} &\equiv \text{Cancer} \sqcap \exists\text{findingSite}.\text{BreastStructure}             \\
  \text{SkinOfBreastCancer} &\equiv \text{Cancer} \sqcap \exists\text{findingSite}.\text{SkinOfBreastStructure}           \\
  \text{SkinOfBreastStructure} &\sqsubseteq \text{BreastStructure}\sqcap \text{SkinStructure}
  \end{align*}
  The EHRs are compiled into several assertions per patient:
  \begin{align*}
    \text{Patient~\BreastCancerPatient:} &\ \text{BreastCancerPatient}(\BreastCancerPatient),\  \text{CancerPatient}(\BreastCancerPatient) \\
    \text{Patient~\SkinAndBreastPatient:} &\ \text{SkinCancerPatient}(\SkinAndBreastPatient),\  \text{BreastCancerPatient}(\SkinAndBreastPatient) \\
    \text{Patient~\SkinOfBreastPatient:} &\ \text{diagnosedWith}(\SkinOfBreastPatient, c_\SkinOfBreastPatientIdx),\ \text{SkinOfBreastCancer}(c_\SkinOfBreastPatientIdx)
  \end{align*}
  Additionally, we add the following auxiliary definitions to the TBox:
  \begin{align*}
    \text{CancerPatient} &\equiv \exists\text{diagnosedWith}.\text{Cancer}                                                          \\
    \text{SkinCancerPatient} &\equiv \exists\text{diagnosedWith}.\text{SkinCancer}                                                  \\
    \text{BreastCancerPatient} &\equiv \exists\text{diagnosedWith}.\text{BreastCancer}
  \end{align*}
  For example, skin cancers and breast cancers are cancers occurring at specific parts of the body (\enquote{body structure} in SNOMED\,CT), and a breast cancer patient is someone who is diagnosed with breast cancer. This means that, in every model of~$\Kmc_C$, every object that satisfies BreastCancerPatient (in particular~\SkinAndBreastPatient) must have a diagnosedWith-connected object that satisfies BreastCancer, and so on.
  Moreover, a cancer may also occur in the skin of the breast, \ie a part of the body that is classified as both a \enquote{skin structure} and as a \enquote{breast structure}.

  For a clinical trial,\footnote{https://clinicaltrials.gov/ct2/show/NCT01960803} we want to find patients that have
  \enquote{breast cancer}, but not \enquote{breast cancer that involves the skin.}

  This can be translated into an NCQ:
  \begin{align*}
    q_B(x) := \exists y,z.\, &\text{diagnosedWith}(x,y) \land \text{Cancer}(y) \land \text{findingSite}(y,z) \land{} \\
    & \text{BreastStructure}(z) \land \neg \text{SkinStructure}(z)
  \end{align*}
We know that~\BreastCancerPatient is diagnosed with BreastCancer as well as Cancer.

Since the former is more specific, we assume that the latter refers to the same BreastCancer.

However, since we have no information about an involvement of the skin, \BreastCancerPatient should be returned as an answer to~$q_B$.

We know that~\SkinAndBreastPatient suffers from cancer in the skin and the breast, but not if the skin of the breast is also affected.
Since neither location is implied by the other, we assume that they refer to distinct areas. \SkinAndBreastPatient should thus be an answer to~$q_B$.

In the case of~\SkinOfBreastPatient, it is explicitly stated that it is the same cancer that is occurring (not necessarily exclusively) at the skin of the breast.
In this case, the ABox assertions override the distinctness assumption we made for~\SkinAndBreastPatient.
Thus, \SkinOfBreastPatient should not be an answer to~$q_B$.
\hfill$\blacksquare$
\end{example}

\begin{figure}[tb]
  \resizebox{\linewidth}{!}{\input{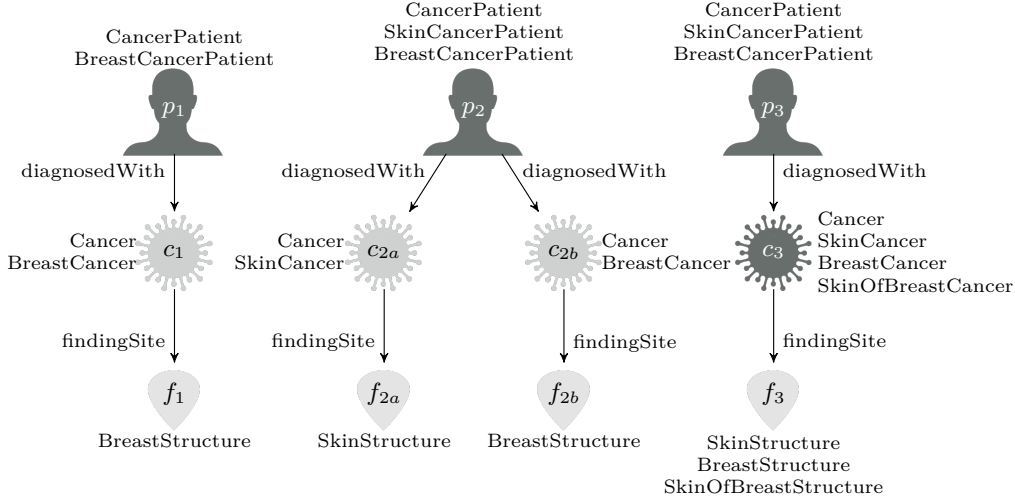}}
  \caption{The minimal canonical model~$\Imc_{\Kmc_C}$.
  The named individuals $p_1,p_2, p_3$ and $c_3$ are depicted by dark grey figures, the remaining anonymous objects by light grey. }
  \label{fig:canonical-model}
\end{figure}

In practice, more complicated cases than in our example can occur:
The nesting of anonymous objects will be deeper and more branched when using large biomedical ontologies.
For example, in SNOMED\,CT it is possible to describe many details of a cancer, such as the kind of cancer, whether it is a primary or secondary cancer, and in which part of the body it is found. This means that even a single assertion can lead to the introduction of multiple levels of anonymous objects in the canonical model.
In some ontologies there are even cyclic concept inclusions, which lead to infinitely many anonymous individuals, \eg in the GALEN ontology\footnote{http://www.opengalen.org/}.
We focus on \Cref{example:cancer-patient} in this paper, to illustrate the relevant issues in a clear and easy to follow manner.

We now evaluate existing semantics on this example.

\paragraph{Standard Certain Answer Semantics} as defined in \Cref{sec:preliminaries} is clearly not suited here, because one can easily construct a model of~$\Kmc_C$ in which~$ c_\BreastCancerPatientIdx$ is also a skin cancer, and hence~\BreastCancerPatient is not an element of $\cert(q_B,\Kmc_C)$.
Moreover, under certain answer semantics answering CQs with guarded negation is already \textsc{coNP}-complete \cite{GIKK-JWS15}, and hence not (combined) rewritable.

\paragraph{Epistemic Logic} allows us to selectively apply closed-world reasoning using the modal knowledge operator~$\mathbf{K}$.
For a formula $\mathbf{K}\varphi$ to be true, it has to hold in all \enquote{connected worlds}, which is often considered to mean all possible models of the KB, adopting an $S5$-like view \cite{calvanese2006epistemic}.

For~$q_B$, we could read $\neg \text{SkinStructure}(z)$ as \enquote{not known to be a skin structure}, \ie $\neg\mathbf{K}\text{SkinStructure}(z)$.
Consider the model $\Imc_{\Kmc_C}$ in \Cref{fig:canonical-model} and the assignment $\pi = \{ x \mapsto \SkinOfBreastPatient, y \mapsto c_\SkinOfBreastPatientIdx, z \mapsto f_\SkinOfBreastPatientIdx \}$, for which we want to check whether it is a match for~$q_B$. Under epistemic semantics, $\neg\mathbf{K}\text{SkinStructure}(z)$ is considered true if \Kmc has a (different) model in which~$f_\SkinOfBreastPatientIdx$ does not belong to~$\text{SkinStructure}$.
However, $f_\SkinOfBreastPatientIdx$ is an anonymous object, and hence its name is not fixed.
For example, we can easily obtain another model by renaming~$f_\SkinOfBreastPatientIdx$ to~$f_\BreastCancerPatientIdx$ and vice versa.
Then~$f_\SkinOfBreastPatientIdx$ would not be a skin structure, which means that $\neg\mathbf{K}\text{SkinStructure}(z)$ is true in the original model~$\Imc_{\Kmc_C}$, which is not what we expected.
This is a known problem with epistemic first-order logics \cite{wolter2000first}.

\paragraph{Skolemization} can enforce a stricter comparison of anonymous objects between models. The inclusion
$\text{SkinOfBreastCancer} \sqsubseteq \exists\text{findingSite}.\text{SkinOfBreast}$
could be rewritten as the first-order sentence
\[ \forall x.\,\Big(\text{SkinOfBreastCancer}(x) \to \text{findingSite}\big(x,f(x)\big) \land \text{SkinOfBreast}\big(f(x)\big)\Big), \]
where $f$ is a fresh function symbol.
This means that~$c_\SkinOfBreastPatientIdx$ would be connected to a finding site that has the unique name $f(c_\SkinOfBreastPatientIdx)$ in every model.
Queries would be evaluated over Herbrand models only.
Hence, for evaluating $\neg\mathbf{K}\text{SkinStructure}(z)$ when $z$ is mapped to $f(c_\SkinOfBreastPatientIdx)$, we would only be allowed to compare the behavior of $f(c_\SkinOfBreastPatientIdx)$ in other Herbrand models.
The general behavior of this anonymous individual is fixed, however, since in all Herbrand models it is \emph{the} finding site of~$c_\SkinOfBreastPatientIdx$.
While this improves the comparison by introducing pseudo-names for all anonymous individuals, it limits us in different ways:
Since \SkinOfBreastPatient is inferred to be a BreastCancerPatient, the Skolemized version of $\text{BreastCancerPatient} \sqsubseteq \exists\text{diagnosedWith}.\text{BreastCancer}$ introduces a new successor~$g(\SkinOfBreastPatient)$ of~\SkinOfBreastPatient satisfying BreastCancer, which, together with the definition of BreastCancer, means that \SkinOfBreastPatient is an answer to~$q_B$ since there is an additional breast cancer diagnosis that does not involve the skin.

\paragraph{Datalog-based Ontology Languages} with negation \cite{HKLG-PODS13,ArGP-PODS14} are closely related to Skolemized ontologies, since their semantics is often based on the so-called \emph{Skolem chase} \cite{Marn-PODS09}. This is closer to the semantics we propose in Section~\ref{sec:semantics}, in that a single canonical model is used for all inferences. However, it suffers from the same drawback of Skolemization described above, due to superfluous successors. To avoid this, our semantics uses a special minimal canonical model (see Definition~\ref{def:canonical-model}), which is similar to the \emph{restricted chase} \cite{FKMP-TCS05} or the \emph{core chase} \cite{DeNR-PODS08}, but always produces a unique model without having to merge domain elements. Our minimal canonical model can also be seen as an instance of the approach in \cite{K2020}, which identifies conditions under which a chase variant produces a unique core model (for more details on this connection see Section~3.5.3 in \cite{Fork-20}). However, \cite{K2020} focuses on cases where this model is finite, which is not the case for general \ELHb ontologies. To the best of our knowledge, there exist no complexity results for Datalog-based languages with negation over these other chase variants.

\paragraph{Closed Predicates} are a way to declare, for example, the concept name SkinStructure as \enquote{closed}, which means that all skin structures must be declared explicitly, and no other SkinStructure object can exist \cite{LuSW-IJCAI13,AhOS-IJCAI16}.
This provides a way to give answers to negated atoms as in~$q_B$.
However, as explained in the introduction, this mechanism is not suitable for anonymous objects since it means that only named individuals can satisfy SkinStructure.
When applied to~$\Kmc_C$, the result is even worse: Since there is no (named) SkinStructure object, no skin structures can exist at all and~$\Kmc_C$ becomes inconsistent.
Closed predicates are appropriate in cases where the KB contains a full list of all instances of a certain concept name, and no other objects should satisfy it;
but they are not suitable to infer negative information about anonymous objects.
Moreover, CQ answering with closed predicates in \ELHb is already \textsc{coNP}-hard \cite{LuSW-IJCAI13}.

\paragraph{Summary} All of this should not be read as saying that these semantics are bad, just that they have not been developed with our use case involving anonymous objects in mind. Only Skolemization deals with anoynmous objects by giving them unique names, but this forces all of them to be distinct. Our new semantics remedies this, but at the same time tries to stay as close as possible to the behavior of the existing semantics when restricted to the known objects in the ABox.

\subsection{Semantics for NCQs}
\label{sec:semantics}

We propose to answer NCQs over a special canonical model of the knowledge base.
On the one hand, this eliminates the problem of tracking anonymous objects across different models, and on the other hand enables us to encode our assumptions directly into the construction of the model.
In particular, we should only introduce the minimum necessary number of anonymous objects since, unlike in standard CQ answering, the precise shape and number of anonymous objects has an impact on the semantics of negated atoms.

Given~$\Kmc_C$, in contrast to the Skolemized semantics, we will not create both a generic \enquote{Cancer} and another \enquote{BreastCancer} successor for~\BreastCancerPatient, because the BreastCancer is also a Cancer, and hence the first object is redundant.
Therefore, in the minimal canonical model of~$\Kmc_C$ depicted in \Cref{fig:canonical-model}, for patient \BreastCancerPatient only one successor is introduced to satisfy the definitions of both BreastCancerPatient and CancerPatient at the same time.
In contrast, $\SkinAndBreastPatient$ has two successors, because BreastCancer and SkinCancer do not imply each other.
Finally, for~\SkinOfBreastPatient the ABox contains a single successor that is a SkinOfBreastCancer, which implies a single findingSite-successor that satisfies both SkinStructure and BreastStructure.

To detect whether an object required by an existential restriction $\exists r.A$ is redundant, we use the following notion of minimality.
\begin{definition}[Structural Subsumption]
  Let $\exists r.A$, $\exists t.B$ be concepts with $A,B \in \NC$ and $r, t \in \NR$. We say that $\exists r.A$ is \emph{structurally subsumed} by~$\exists t.B$ (written $\exists r.A \sqsubseteq_\Tmc^s \exists t.B$) if $r \sqsubseteq_\Tmc t$ and $A\sqsubseteq_\Tmc B$.

  Given a set~$V$ of existential restrictions, we say that $\exists r.A \in V$ is \emph{minimal w.r.t.~$\sqsubseteq_\Tmc^s$ (in~$V$)} if there is no $\exists t.B \in V$ such that $\exists t.B \sqsubseteq_\Tmc^s \exists r.A$.

  A CQ $q_1(\mathbf{x})$ is \emph{structurally subsumed} by a CQ $q_2(\mathbf{x})$ with the same answer variables (written $q_1 \sqsubseteq_\Tmc^s q_2$) if, for all $x,y \in \mathbf{x}$, it holds that
  $$
    \bigsqcap\limits_{\alpha(x) \in q_1} \alpha \sqsubseteq_\Tmc \bigsqcap\limits_{\alpha(x) \in q_2} \alpha, \text{ and }
    \bigsqcap\limits_{\alpha(x,y) \in q_1} \alpha \sqsubseteq_\Tmc \bigsqcap\limits_{\alpha(x,y) \in q_2} \alpha,
  $$
  where role conjunction is interpreted in the standard way~\cite{BCM+-07}.
\end{definition}

In contrast to standard subsumption, $\exists r.A$ is not structurally subsumed by~$\exists t.B$ \wrt the TBox $\Tmc=\{\exists r.A\sqsubseteq\exists t.B\}$, as neither $r\sqsubseteq_\Tmc t$ nor $A\sqsubseteq_\Tmc B$ hold. Similarly, structural subsumption for CQs considers all (pairs of) variables separately.

We use this notion to define the minimal canonical model.

\begin{definition}[Minimal Canonical Model]
  \label{def:canonical-model}
  Let $\Kmc = (\Tmc, \Amc)$ be an \ELHb KB.
  We construct the \emph{minimal canonical model} $\Imc_\Kmc$ of $\Kmc$ as follows:
  \begin{enumerate}
    \item Set $\Delta^{\Imc_{\Kmc}} := \NI$ and $a^{\Imc_{\Kmc}} := a$ for all $a \in \NI$.
    \item Define $A^{\Imc_{\Kmc}} := \{ a \mid \Kmc \models A(a) \}$ for all $A \in \NC$ and
    $r^{\Imc_{\Kmc}} := \{ (a,b) \mid \Kmc \models r(a,b) \}$ for all $r \in \NR$.
    \item Repeat:
    \begin{enumerate}
      \item Select an element $d \in \Delta^{\Imc_\Kmc}$ that has not been selected before and let\\
      $V := \{ \exists r.B \mid$ $d \in A^{\Imc_{\Kmc}}$ and $d \not\in (\exists r. B)^{\Imc_\Kmc}$ with $A \sqsubseteq_\Tmc \exists r.B$,
      $A,B\in \NC\}$.
      \item For each $\exists r.B \in V$ that is minimal \wrt $\sqsubseteq_\Tmc^s$, add a fresh element~$e$ to~$\Delta^{\Imc_\Kmc}$, for each $B \sqsubseteq_\Tmc A$ add $e$ to $A^{\Imc_\Kmc}$, and for each $r \sqsubseteq_\Tmc s$ add $(d,e)$ to $s^{\Imc_\Kmc}$.
    \end{enumerate}
  \end{enumerate}
  By $\Imc_\Amc$ we denote the restriction of~$\Imc_\Kmc$ to named individuals, \ie the result of applying only Steps~1 and~2, but not Step~3.
\end{definition}

If Step~3 is applied fairly, \ie such that each new domain element that is created in~(b) is eventually also selected in~(a), then $\Imc_\Kmc$ is indeed a model of~$\Kmc$ (if \Kmc is consistent at all).
In particular, all required existential restrictions are satisfied at each domain element, because the existential restrictions that are minimal w.r.t.~$\sqsubseteq_\Tmc^s$ entail all others.

Moreover, $\Imc_\Kmc$ satisfies the properties expected of a canonical model \cite{lutz2009conjunctive,EOS+-AAAI12}: It can be homomorphically embedded into any other model of~\Kmc, and therefore $\cert(q,\Kmc) = \ans(q,\Imc_\Kmc)$ holds for all CQs~$q$.

It turns out that the minimal canonical model is also a \emph{core}: Each endomorphism of $\Imc_\Kmc$, \ie a homomorphism from $\Imc_\Kmc$ into inself, is injective, surjective and preserves negation (see also \cite{Fork-20}).

\begin{lemma}\label{lem:core}
  Every endomorphism of~$\Imc_\Kmc$ is a strong isomorphism.
\end{lemma}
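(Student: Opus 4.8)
The plan is to exploit the structure of $\Imc_\Kmc$ as a tree-shaped extension of the finite core $\Imc_\Amc$ on the named individuals, together with the strong minimality built into Step~3(b). Let $h$ be an endomorphism of $\Imc_\Kmc$. First I would observe that $h$ must be the identity on named individuals: by the standard name assumption every $a\in\Ind(\Kmc)$ satisfies $a^{\Imc_\Kmc}=a$, and a homomorphism preserves this, so $h(a)=a$ for all named $a$. Next I would set up an induction on the \emph{level} of an anonymous element, where the level of $d$ is the length of the unique role path in $\Imc_\Kmc$ from a named individual to $d$ witnessed by Step~3 (well-defined because each anonymous element is created once, attached to a single parent). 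I claim $h$ maps each element to an element of the same level and, more precisely, acts as the identity.

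For the inductive step, suppose $d$ is anonymous with parent $d'$ (so $(d',d)\in r^{\Imc_\Kmc}$ for the generating role $r$, and $d$ was created for a minimal $\exists r.B\in V$ at $d'$), and assume $h(d')=d'$. Since $h$ is a homomorphism, $(d',h(d))\in r^{\Imc_\Kmc}$, so $h(d)$ is an $r$-successor of $d'$ in $\Imc_\Kmc$. The $r$-successors of $d'$ are: possibly some named individuals (if $d'$ is named and the edge came from the ABox or from role inclusions in Step~2), and the anonymous children of $d'$ created in Step~3(b). The key point, which I expect to be the heart of the argument, is that $h(d)$ cannot be a \emph{proper} such successor: the set of concept names holding at $d$ is exactly $\{A\mid B\sqsubseteq_\Tmc A\}$, and $d$ was chosen because $\exists r.B$ is \emph{minimal} w.r.t. $\sqsubseteq^s_\Tmc$ among the restrictions needed at $d'$. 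If $h(d)=e$ for another child $e$ of $d'$ generated by $\exists t.C$, then $h$ must send every concept name of $d$ to a concept name of $e$, forcing $\{A\mid B\sqsubseteq_\Tmc A\}\subseteq\{A\mid C\sqsubseteq_\Tmc A\}$, i.e. $C\sqsubseteq_\Tmc B$; combined with a similar analysis of the role label (using that $d'$ has an $r$-edge to $e$, hence via the generating role $t$ and role inclusions one gets $t\sqsubseteq_\Tmc r$ up to the relevant super-role), this yields $\exists t.C\sqsubseteq^s_\Tmc\exists r.B$, contradicting minimality of $\exists r.B$ unless $\exists t.C$ and $\exists r.B$ are $\sqsubseteq^s_\Tmc$-equivalent — but in Step~3(b) only one child is created per $\sqsubseteq^s_\Tmc$-minimal restriction, so $e=d$. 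The case where $h(d)$ is a named individual is ruled out because a named individual has level $0$ while any homomorphic image must, by the same successor-counting, respect that $d$'s subtree contains fresh anonymous elements that cannot be collapsed into the finite named part without violating minimality at some deeper level (formally: push the induction one step further, or note that $\Imc_\Amc$ is already a core on named individuals and no anonymous element is forced to satisfy a "named" pattern). Hence $h(d)=d$.

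This shows $h=\mathrm{id}_{\Delta^{\Imc_\Kmc}}$, which is trivially a bijection and preserves and reflects all concept and role memberships, i.e. a strong isomorphism; in particular it preserves negated atoms. The main obstacle is making the minimality argument in the inductive step fully rigorous: one must carefully track how role inclusions in Step~2 and Step~3(b) interact with $\sqsubseteq^s_\Tmc$, and handle the possibility that $d'$ has several $r$-successors with incomparable labels, so that merely being "an $r$-successor" does not immediately pin down $h(d)$ — the argument must use that $h$ preserves \emph{all} concept names at $d$ simultaneously, not just one, together with the fact that Step~3(b) introduces exactly one witness per $\sqsubseteq^s_\Tmc$-minimal restriction. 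I would also state explicitly as a small lemma that $\Imc_\Amc$ is a core (every endomorphism fixing named individuals pointwise is already forced, which here is immediate from the standard name assumption), to serve as the induction base.
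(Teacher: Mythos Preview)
Your approach is essentially the paper's: show that the only endomorphism of~$\Imc_\Kmc$ is the identity, by induction along the construction (your ``level'' coincides with the order in which elements are created), and in the inductive step use the $\sqsubseteq_\Tmc^s$-minimality of the generating restriction to exclude that a fresh child $e$ is mapped to a sibling generated by a different minimal $\exists t.C$. The paper carries out exactly this argument, in slightly terser form.

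One place where your write-up goes astray is the case where the parent $d'$ is named and you must exclude that $h(d)$ is a named individual. Your suggested fixes (``push the induction one step further'', or appeal to $\Imc_\Amc$ being a core) do not work: the induction proceeds from lower to higher level, so deeper subtrees are not yet available, and $\Imc_\Amc$ being a core only tells you that named individuals are fixed, not that anonymous ones cannot collapse onto them. The correct argument is much simpler and is implicit in the definition of~$V$ in Step~3(a): since $\exists r.B\in V$, we have $d'\notin(\exists r.B)^{\Imc_\Kmc}$ \emph{after} Step~2 has already populated all named successors of~$d'$. Hence no named individual is an $r$-successor of~$d'$ that satisfies~$B$; but $h(d)$ must be such a successor (as $h$ preserves both the $r$-edge and membership in~$B$), so $h(d)$ is necessarily one of the anonymous children created in Step~3(b), and your minimality argument then applies.
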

\begin{proof*}
  To show this, we prove the stronger statement that the only endomorphism on~$\Imc_\Kmc$ is the identity.
  Let $\Imc_0$, $\Imc_1$, \dots be the interpretations obtained in the construction of~$\Imc_\Kmc$ before each application of Step~3.
  We show by induction on~$i$ that there are $h_0$, $h_1$, \dots such that $h_i$ is the only possible homomorphism from~$\Imc_i$ to~$\Imc_\Kmc$ and $h_i$ and~$h_{i+1}$ agree on $\Delta^{\Imc_i}$, that is $h_{i+1}(d) = h_i(d)$ for all $d \in \Delta^{\Imc_i}$. The endomorphism is then obtained in the limit as $h = \bigcup_{i\ge 0}h_i$.
  By definition, the homomorphism~$h_0$ has to map all constants onto itself, \ie $h_0(a) = a$ for all $a \in \NI$, and is therefore the identity function.
  For the induction step, assume that $h_i$ has already been defined.
  To define $h_{i+1}$, assume that $d \in \Delta^{\Imc_i}$ was picked in Step 3(a) and $V$ is the set as defined in Definition~\ref{def:canonical-model}.
  For each $\exists r.B \in V$ that is minimal \wrt structural subsumption, in Step 3(b) exactly one successor~$e$ is introduced.
  Suppose there would be another successor~$e'$ of~$d$, introduced through some minimal $\exists r'.B'$ and~$e$ could be mapped to~$e'$.
  This would imply that $\exists r'.B' \sqsubseteq_\Tmc^s \exists r.B$, which is a contradiction since we assumed $\exists r.B$ to be minimal.
  Hence, such an $e'$ cannot exist and therefore the only possibility is to map $e$ onto itself.
  Therefore, the only possibility is to define $h_{i+1} := h_i \cup \{e \mapsto e\}$, which is the identity function.
  \qed
\end{proof*}

We can now adopt a result from graph theory and cores, namely Theorem 11 in \cite{bauslaugh1995core}, which shows that, if two structures are homomorphically equivalent, then their cores are isomorphic.
Since all canonical models are homomorphically equivalent by definition, this implies that every consistent \ELHb-KB has a unique core model (up to isomorphism), which is why a minimal canonical model is also \emph{the} minimal canonical model of $\Kmc$.

We now define the semantics of NCQs as described before, \ie by evaluating them as first-order formulas over the minimal canonical model~$\Imc_\Kmc$, which ensures that our semantics is compatible with the usual certain-answer semantics for CQs.

\begin{definition}[Minimal-World Semantics]
  Let \Kmc be a consistent \ELHb KB.
  The \emph{(minimal-world) answers} to an NCQ~$q$ over \Kmc are $\mwa(q,\Kmc):=\ans(q,\Imc_\Kmc)$.
\end{definition}

For \Cref{example:cancer-patient}, we get $\mwa(q_B, \Kmc_C) = \{\BreastCancerPatient, \SkinAndBreastPatient\}$ (see \Cref{fig:canonical-model}), which is exactly as intended.
Unfortunately, in general the minimal canonical model is infinite, and we cannot evaluate the answers directly. Hence, we employ a rewriting approach to reduce NCQ answering over the minimal canonical model to (first-order) query answering over~$\Imc_\Amc$ only.

\section{A Combined Rewriting for NCQs}\label{sec:atemp-rewriting}

We show that NCQ answering is combined first-order rewritable.

As target representation, we obtain first-order queries of a special form.

\begin{definition}[Filtered query]
  \label{def:filterd-fo-query}
  Let $\Kmc = (\Tmc, \Amc)$ be an \ELHb KB.
  A \emph{filter} on a variable~$z$
  is a first-order expression $\psi(z)$ of the form
  \begin{equation}
    \big(\exists z'. \psi^+(z,z')\big) \to \big(\exists z'. \psi^+(z,z') \land \psi^-(z,z') \land \Psi\big)
  \end{equation}
  where $\psi^+(z,z')$ is a conjunction of atoms of the form $A(z')$ or $r(z,z')$, that contains at least one role atom,
  $\psi^-(z,z')$ is a conjunction of negated atoms $\neg A(z')$ or $\neg r(z,z')$, and $\Psi$ is a (possibly empty) set of filters on $z'$.


  A \emph{filtered query} $\phi$ is of the form $\exists\mathbf{y}. \big(\varphi(\mathbf{x}, \mathbf{y}) \land \Psi\big)$
  where $\exists\mathbf{y}.\varphi(\mathbf{x}, \mathbf{y})$ is an NCQ and $\Psi$ is a set of filters on leaf variables in $\varphi$.
  It is \emph{rooted} if $\exists\mathbf{y}.\varphi(\mathbf{x}, \mathbf{y})$ is rooted.
\end{definition}
Note that every NCQ is a filtered query where the set of filters $\Psi$ is empty.

We will use filters to check for the existence of \enquote{typical} successors, \ie role successors that behave like the ones that are introduced by the canonical model construction to satisfy an existential restriction.
In particular, a typical successor does not satisfy any superfluous concept or role atoms.
For example, in \Cref{fig:canonical-model} the element~$c_\BreastCancerPatientIdx$ introduced to satisfy $\exists\text{diagnosedWith}.\text{BreastCancer}$ for~\BreastCancerPatient is a typical successor, because it satisfies only BreastCancer and Cancer and not, \eg SkinCancer.
In contrast, the diagnosedWith-successor~$c_\SkinOfBreastPatientIdx$ of~$\SkinOfBreastPatient$ is atypical, since the ontology does not contain an existential restriction $\exists\text{diagnosedWith}.\text{SkinOfBreastCancer}$ that could have introduced such a successor in the canonical model.

The idea of the rewriting procedure is to not only rewrite the positive part of the query, as in \cite{EOS+-AAAI12,bienvenu2015ontology}, but to also ensure that no critical information is lost. This is accomplished by rewriting the negative parts and by saving the structure of the eliminated part of the query in the filter. A filter on~$z$ ensures that the rewritten query can only be satisfied by mapping~$z$ to an anonymous individual in the canonical model, or to a named individual that behaves in a similar way.

\begin{definition}[Rewriting]
  \label{def:ecq-rewriting}
  Let $\Kmc = (\Tmc, \Amc)$ be a KB and
  $\phi = \exists\mathbf{y}. \varphi(\mathbf{x}, \mathbf{y}) \land \Psi$ be a filtered query.
  We write $\phi \to_\Tmc \phi'$ if $\phi'$ can be obtained from $\phi$ by applying the following steps:

  \begin{enumerate}[label=(S\arabic*)]
    \item Select a quantified leaf variable~$\hat{x}$ in~$\varphi$.
    Let $\hat{y}$ be a fresh variable and select
    \begin{align*}
      \PRED &:= \{ y \mid r(y,\hat{x}) \in \varphi \} \cup \{ y \mid \neg r(y,\hat{x}) \in \varphi \} & \text{(predecessors of~$\hat{x}$),} \\
      \POS &:= \{ A(\hat{x}) \in \varphi \} \cup \{ r(\hat{y}, \hat{x}) \mid r(y, \hat{x}) \in \varphi \} & \text{(positive atoms for~$\hat{x}$),} \\
      \NEG &:= \{ \neg A(\hat{x}) \in \varphi \} \cup \{ \neg r(\hat{y}, \hat{x}) \mid \neg r(y,\hat{x}) \in \varphi \} & \text{(negative atoms for~$\hat{x}$).}
    \end{align*}
    \item Select some $M \sqsubseteq_\Tmc \exists s.N$ with $M,N \in \NC$ that satisfies all of the following:
    \begin{enumerate}[(a)]
      \item $s(\hat{y}, \hat{x}) \land N(\hat{x}) \sqsubseteq_\Tmc^s \POS$, and
      \item $s(\hat{y}, \hat{x}) \land N(\hat{x}) \not\sqsubseteq_\Tmc^s \alpha$ for all $\neg \alpha \in \NEG$.
    \end{enumerate}
    \item Let $\mathcal{M}'$ be the set of all $M' \in \NC$ such that $ M' \sqsubseteq_\Tmc \exists s'.N'$ with $N'\in \NC$,
    \begin{enumerate}[(a)]
      \item $\exists s'.N' \sqsubseteq_\Tmc^s \exists s.N$ (where $\exists s.N$ was chosen in (S2)), and
      \item $s'(\hat{y}, \hat{x}) \land N'(\hat{x}) \sqsubseteq_\Tmc^s \alpha$ for some $\neg\alpha \in \NEG$.
    \end{enumerate}
    \item Drop from $\varphi$ every atom that contains $\hat{x}$.
    \item Replace all variables $y \in \PRED$ in $\varphi$ with $\hat{y}$.
    \item Add the atoms $M(\hat{y})$ and $\{\neg M'(\hat{y}) \mid M' \in \mathcal{M}' \}$ to $\varphi$.
    \item
    Set the new filters to $\Psi' := \Psi \cup \{ \psi^*(\hat{y}) \} \setminus \Psi_{\hat{x}} $, where
    $\Psi_{\hat{x}} := \{ \psi(\hat{x}) \in \Psi \}$ and
    \begin{align*}
      \psi^*&(\hat{y}) := \big(\exists \hat{x}.\,s(\hat{y}, \hat{x}) \land N(\hat{x})\big)\to \big(\exists \hat{x}.\,s(\hat{y}, \hat{x}) \land N(\hat{x}) \land \NEG \land \Psi_{\hat{x}}\big).
    \end{align*}

  \end{enumerate}
  We write $\phi \to_\Tmc^* \phi'$ if there exists a finite sequence $\phi \to_\Tmc \dots \to_\Tmc \phi'$.
  Furthermore, let $\rewo_\Tmc(\phi) := \{ \phi' \mid \phi \to_\Tmc^* \phi' \}$ denote the set of all rewritings of $\phi$.
\end{definition}

Note that $\rewo_\Tmc(\phi)$ may be infinite.
However, for rooted NCQs it is finite and we show later that even for non-rooted NCQs it suffices to consider a finite subset of $\rewo_\Tmc(\phi)$ (see Lemma~\ref{lemma:non-rooted-bound}).
To see the former claim, observe that there is only a finite number of possible subsumptions $M \sqsubseteq_\Tmc \exists s.N$ that can be used for rewriting steps.
Additionally, in every step one variable ($\hat{x}$) is eliminated from the NCQ part of the filtered query.
If the query is rooted, there always exists at least one predecessor that is renamed to $\hat{y}$, hence the introduction of $\hat{y}$ never increases the number of variables.
Finally, it is easy to see that rewriting a rooted query always yields a rooted query.

The rewriting of~$\NEG$ to the new negated atoms (via~$\Mmc'$ in (S6)) ensures that we do not lose important exclusion criteria, which may result in too many answers.
Similarly, the filters exclude atypical successors in the ABox that may result in spurious answers.
Both of these constructions are necessary.

\begin{example}
  Consider the query~$q_B$ from \Cref{example:cancer-patient}. Using \Cref{def:ecq-rewriting}, we obtain the first-order queries $\phi_B = q_B$, $\phi_B'$, and $\phi_B''$, where
  \begin{align*}
    \phi_B' &= \exists y.\, \text{diagnosedWith}(x,y) \land \text{BreastCancer}(y) \land \neg \text{SkinOfBreastCancer}(y) \land{} \\
    & \Big( \big(\exists z.\, \text{findingSite}(y,z) \land \text{BreastStructure}(z)\big) \to \\
    &\;\;\big(\exists z.\, \text{findingSite}(y,z) \land \text{BreastStructure}(z) \land \neg \text{SkinStructure}(z)\big) \Big) \\
\intertext{
    results from choosing~$z$ in~(S1), $\text{BreastCancer}\sqsubseteq_{\Kmc_C}\exists\text{findingSite}.\text{BreastStructure}$ in~(S2), and computing $\Mmc'=\{\text{SkinOfBreastCancer}\}$ in~(S3), and
}
    \phi_B'' &= \text{BreastCancerPatient}(x) \land{} \\
    & \Big( (\exists y.\, \text{diagnosedWith}(x,y) \land \text{BreastCancer}(y)) \to \\
    &\;\;(\exists y.\, \text{diagnosedWith}(x,y) \land \text{BreastCancer}(y) \land \neg \text{SkinOfBreastCancer}(y)) \land{} \\
    &\;\;\;\; \big( (\exists z.\, \text{findingSite}(y,z) \land \text{BreastStructure}(z)) \to \\
    &\;\;\;\;\;\;(\exists z.\, \text{findingSite}(y,z) \land \text{BreastStructure}(z) \land \neg \text{SkinStructure}(z)) \big)
    \Big)
  \end{align*}
  is obtained due to $\text{BreastCancerPatient}\sqsubseteq_{\Kmc_C}\exists\text{diagnosedWith}.\text{BreastCancer}$.
  We omitted the redundant atoms $\text{Cancer}(y)$ for clarity.

  The finite interpretation~$\Imc_{\Amc_C}$ can be seen in \Cref{fig:canonical-model} by ignoring all light grey figures.
  When computing the answers over $\Imc_{\Amc_C}$, we obtain
  \begin{align*}
    \ans(\phi_B, \Imc_{\Amc_C}) &= \emptyset,\
    \ans(\phi'_B, \Imc_{\Amc_C}) = \emptyset, \text{ and }
    \ans(\phi''_B, \Imc_{\Amc_C}) = \{\BreastCancerPatient,\SkinAndBreastPatient\}.
  \end{align*}

  For~$\phi'_B$, the conjunct $\lnot\text{SkinOfBreastCancer}(y)$ is necessary to exclude \SkinOfBreastPatient as an answer.
%
  In~$\phi''_B$, \SkinOfBreastPatient is excluded due to the filter that detects~$c_\SkinOfBreastPatientIdx$ as an atypical successor, because it satisfies not only BreastCancer, but also SkinOfBreastCancer.
  Hence, both~(S6) and~(S7) are necessary steps in our rewriting.
  \hfill$\blacksquare$
\end{example}

\subsection{Correctness}

In \Cref{def:ecq-rewriting}, the new filter $\psi^*(\hat{y})$ may end up inside another filter expression after applying subsequent rewriting steps, \ie by rewriting w.r.t.~$\hat{y}$.

In this case, however, the original structure of the rewriting is preserved, including all internal filters as well as the atoms $M(\hat{y})$, which are included implicitly by $\exists s.N \sqsubseteq M$, and  $\{\neg M'(\hat{y}) \mid M' \in \mathcal{M}' \}$, which are included in $\NEG$.
We exploit this behavior to show that, whenever a rewritten query is satisfied in the finite interpretation~$\Imc_\Amc$, then it is also satisfied in~$\Imc_\Kmc$.
This is the most interesting part of the correctness proof, because it differs from the known constructions for ordinary CQs, for which this step is trivial.

\begin{lemma}
  \label{lemma:filtered-fo-query-data-2}
  Let $\Kmc = (\Tmc, \Amc)$ be a consistent \ELHb KB and $\phi$ be an NCQ.
  Then, for all $\phi' \in \rewo_\Tmc(\phi)$,
  \[
  \ans(\phi', \Imc_\Amc) \subseteq \mwa(\phi', \Kmc).
  \]
\end{lemma}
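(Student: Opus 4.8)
The plan is to show that every match of $\phi'$ in $\Imc_\Amc$ is already a match of $\phi'$ in $\Imc_\Kmc$. First note that, by Steps~1--2 of \Cref{def:canonical-model}, $\Imc_\Amc$ is exactly the subinterpretation of $\Imc_\Kmc$ induced on $\NI$, and Step~3 only adds fresh anonymous elements without changing any membership of a named individual; hence for every quantifier-free first-order formula $\chi$ and every assignment $\pi$ into $\NI$ we have $\Imc_\Amc,\pi\models\chi$ iff $\Imc_\Kmc,\pi\models\chi$. So if $\pi$ is a match of $\phi'=\exists\mathbf{y}.\,\varphi(\mathbf{x},\mathbf{y})\land\Psi$ in $\Imc_\Amc$, then $\pi$ maps all variables to $\NI$ and already satisfies the NCQ part $\varphi$ in $\Imc_\Kmc$, and it only remains to show $\Imc_\Kmc,\pi\models\Psi$. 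This is the part that has no counterpart in the rewriting of plain CQs: in $\Imc_\Kmc$ a filter antecedent $\exists z'.\,s(z,z')\land N(z')$ may become true via an anonymous successor even though it was false in $\Imc_\Amc$, and we must then produce a witness for the filter's consequent.

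Second, I would isolate a structural lemma about rewritings, proved by induction on the length of the rewriting sequence $\phi\to_\Tmc^*\phi'$: every filter $\psi(z)$ occurring in $\phi'$ (at top level or nested) has the form $\bigl(\exists z'.\,s(z,z')\land N(z')\bigr)\to\bigl(\exists z'.\,s(z,z')\land N(z')\land\nu(z,z')\land\Psi'\bigr)$, where $\nu$ is a conjunction of negated atoms over $z,z'$ and $\Psi'$ a set of filters on $z'$, and there is an associated set $\Mmc'_\psi\subseteq\NC$ --- namely the set $\Mmc'$ computed in the step (S3) that created $\psi$ --- with the properties that (i) whenever $M'\sqsubseteq_\Tmc\exists s'.N'$ with $\exists s'.N'\sqsubseteq_\Tmc^s\exists s.N$ and $s'(z,z')\land N'(z')\sqsubseteq_\Tmc^s\alpha$ for some $\neg\alpha\in\nu$, then $M'\in\Mmc'_\psi$; and (ii) the atoms $\{\neg M'(z)\mid M'\in\Mmc'_\psi\}$ occur \enquote{next to} $\psi(z)$, i.e.\ in $\varphi$ if $\psi(z)\in\Psi$, and in the negative part $\nu_0$ of the immediately enclosing filter if $\psi(z)$ is nested. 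Property~(i) is immediate from the definition of $\Mmc'$ in (S3). For~(ii), this is exactly the remark preceding the lemma: when the variable $\hat x$ carrying a filter is eliminated in a later step, (S1) collects all of its negated atoms --- including the $\neg M'(\hat x)$ added by an earlier (S6) --- into $\NEG$, and (S7) places $\NEG$ and the filters $\Psi_{\hat x}$ side by side inside the new filter $\psi^*$, so the $\Mmc'$-context migrates into the negative part of the new enclosing filter.

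Third, using the shape of $\Imc_\Kmc$ from \Cref{def:canonical-model} --- every anonymous element $c$ is introduced in Step~3(b) as a successor of a unique element $d$ for some minimal $\exists t.P$, with $c\in B^{\Imc_\Kmc}$ iff $P\sqsubseteq_\Tmc B$ and $(d,c)\in s'^{\Imc_\Kmc}$ iff $t\sqsubseteq_\Tmc s'$, and anonymous elements have only anonymous successors --- I would prove by induction on the nesting depth of filters: for every filter $\psi(z)$ in $\phi'$ and every $d\in\Delta^{\Imc_\Kmc}$ with $d\notin M'^{\Imc_\Kmc}$ for all $M'\in\Mmc'_\psi$, and, if $d\in\NI$, also $\Imc_\Amc,\{z\mapsto d\}\models\psi(z)$, we have $\Imc_\Kmc,\{z\mapsto d\}\models\psi(z)$. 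The top-level filters follow by taking $d=\pi(z)$: the side condition holds since $\pi$ is a match, and $d\notin M'^{\Imc_\Kmc}$ because $\neg M'(z)\in\varphi$. For the induction: if the antecedent is false at $d$, we are done; otherwise let $c$ witness it. If $c\in\NI$ then $d\in\NI$ too, so the antecedent also holds in $\Imc_\Amc$, hence the consequent has a named witness $b$ there; its positive and negated atoms lift to $\Imc_\Kmc$, and its inner filters $\Psi'$ hold at $b$ by the induction hypothesis (their $\Mmc'$-context lies in $\nu$, which $b$ satisfies). If $c\notin\NI$, then $c$ is an anonymous direct successor of $d$ of some type $\exists t.P$, and $(d,c)\in s^{\Imc_\Kmc}$ together with $c\in N^{\Imc_\Kmc}$ give $\exists t.P\sqsubseteq_\Tmc^s\exists s.N$. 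If $c$ violated some $\neg\alpha\in\nu$, then the concept $A_0\in\NC$ with $d\in A_0^{\Imc_\Kmc}$ and $A_0\sqsubseteq_\Tmc\exists t.P$ (which exists because $\exists t.P$ caused $c$'s creation) would witness $A_0\in\Mmc'_\psi$ via property~(i), contradicting $d\notin A_0^{\Imc_\Kmc}$; hence $c$ satisfies all of $\nu$, and then the inner filters $\Psi'$ hold at $c$ by the induction hypothesis. Either way the consequent is satisfied, which completes the induction, and in particular gives $\Imc_\Kmc,\pi\models\Psi$.

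The main obstacle is the bookkeeping of the structural lemma: verifying, through arbitrarily many further rewriting steps, that the $\neg M'$-atoms introduced in (S6) stay attached to their filter and migrate into the negative part of the enclosing filter precisely when the filter gets nested, and that the set $\Mmc'$ computed once in (S3) is already large enough to catch every refined successor type $\exists t.P$ that a minimal existential restriction in $\Imc_\Kmc$ could produce while still being structurally subsumed by $\exists s.N$. Once this invariant is in place, the case analysis on named versus anonymous witnesses and the nesting induction are routine.
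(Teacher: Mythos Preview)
Your proposal is correct and follows essentially the same approach as the paper: both lift the NCQ part from $\Imc_\Amc$ to $\Imc_\Kmc$ trivially, and for each filter argue by case analysis on whether the witness is named (lift from $\Imc_\Amc$) or anonymous (use that its type $\exists t.P$ is minimal and structurally subsumed by $\exists s.N$, then derive a contradiction from the $\Mmc'$-atoms). Your formulation is somewhat cleaner than the paper's: you make the structural invariant about the $\neg M'$-atoms migrating with their filter explicit and state a single induction hypothesis that works uniformly for named and anonymous~$d$, whereas the paper treats the top-level case in detail and then waves at the nested case with ``similar, but simpler, arguments''.
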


\begin{proof*}
Let $\phi'=\exists\mathbf{y}.(\varphi(\mathbf{x},\mathbf{y})\land\Psi)$ and $\pi$ be an assignment of $\mathbf{x},\mathbf{y}$ to $\NI$ such that $\Imc_\Amc, \pi \models \varphi(\mathbf{x}, \mathbf{y})$.
Since $\Imc_\Amc$ and $\Imc_\Kmc$ coincide on the domain~$\NI$, we also have $\Imc_\Kmc,\pi\models\varphi(\mathbf{x},\mathbf{y})$.

Consider any filter $\psi(z) = \exists z'. \psi^+(z,z') \to \exists z'.(\beta(z,z')\land \Psi^*)$ in $\Psi$, where
$\beta(z,z') := \psi^+(z,z') \land \psi^-(z,z') $. Then $\psi(z)$ was introduced at some point during the rewriting,
suppose by selecting

$M \sqsubseteq_\Tmc \exists s. N$ in~(S2).
This means that $\varphi$ contains the atom $M(z)$, and hence $d:=\pi(z)$ is a named individual that is contained in~$M^{\Imc_\Amc}\subseteq M^{\Imc_\Kmc}$.
By~(S2), this means that $\Imc_\Kmc, \pi \models \exists z'.\psi^+(z,z')$, and we have to show that $\Imc_\Kmc, \pi \models \exists z'.(\beta(z,z')\land \Psi^*)$:

\begin{enumerate}
  \item If $\Imc_\Amc,\pi\models\exists z'.\beta(z,z')$, then
  $\Imc_\Kmc,\pi\models\exists z'.\beta(z,z')$ by the same argument as for $\varphi(\mathbf{x}, \mathbf{y})$ above, and we can proceed by induction on the structure of the filters to show that the inner filters~$\Psi^*$ are satisfied by the assignment $\pi$ (extended appropriately for~$z'$).

  %
  \item If $\Imc_\Amc,\pi\not\models\exists z'.\beta(z,z')$, then we cannot use a named individual to satisfy the filter $\psi(z)$ in $\Imc_\Kmc$.

  Moreover, since $\Imc_\Amc$ satisfies $\psi(z)$, we also know that $\Imc_\Amc,\pi\not\models\exists z'.\psi^+(z,z')$.
  Since $\psi^+(z,z')=s(z,z')\land N(z')$, this implies that $d\notin(\exists s.N)^{\Imc_\Amc}$.
  Hence, $\exists s.N$ is included in the set~$V$ constructed in Step 3(a) of the canonical model construction for the element $d=\pi(z)$.
  Thus, there exists $M'\sqsubseteq_\Tmc\exists s'.N'$ such that $d\in (M')^{\Imc_\Amc}$, $d\notin(\exists s'.N')^{\Imc_\Amc}$, and $\exists s'.N'\sqsubseteq_\Tmc^s\exists s.N$.
  By Step~3(b), $\Imc_\Kmc$ must contain an element~$d'$ such that, for any $A\in \NC$ and any $r \in \NR$, we have $d'\in A^{\Imc_\Kmc}$ iff $N'\sqsubseteq_\Tmc A$ and $(d,d')\in r^{\Imc_\Kmc}$ iff $s'\sqsubseteq_\Tmc r$.
  Since $N'\sqsubseteq_\Tmc N$ and $s'\sqsubseteq_\Tmc s$, we obtain that $\Imc_\Kmc,\pi\cup\{z'\mapsto d'\}\models\psi^+(z,z')$.

  We show that the assignment $\pi\cup\{z'\mapsto d'\}$ also satisfies $\psi^-(z,z')=\NEG$. Assume to the contrary that there is $\lnot A(z')\in\NEG$ such that $d'\in A^{\Imc_\Kmc}$ (the case of negated role atoms is again analogous).
  Then we have $N'\sqsubseteq_\Tmc A$, which shows that all conditions of~(S3) are satisfied, and hence $M'$ must be included in~$\Mmc'$.
  Since the atoms $\{\neg M'(z) \mid M' \in \mathcal{M}' \}$ are contained in~$\varphi$, we know that they are satisfied by~$\pi$ in~$\Imc_\Kmc$, \ie $d\notin(M')^{\Imc_\Kmc}$ and hence also $d\notin(M')^{\Imc_\Amc}$, which is a contradiction.

  It remains to show that the inner filters~$\Psi^*$ are satisfied by the assignment $\pi\cup\{z'\mapsto d'\}$ in~$\Imc_\Kmc$. Since we are now dealing with an anonymous domain element~$d'$, we can use similar, but simpler, arguments as above to prove this by induction on the structure of the filters. This is possible because the atoms $s(\hat{y}, \hat{x})$, $N(\hat{x})$ implied by $M(\hat{y})$ and the negated atoms induced by~$\Mmc'$ are present in the query even if the filter is integrated into another filter during a subsequent rewriting step.
  \qed
\end{enumerate}
\end{proof*}

We can use this lemma to show correctness of our approach, \ie the answers returned for the \emph{union} of queries given by $\rewo_\Tmc(\phi)$ over~$\Imc_\Amc$ are exactly the answers of the original NCQ~$\phi$ over~$\Imc_\Kmc$.

The proof 
is based on existing proofs for ordinary CQs~\cite{EOS+-AAAI12,bienvenu2015ontology}, extended appropriately to deal with the filters.

\begin{lemma}
  \label{lemma:correctness}
  Let $\Kmc = (\Tmc, \Amc)$ be a consistent $\mathcal{ELH}_\bot$ KB and let $\phi(\mathbf{x})$ be an NCQ.
  Then, for all $\phi'\in\rewo_\Tmc(\phi)$,
  \begin{equation*}
    \mwa(\phi,\Kmc) = \bigcup_{\phi'\in\rewo_\Tmc(\phi)}\ans(\phi',\Imc_\Amc).
  \end{equation*}
\end{lemma}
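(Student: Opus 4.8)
The plan is to prove the two inclusions separately, using \Cref{lemma:filtered-fo-query-data-2} for one of them.

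\textbf{Soundness ($\supseteq$).} Fix $\phi'\in\rewo_\Tmc(\phi)$. By \Cref{lemma:filtered-fo-query-data-2} we have $\ans(\phi',\Imc_\Amc)\subseteq\mwa(\phi',\Kmc)=\ans(\phi',\Imc_\Kmc)$, so it suffices to show that rewriting is \emph{sound over the minimal canonical model}: $\ans(\phi',\Imc_\Kmc)\subseteq\ans(\phi,\Imc_\Kmc)=\mwa(\phi,\Kmc)$. By induction on the length of the rewriting sequence this reduces to a single step $\psi\to_\Tmc\psi'$. Given a match $\pi'$ of $\psi'$ in $\Imc_\Kmc$, put $d:=\pi'(\hat y)$. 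The atom $M(\hat y)$ (implied by the $M\sqsubseteq_\Tmc\exists s.N$ chosen in~(S2)) forces $d\in M^{\Imc_\Kmc}$, so the antecedent $\exists\hat x.\,s(\hat y,\hat x)\land N(\hat x)$ of the new filter $\psi^*(\hat y)$ holds; hence $\pi'$ supplies some $e$ with $s(d,e)\in\Imc_\Kmc$, $N(e)$, and $e$ satisfying $\NEG$ and the inner filters $\Psi_{\hat x}$. Condition~(S2a) then gives that $e$ satisfies all of $\POS$, and the atoms $\{\neg M'(\hat y)\mid M'\in\Mmc'\}$ together with the definition of $\Mmc'$ in~(S3) prevent any such minimally introduced $e$ from violating a negated atom of $\NEG$. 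One checks that the assignment obtained from $\pi'$ by setting $\hat x\mapsto e$ and $y\mapsto d$ for all $y\in\PRED$ is a match of $\psi$ in $\Imc_\Kmc$: the re-introduced atoms on $\hat x$ hold by the above, the atoms on the predecessors hold because these are all sent to $d$, and the re-introduced filters $\Psi_{\hat x}$ hold because $e$ witnesses them.

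\textbf{Completeness ($\subseteq$).} Let $\mathbf a\in\mwa(\phi,\Kmc)=\ans(\phi,\Imc_\Kmc)$, witnessed by a match $\pi_0$ of $\phi$ in $\Imc_\Kmc$. Assign to every element of $\Imc_\Kmc$ a \emph{depth}: named individuals have depth~$0$, and an element created in Step~3(b) as a successor of~$e$ gets depth $\mathrm{depth}(e)+1$. I prove, by induction on the finite multiset of depths of the values of $\pi_0$ --- more generally for any \emph{filtered} query reachable from an NCQ, not just for $\phi$ itself --- that there is $\phi^*\in\rewo_\Tmc(\phi)$ with $\mathbf a\in\ans(\phi^*,\Imc_\Amc)$. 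If all values of $\pi_0$ have depth~$0$, then $\pi_0$ uses only named individuals, so it satisfies the NCQ part of $\phi$ already in $\Imc_\Amc$; and every top-level filter $\psi^*(\hat y)\in\Psi$ is satisfied in $\Imc_\Amc$ \emph{vacuously}, because $\psi^*(\hat y)$ was created while folding a variable onto an \emph{anonymous} successor of $d:=\pi_0(\hat y)$, which by Step~3(a) exists precisely because $d\notin(\exists s.N)^{\Imc_\Amc}$, so the antecedent $\exists\hat x.\,s(\hat y,\hat x)\land N(\hat x)$ of the filter fails in $\Imc_\Amc$ and the inner filters are never inspected (this is the situation of case~2 in the proof of \Cref{lemma:filtered-fo-query-data-2}). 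Hence the empty rewriting sequence works. Otherwise, pick a variable $\hat x$ whose value $e:=\pi_0(\hat x)$ has maximal depth $\ge 1$; since every successor of an anonymous element has strictly larger depth, no variable maps to a successor of~$e$, so $\hat x$ is a leaf variable, and $e$ has a \emph{unique} predecessor $d$ whose positive concept and role memberships are exactly the $\sqsubseteq_\Tmc$-consequences of the label $N$ of the minimal existential $M\sqsubseteq_\Tmc\exists s.N$ that created~$e$. Using this $M,s,N$ in~(S2) I carry out a rewriting step~(S1)--(S7) for $\hat x$: (S2a) holds because $e$, having a type generated by $N$ and $s$, satisfies $\POS$; (S2b) and the membership of the ``bad'' concepts in $\Mmc'$ hold because $e$ does not satisfy $\NEG$; and the modified assignment $\pi_1$ (drop $\hat x$, send $\hat y$ and all of $\PRED$ to $d$) is a match of the resulting $\phi_1$ in $\Imc_\Kmc$ of strictly smaller measure, with the new filter $\psi^*(\hat y)$ satisfied because $e$ itself witnesses its consequent. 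Applying the induction hypothesis to $\phi_1,\pi_1$ gives $\phi^*\in\rewo_\Tmc(\phi_1)\subseteq\rewo_\Tmc(\phi)$ with $\mathbf a\in\ans(\phi^*,\Imc_\Amc)$.

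\textbf{Main obstacle.} I expect completeness to be the delicate part. First, one must verify carefully that the chosen rewriting step is \emph{applicable}, i.e.\ that (S2a), (S2b) and the computation of $\Mmc'$ in~(S3) are met; this uses the precise structural fact that the label of an anonymous element determines its whole positive type, and requires care for negated \emph{role} atoms pointing at $\hat x$, whose predecessors need not a priori be identified with $d$ (here one may have to choose the witnessing match $\pi_0$ to be minimal, exploiting that $\Imc_\Kmc$ is a core by \Cref{lem:core}). Second, the bookkeeping for the filters must be arranged so that the induction still goes through once the freshly introduced filter $\psi^*(\hat y)$ becomes nested inside another filter by a subsequent rewriting of $\hat y$: one needs that the atoms $M(\hat y)$, $\{\neg M'(\hat y)\}$ and the inner filters keep witnessing what they must (cf.\ the discussion preceding \Cref{lemma:filtered-fo-query-data-2}). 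The soundness direction, by contrast, is essentially a routine unwinding of the filter semantics on top of \Cref{lemma:filtered-fo-query-data-2}.
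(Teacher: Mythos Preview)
Your proposal is correct and follows essentially the same route as the paper: soundness by lifting from $\Imc_\Amc$ to $\Imc_\Kmc$ via \Cref{lemma:filtered-fo-query-data-2} and then unwinding one rewriting step using the new filter, completeness by descending along a depth measure on the match. Two small remarks. First, the paper carries your ``creation history'' argument for filters as an explicit invariant (called~$(\dagger)$: whenever $\pi(y)\in\NI$, the filter $\psi(y)$ already holds in $\Imc_\Amc$), which is checked anew at each step; as you have phrased the induction, the base case appeals to how the filter was created, information that is not part of your induction hypothesis, so you should make this invariant explicit. Second, your worry about negated role atoms $\neg r(y,\hat x)$ is unnecessary: guardedness forces a positive atom on $(y,\hat x)$, and since $\hat x$ is a leaf this must be of the form $s(y,\hat x)$, so $y\in\PRED$ and $\pi_0(y)=d$ automatically---no appeal to \Cref{lem:core} is needed.
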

  \begin{proof*}
   $(\supseteq)$: By \Cref{lemma:filtered-fo-query-data-2}, we have $\ans(\phi',\Imc_\Amc)\subseteq\mwa(\phi',\Kmc)=\ans(\phi',\Imc_\Kmc)$. 

  Furthermore, there exists a sequence $\phi_0 \to_\Tmc \dots \to_\Tmc \phi_n$ $(n > 0)$ with $\phi=\phi_0$ and $\phi' = \phi_n$. Hence it is sufficient to show that
  $\ans(\phi_i,\Imc_\Kmc) \subseteq \ans(\phi_{i-1},\Imc_\Kmc)$ for all $i,1\le i\le n$. 
  Suppose the queries are of the following forms:
  \begin{align}
    \phi_i &= \exists\mathbf{y}_{i}. (\varphi_i(\mathbf{x}_i, \mathbf{y}_i) \land \Psi_i) \\
    \phi_{i-1} &= \exists\mathbf{y}_{i-1}. (\varphi_{i-1}(\mathbf{x}_{i-1}, \mathbf{y}_{i-1}) \land \Psi_{i-1})
  \end{align}

  Let $\pi_i$ be a satisfying assignment for $\varphi_i(\mathbf{x}_i, \mathbf{y}_i) \land \Psi_i$ in $\Imc_\Kmc$.
  Suppose $\phi_{i-1} \to_\Tmc \phi_i$ by
  \begin{enumerate}
    \item selecting variable $\hat{x}$ and introducing $\hat{y}$ in (S1) and
    \item selecting $M \sqsubseteq_\Tmc \exists s. N$ in (S2).
  \end{enumerate}

  Let $\pi_i(\hat{y}) = d$. By Step (S6), $M(\hat{y}) \in \varphi_i$ and since $\pi_i$ satisfies $\varphi_i$, it has to hold that $d \in M^{\Imc_\Kmc}$. This implies that $d \in (\exists s.N)^{\Imc_\Kmc}$.
  Since $\pi_i$ satisfies the new filter~$\psi_i^*(\hat{y})$ that is constructed in~(S7), and by selecting $M \sqsubseteq_\Tmc \exists s. N$ in (S2) 
   the precondition of~$\psi_i^*(\hat{y})$ is satisfied by~$\pi_i$ in~$\Imc_\Kmc$, there has to be an assignment $\pi_i\cup\{\hat{x}\mapsto d'\}$ that satisfies the conclusion of $\psi_i^*(\hat{y})$.

  We define the assignment $\pi_{i-1}$ of the variables of $\varphi_{i-1}$ as follows
  \begin{equation}
    \pi_{i-1}(z) :=
    \begin{cases}
      d' & \text{ if } z = \hat{x} \\
      d  & \text{ if } z \in \PRED \\
      \pi_i(z)  & \text{ otherwise.}
    \end{cases}
  \end{equation}
  Then $\pi_{i-1}$ is a satisfying assignment for $\phi_{i-1}$ in $\Imc_\Kmc$.
  To see this, first consider an atom $\alpha$ in $\varphi_{i-1}$. We show that
  $\pi_{i-1}$ satisfies $\alpha$ in $\Imc_\Kmc$.

  If $\alpha$ contains $\hat{x}$, it can be of the following forms: $A(\hat{x})$, $\neg A(\hat{x})$, $r(y,\hat{x})$ or $\neg r(y,\hat{x})$ with $y \in \PRED$. For all of these cases, we know by Step (S7) that they are either implied by $s(\hat{y},\hat{x})\land N(\hat{x})$ or contained in~$\NEG$, with $y$ replaced by $\hat{y}$.
  By the choice of~$d'$, we know that~$\pi_{i-1}$ satisfies each such atom.

  If $\alpha$ does not contain $\hat{x}$, then $\varphi_i$ contains the atom~$\alpha'$ that is obtained from~$\alpha$ by replacing all of the variables from~$\PRED$ with~$\hat{y}$.
  By construction, we know that $\pi_{i-1}(y) = \pi_{i}(\hat{y})$ for all $y \in \PRED$ and $\pi_{i-1}(z) = \pi_i(z)$ otherwise.
  Since $\alpha'$ is satisfied by~$\pi_i$ in~$\Imc_\Kmc$, $\alpha$ is satisfied by $\pi_{i-1}$ in~$\Imc_\Kmc$.

  What remains to show is that $\pi_{i-1}$ satisfies $\Psi_{i-1}$.
  Consider any $\psi(z) \in \Psi_{i-1}$, and distinguish the following cases:

  \begin{enumerate}
    \item If $z = \hat{x}$, then $\psi(\hat{x}) \in \Psi_{\hat{x}}$. Since  $\Imc_\Kmc, \pi_{i}\cup\{\hat{x}\mapsto d'\} \models \Psi_{\hat{x}}$, we also have $\Imc_\Kmc, \{ \hat{x} \mapsto d' \} \models \psi(\hat{x})$.
    Therefore, since $\pi_{i-1}(\hat{x}) = d'$,
    it holds that $\pi_{i-1}$ satisfies $\psi(\hat{x})$ in~$\Imc_\Kmc$.
    \item If $z \in \PRED$ we know that $\pi_{i-1}(z) = \pi_i(\hat{y}) = d$. Since $\Imc_\Kmc, \pi_{i} \models \psi(\hat{y})$, it also holds that $\Imc_\Kmc, \pi_{i-1} \models \psi(z)$.
    \item Otherwise the filter is present in $\Psi_i$. In this case we know that $\Imc_\Kmc, \pi_i \models \psi(z)$ and $\pi_i(z) = \pi_{i-1}(z)$. Hence, it must also hold that $\Imc_\Kmc, \pi_{i-1} \models \psi(z)$.
  \end{enumerate}

  $(\subseteq)$: Suppose that $\mathbf{a} \in \mwa(\phi,\Kmc) = \ans(\phi,\Imc_\Kmc)$.
  We have to show that there exists a rewriting $\phi' \in \rewo_\Tmc(\phi)$ and a satisfying assignment~$\pi$ for~$\phi'$ in~$\Imc_\Amc$ such that $\mathbf{a} = \pi(\mathbf{x})$.
  To do this, we assign a \emph{degree} (a natural number) to each satisfying assignment (including the existentially quantified variables of the NCQ part) such that a satisfying assignment  with degree~$0$ does not use any anonymous individuals. We then show that for each satisfying assignment with a degree greater than~$0$, we can find a rewriting for which a satisfying assignment yielding the same answer, but with a lower degree, exists.
  In addition, for every such assignment $\pi$ and for all filters $\psi(y)$ in $\phi'$ it should hold that,
  \begin{equation}
    \label{eq:preposition}
    \text{if } \pi(y) \in \NI, \text{ then } \Imc_\Amc \models \psi(\pi(y)),
    \tag{$\dagger$}
  \end{equation}
  \ie all filters (at any stage of the rewriting) are satisfied within the confines of~$\Imc_\Amc$.

  For any element $d \in \Delta^{\Imc_\Kmc}$, we denote by~$|d|$ the minimal number of role connections required to reach~$d$ from an element in~$\NI$, with $|d|=0$ iff $d\in \NI$.
  Additionally, for any assignment $\pi'$ in $\Imc_\Kmc$, let
  \begin{equation}
    \text{deg}(\pi') := \sum\limits_{y \in \text{dom}(\pi')} |\pi'(y)|.
  \end{equation}
  Since $\phi\in \rewo_\Tmc(\phi)$, to prove the claim it suffices to show that whenever there is $\phi_1=\exists\mathbf{y}.\varphi_1(\mathbf{x},\mathbf{y})\land\Psi\in \rewo_\Tmc(\phi)$ such that $\varphi_1$ has a match~$\pi_1$ in $\Imc_\Kmc$ with
  $ \mathbf{a} = \pi_1(\mathbf{x})$, $\text{deg}(\pi_1) > 0$, and \Cref{eq:preposition} holds for $\pi_1$ and the filters in~$\Psi$, then there exist $\phi_2$ and $\pi_2$ with the same properties, but $\text{deg}(\pi_2) < \text{deg}(\pi_1)$.

  Assume $\phi_1 \in \rewo_\Tmc(\phi)$ as above, and let $\pi_1$ be a match of~$\varphi_1$. Since $\text{deg}(\pi_1) > 0$ by assumption, there must exist a variable $\hat{x}$ of $\varphi_1$ such that $\pi_1(\hat{x}) \not\in \NI$.
  Select $\hat{x}$ such that it is a leaf node in the subforest of~$\Imc_\Kmc$ induced by~$\pi_1$. Note that $\hat{x}$ cannot be an answer variable.

  We know that $\pi_1(\hat{x}) = d_{\hat{x}}$ was induced by some axiom $\alpha = M \sqsubseteq_\Tmc \exists s.N$ and element $d_p \in M^{\Imc_\Kmc}$ in Definition~\ref{def:canonical-model}. By the construction of $\Imc_\Kmc$, we know that
  \begin{enumerate}[(i)]
    \item $d_{\hat{x}}$ has just the one predecessor~$d_p$, and
    \item $d_{\hat{x}}\in A^{\Imc_\Kmc}$ iff $N\sqsubseteq_\Tmc A$ and $(d_p,d_{\hat{x}})\in r^{\Imc_\Kmc}$ iff $s\sqsubseteq_\Tmc r$.
  \end{enumerate}
  We obtain the query $\phi_2$ from $\phi_1$ through rewriting, by
  selecting $\hat{x}$ and introducing $\hat{y}$ in (S1), and selecting $\alpha$ in (S2). Let $\PRED$ denote the set of predecessor variables of $\hat{x}$ as defined in (S1).
  To see that this is a valid choice, the conditions in (S2) need to be verified:

  \begin{enumerate}[label=(S2\alph*)]
    \item For any $A(\hat{x}) \in \varphi_1$, we have $d_{\hat{x}}=\pi_1(\hat{x}) \in A^{\Imc_\Kmc}$, and hence $N \sqsubseteq_\Tmc A$ by~(ii).
    Consider any role atom $r(y,\hat{x}) \in \varphi_1$.
    From~(i), the construction of $\Imc_\Kmc$ (no inverse edges), and the fact that $\pi_1$ is a satisfying assignment for $r(y,\hat{x})$ in $\Imc_\Kmc$, the only possibility is that $\pi_1(y) = d_p$.
    Therefore $(d_p, d_{\hat{x}}) = (\pi_1(y),\pi_1(\hat{x})) \in r^{\Imc_\Kmc}$.
    By~(ii), this implies that $s \sqsubseteq_\Tmc r$.
    \item Consider any $\neg A(\hat{x}) \in \varphi_1$, for which we must have $d_{\hat{x}}\notin A^{\Imc_\Kmc}$.
    From (ii) we know that $N \not\sqsubseteq_\Tmc A$.
    Consider any $\neg r(y,\hat{x}) \in \varphi_1$. Since this is guarded by a positive role atom as above, again the only possibility is that $\pi_1(y) = d_p$.
    Hence $(d_p, d_{\hat{x}}) \notin r^{\Imc_\Kmc}$.
    By~(ii), this implies that $s \not\sqsubseteq_\Tmc r$.
  \end{enumerate}

  Therefore, we obtain a satisfying assignment $\pi_2$ for $\phi_2$ in $\Imc_\Kmc$ such that $\mathbf{a} \in \pi_2(\mathbf{x})$ (and $\text{deg}(\pi_2) < \text{deg}(\pi_1)$) by setting for all $z \in \Var(\varphi_2)$:
  $$\pi_2(z) := \begin{cases}
    \pi_1(z) & \text{ if } z \in \Var(\varphi_1) \\
    d_p      & \text{ if } z = \hat{y}.
  \end{cases}$$
  To see that $\pi_2$ satisfies~$\phi_2$, we argue why it satisfies the new atoms and filter from~(S6) and~(S7); the old atoms (possibly with renamed variables) remain satisfied.

  The new atom $M(\hat{y})$ is satisfied since $\pi_2(\hat{y})=d_p\in M^{\Imc_\Kmc}$.
  Consider now an atom $\lnot M'(\hat{y})$ with $M'\in\Mmc'$ as specified in~(S6);
  we have to show that $d_p\notin (M')^{\Imc_\Kmc}$.
  Assume to the contrary that $d_p\in (M')^{\Imc_\Kmc}$.
  By~(S3), we know that $M' \sqsubseteq_\Tmc \exists s'.N' \sqsubseteq_\Tmc^s \exists s.N$.
  Moreover, $\exists s'.N'$ must be included in the set~$V$ in Step~3(a) of Definition~\ref{def:canonical-model}, because otherwise we would already have
  $d_p\in(\exists s'.N')^{\Imc_\Amc}$, \ie there would be a named individual $b$ such that $(d_p,b)\in (s')^{\Imc_\Amc}$ and $b\in (N')^{\Imc_\Amc}$.
  Since $s'\sqsubseteq_\Tmc s$ and $N'\sqsubseteq_\Tmc N$, this would imply $(d_p,b)\in s^{\Imc_\Amc}$ and $b\in N^{\Imc_\Amc}$, \ie $d_p\in(\exists s.N)^{\Imc_\Amc}$, which shows that the anonymous object~$d_{\hat{x}}$ would not have been created.
  Since $\exists s'.N'$ is included in~$V$ and we assumed that $\exists s.N$ is minimal \wrt $\sqsubseteq_\Tmc^s$, we must have $s\equiv_\Tmc s'$ and $N\equiv_\Tmc N'$.
  But then (S3b) directly contradicts (S2b).

  We now consider the filters in~$\phi_2$.
  Suppose that \Cref{eq:preposition} holds for~$\pi_1$ and all filters in~$\phi_1$.
  For the ones that are only copied from~$\phi_1$ (modulo renaming some variables to~$\hat{y}$), the property is clearly preserved.
  %
  For the new filter $\psi^*(\hat{y})$, assume that $\pi_2(\hat{y}) \in \NI$, and hence we need to show that $\Imc_\Amc\models\pi_2(\psi^*(\hat{y}))$.
  Assume that there exists an element $d' \in \NI$ such that $(d_p, d') \in s^{\Imc_\Amc}$ and $d' \in N^{\Imc_\Amc}$.
  But then in Step 3(a) in \Cref{def:canonical-model}, $\exists s.N$ could not have been added to~$V$ since $d_p \in (\exists s.N)^{\Imc_\Kmc}$ already holds. 
  Hence, the element $d_{\hat{x}}$ would have never been introduced, which is a contradiction. Therefore, in $\Imc_\Amc$ the precondition of $\psi^*(\hat{y})$ is never met, which makes the filter trivially satisfied.

  Finally, to show that $\text{deg}(\pi_2) < \text{deg}(\pi_1)$, we make a case distinction on whether the set $\PRED$ is empty or not.
  If $\PRED=\emptyset$, then we essentially replace the variable~$\hat{x}$ in~$\varphi_1$ with a new variable~$\hat{y}$ in~$\varphi_2$ with $|\pi_2(\hat{y})|=|d_p|<|d_{\hat{x}}|=|\pi_1(\hat{x})|$.
  Since the remaining variables are not affected by the rewriting step, this shows that $\text{deg}(\pi_2) < \text{deg}(\pi_1)$.
  If $\PRED\neq\emptyset$, then we have $|\pi_2(\hat{y})|=|d_p|=|\pi_1(y)|$ for all $y\in\PRED$.
  Since the variables in $\Var(\varphi_1)\setminus\{\hat{y}\}=\Var(\varphi_1)\setminus(\PRED\cup\{\hat{x}\})$ are not affected and $|\pi_1(\hat{x})|>0$, we conclude that
  \begin{align*}
    \text{deg}(\pi_2)
    &= |\pi_2(\hat{y})|+\sum_{z\in\Var(\varphi_2)\setminus\{\hat{y}\}}|\pi_2(z)| \\
    &< |\pi_1(\hat{x})|+\sum_{y\in\PRED}|\pi_1(y)|+\sum_{z\in\Var(\varphi_1)\setminus(\PRED\cup\{\hat{x}\})}|\pi_1(z)| \\
    &= \text{deg}(\pi_1). \qed
  \end{align*}
  \end{proof*}

Under data complexity assumptions, $\phi$ and~\Tmc, and hence $\rewo_\Tmc(\phi)$, are fixed, and $\Imc_\Amc$ is of polynomial size in the size of~\Amc.

For non-rooted queries the rewriting can get infinite, because we can always find a leaf variable that is not an answer variable.
For example, suppose that the TBox $\Tmc$ consists of the two GCIs $A \sqsubseteq \exists r.B$ and $B \sqsubseteq \exists r.A$.
Let $\phi() = \exists x. A(x) \land \neg B(x)$ be a Boolean query.
The rewriting algorithm would produce an infinite rewriting of the form:
\begin{align*}
  \phi_0 &= \exists x. A(x) \land \neg B(x) \\
  \phi_1 &= \exists x. B(x) \land \underbrace{\big(\exists y. r(x,y) \land A(y) \to \exists y. r(x,y) \land A(y) \land \neg B(y) \big)}_{\psi_1(x)} \\
  \phi_2 &= \exists x. A(x) \land \underbrace{\big(\exists y. r(x,y) \land B(y) \to \exists y. r(x,y) \land B(y) \land \psi_1(y) \big)}_{\psi_2(x)} \\
  \phi_3 &= \exists x. B(x) \land \underbrace{\big(\exists y. r(x,y) \land A(y) \to \exists y. r(x,y) \land A(y) \land \psi_2(y) \big)}_{\psi_3(x)} \\
  \dots
\end{align*}

In the following, we show that such infinite rewritings can be avoided, since after a certain number of rewriting steps queries will not generate any new answers.

For a query $\phi$ with filters $\Psi$, where each $\psi(y) \in \Psi$ is of the form
$$\big( \exists y'. \psi^+(y,y') \big) \to \big(\exists y'. \psi^+(y,y') \land \psi^-(y,y') \land \Psi' \big),$$
we define the \emph{nested filter depth} as follows:
\begin{align*}
  |\phi| &:= \max\limits_{\psi \in \Psi} |\psi| & |\psi| &:= 1  + \max\limits_{\psi' \in \Psi'} |\psi'|,
\end{align*}
where the second expression is applied recursively to subfilters.

\begin{lemma}
  \label{lemma:non-rooted-bound}
  Let $\Kmc = (\Tmc, \Amc)$ be a consistent $\ELHb$ KB and let $\phi$ be an NCQ.
  Then
  \begin{equation*}
    \bigcup_{\substack{\phi'\in\rewo_\Tmc(\phi) \\ |\phi'| \leq v + \NC_\Tmc^2\cdot \NR_\Tmc}} \ans(\phi',\Imc_\Amc) = \bigcup_{\phi'\in\rewo_\Tmc(\phi)} \ans(\phi',\Imc_\Amc).
  \end{equation*}
  where $v$ denotes the number of variables in $\phi$, and $\NC_\Tmc$ and $\NR_\Tmc$ denote the number of concept and role names in $\Tmc$, respectively.
\end{lemma}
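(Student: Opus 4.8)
The plan is to reduce the claim to \Cref{lemma:correctness}, which already establishes
$\bigcup_{\phi'\in\rewo_\Tmc(\phi)}\ans(\phi',\Imc_\Amc)=\mwa(\phi,\Kmc)=\ans(\phi,\Imc_\Kmc)$.
The inclusion ``$\subseteq$'' of the statement to be proved is trivial, so it suffices to show that every answer $\mathbf a\in\ans(\phi,\Imc_\Kmc)$ is already obtained from some rewriting $\phi'$ with $|\phi'|\le v+\NC_\Tmc^2\cdot\NR_\Tmc$. For this I would re-inspect the ``$\subseteq$''-direction of the proof of \Cref{lemma:correctness}: from a match $\pi_0$ of $\phi$ in $\Imc_\Kmc$ with $\pi_0(\mathbf x)=\mathbf a$ it constructs a finite rewriting sequence $\phi=\phi_0\to_\Tmc\dots\to_\Tmc\phi_n=\phi'$ together with matches $\pi_0,\dots,\pi_n$ in $\Imc_\Kmc$ of strictly decreasing degree, ending in a $\pi_n$ that maps all variables to $\NI$ and hence witnesses $\mathbf a\in\ans(\phi',\Imc_\Amc)$. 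What remains is to bound $|\phi_n|$.

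The bound rests on two observations. First, inspecting steps~(S1)--(S7): each rewriting step increases the nesting depth of at most one filter, and only when the eliminated leaf variable $\hat x$ already carries a filter (otherwise the fresh filter $\psi^*(\hat y)$ built in~(S7) has depth $1$); unravelling a deepest chain of nested filters in $\phi_n$ back to its origin shows that this chain tracks a descending path in the canonical-model forest starting from one of the elements $\pi_0(z)$, $z\in\Var(\phi)$, so that $|\phi_n|$ is bounded by the maximal length of such a path (the identification of predecessor variables performed in~(S5) is responsible for the additive term $v$ rather than only the structural one). Second, we may assume that $\pi_0$ is chosen so as to minimise its degree $\sum_z|\pi_0(z)|$; then no root-to-$\pi_0(z)$ path in $\Imc_\Kmc$ is longer than $v+\NC_\Tmc^2\cdot\NR_\Tmc$. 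Indeed, by \Cref{def:canonical-model} every anonymous element~$e$ is created from a unique predecessor that lies in some $M^{\Imc_\Kmc}$ via a minimal existential restriction $\exists s.N$, and the isomorphism type of the subtree rooted at~$e$ depends only on $(s,N)$; hence, apart from at most $v$ positions at which some query-variable image is pinned (folding there could destroy a role atom connecting the pinned variable to its predecessor), the same triple $(M,s,N)$ cannot reoccur along the path without contradicting minimality, since composing $\pi_0$ with the canonical isomorphism between the two isomorphic subtrees would yield a match for $\phi$ in $\Imc_\Kmc$ with the same answer but strictly smaller degree. Combining the two observations gives $|\phi'|=|\phi_n|\le v+\NC_\Tmc^2\cdot\NR_\Tmc$.

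I expect the main obstacle to be the bookkeeping that links the purely syntactic nested filter depth $|\phi_n|$ to the semantic path lengths in $\Imc_\Kmc$: one has to make precise how the filter $\psi^*(\hat y)$ built in~(S7) encodes the whole chain of previously eliminated variables, how~(S5) merges several such chains, and why a deepest filter chain therefore corresponds to exactly one descending canonical-model path. The folding step itself should be routine once set up carefully: since it is realised by an isomorphism between two canonical-model subtrees, it preserves and reflects membership and non-membership in all concepts and roles, so it keeps the (guarded) negated atoms of $\phi$ satisfied, and since the folded segment lies entirely inside the anonymous part of the forest it leaves the named individuals, and thus the answer tuple $\mathbf a$, unchanged.
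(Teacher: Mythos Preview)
Your approach is genuinely different from the paper's and worth comparing. The paper works \emph{syntactically over~$\Imc_\Amc$}: it first observes that a connected non-rooted query, after at most~$v$ rewriting steps, collapses to a ``linear form'' $\exists y.\,A(y)\land\NEG\land\psi(y)$ with a single variable and a single filter chain; it then argues directly on matches in~$\Imc_\Amc$ that if two rewritings in this linear phase agree on $(A,\NEG,\psi^+)$, the longer one contributes no new answers (by a case split on the depth~$d$ to which the filter is actually matched). Your approach instead works \emph{semantically over~$\Imc_\Kmc$}: you piggy-back on the degree-decreasing rewriting sequence from the proof of \Cref{lemma:correctness}, relate the nested filter depth of the final rewriting to a root-to-leaf path in the anonymous forest traced out by that sequence, and bound the path length by a folding argument using minimality of~$\text{deg}(\pi_0)$. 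The paper's route avoids reasoning about~$\Imc_\Kmc$ altogether, at the price of the somewhat ad~hoc linear-form reduction; your route is conceptually cleaner and reuses the machinery of \Cref{lemma:correctness}, at the price of the bookkeeping you flag.

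Two technical points deserve care if you pursue your route. First, your claim that the deepest filter chain corresponds to a single descending path from some~$\pi_0(z)$ is correct, but the argument must track variable renamings through~(S5): when several predecessor variables are merged into~$\hat y$, their filters all land on~$\hat y$, and the deepest chain may switch between branches at merge points; you need that the innermost filter of any chain was created when eliminating a variable carrying \emph{no} filter, which must therefore be an original variable of~$\phi$ (since every~$\hat y$ receives a filter at birth and keeps it under renaming). Second, your folding argument as stated gives a bound of the form $(v{+}1)\cdot\NC_\Tmc\cdot\NR_\Tmc$ rather than $v+\NC_\Tmc^2\cdot\NR_\Tmc$: the fold is only guaranteed to preserve the match when no query edge is ``cut'' at the fold point, so repetitions of~$(s,N)$ are forbidden only \emph{within} each segment between consecutive pinned positions, not globally along the path. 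This still yields a polynomial bound independent of~$\Amc$, which is all that is needed for \Cref{theorem:complexity}, but it does not literally recover the constant in the lemma statement.
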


\begin{proof}

In the following, we assume a connected query $\phi$ with $v$ variables.
This is without loss of generality since the non-connected parts in the query can be dealt with separately.

Connectedness is preserved by the rewriting algorithm and hence there are two possible scenarios.
In the first one, $\phi$ has only a finite number of rewritings. That happens if after at most $v$ rewriting steps the rewriting is of the form $\exists \mathbf{z}.\varphi(\mathbf{z}) \land \Psi(\mathbf{z})$, where $\varphi(\mathbf{z})$ cannot be rewritten further.

Otherwise, $\phi$ has infinitely many rewritings. In this case, after $v$ rewriting steps, the rewriting and all further rewritings are of the form
\begin{equation}
    \label{eq:linear-form}
    \tag{$\ddagger$}
    \exists y. \big(A(y) \land \NEG \land \, \psi(y)\big),
\end{equation}
where $A \in \NC$ (recall that we assume a TBox in normal form), $\NEG$ is a conjunction of atoms of the form $\neg \hat{A}(y)$ for $\hat{A}\in \NC$, and $\psi$ is a filter that is linearly extended in every further rewriting step.

Assume a query $\phi_0$ of the form \eqref{eq:linear-form} that has been further rewritten to $\phi_n$ in a sequence of
$\phi_0 \to_\Tmc \dots \to_\Tmc \phi_i \to_\Tmc \dots \to_\Tmc \phi_n$, where $n \in \Nbb$, every $\phi_j$ is of the form $\exists y.A_j(y) \land  \NEG_j \land\, \psi_j(y)$ for $0 \leq j \leq n$, and there exists $1 < i < n$ with $A_n = A_i$, $\NEG_n = \NEG_i$ and $\psi_n^+ = \psi_i^+$, \ie the filters of $\phi_n$ and $\phi_i$ have the same body (up to renaming of variables).

We show that if there is a satisfying assignment $\pi_n$ for $\phi_n$ over $\Imc_\Amc$, then there is also a satisfying assignment for a query $\phi'$ with $|\phi'| < |\phi_n|$.

Suppose the nested filters in $\phi_n$ are matched up to a nested filter depth of $0 \leq d \leq n$.

Case 1: If $d = 0$, the body of $\psi_n$ is not satisfied, then $\pi_n$ is also a match for $A_i\land \NEG_i$, since the $A_i = A_n$, $\NEG_i = \NEG_n$ and the body of $\psi_i$ is the same as the body of $\psi_n$ by assumption. Clearly $|\phi_i| < |\phi_n|$.

Case 2: If $d > 0$, then we construct a match for query $\phi_{n-1}$.
Suppose $\phi_{n-1}$ has been rewritten to $\phi_n$ by using a GCI $A \sqsubseteq_\Tmc \exists r.B$. Then we know that
\begin{align*}
  \phi_{n-1} &:= \exists y. B(y) \land \NEG_{n-1} \land\, \psi(y) \\
  \phi_n &:= \exists x. A(x) \land \NEG_n \land\, \big( \exists y. r(x,y) \land B(y) \to \exists y. r(x,y) \land B(y) \land \NEG_{n-1} \land\, \psi(y) \big)
\end{align*}
Because $d > 0$, there has to be a satisfying assignment $\{x \mapsto a, y \mapsto b \}$ with $a,b \in \Delta^{\Imc_\Amc}$
for $r(x,y) \land B(y)$, which also satisfies $\NEG_{n-1}\land\,\psi(y)$. Then the assignment $\pi_{n-1} := \{y \mapsto b\}$ satisfies~$\phi_{n-1}$, and $\psi(y)$ will then be matched up to a depth of $d-1$.

This result can be used to bound the nested filter depth of queries during rewriting:
In each rewriting step the query is rewritten \wrt a GCI of the form $A \sqsubseteq_\Tmc \exists r. B$ with $A,B \in \NC_\Tmc$ and $r \in \NR_\Tmc$.

There can be at most $\NC_\Tmc^2 \cdot \NR_\Tmc$ different GCIs of this form.
Suppose a rewriting of $\phi$ to $\phi'$ with $|\phi'| > v + \NC_\Tmc^2 \cdot \NR_\Tmc$, where $v$ denotes the number of variables in~$\phi$.
Then at least two rewritings between $\phi$ and~$\phi'$ must start with the same expression $\exists x.A(x)\land \NEG\land(\exists y.r(x,y)\land B(y) \to \dots )$.

By the arguments above, there are rewritings of $\phi$ of nested filter depth at most $v + \NC_\Tmc^2 \cdot \NR_\Tmc$ that yield the same answers as $\phi'$.

\end{proof}

Hence, queries need to be rewritten only until this bound on the nested filter depth, which does not depend on~\Amc.
We obtain the claimed complexity result.

\begin{theorem}
  \label{theorem:complexity}
  Answering NCQs under minimal-world semantics over consistent \ELHb KBs is \NP-complete, and \PTime-complete in data complexity.
\end{theorem}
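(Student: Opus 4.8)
The plan is to read off all four bounds from the combined rewriting, the two lower bounds coming for free from ordinary CQ answering. By \Cref{lemma:correctness}, a tuple $\mathbf{a}$ is a minimal-world answer to $\phi$ over $\Kmc$ iff $\mathbf{a}\in\ans(\phi',\Imc_\Amc)$ for some $\phi'\in\rewo_\Tmc(\phi)$, and by \Cref{lemma:non-rooted-bound} it suffices to consider those $\phi'$ whose nested filter depth is at most $v+\NC_\Tmc^2\cdot\NR_\Tmc$. Two observations make this effective. First, $\Imc_\Amc$ restricted to $\Ind(\Kmc)$ is computable in polynomial time, since Step~2 of \Cref{def:canonical-model} only needs to decide entailments $\Kmc\models A(a)$ and $\Kmc\models r(a,b)$, and entailment in \ELHb is in \PTime. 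Second, by the analysis in the proof of \Cref{lemma:non-rooted-bound}, every $\phi'$ that can produce a fresh answer is obtained by a number of rewriting steps that is polynomial in $|\phi|+|\Tmc|$ (at most $v$ steps before the rewriting terminates or assumes the linear form~\eqref{eq:linear-form}, plus at most $\NC_\Tmc^2\cdot\NR_\Tmc$ further steps along that form); since every step is polynomial to carry out and adds exactly one new filter node of polynomial size, such a $\phi'$ has polynomially many atoms and polynomially many filter nodes, hence polynomial total size.

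From here, \PTime membership in \emph{data complexity} is immediate: for fixed $\phi$ and $\Tmc$, the bounded-depth part of $\rewo_\Tmc(\phi)$ is a fixed finite set of first-order formulas independent of $\Amc$; we compute $\Imc_\Amc$ in time polynomial in $|\Amc|$ and evaluate each of these fixed formulas over it, which is in \PTime (indeed in $\mathrm{AC}^0$) in data complexity. For \NP membership in \emph{combined complexity}, the algorithm guesses a polynomial-length sequence of rewriting steps --- at each step only the leaf variable $\hat{x}$ and the inclusion $M\sqsubseteq_\Tmc\exists s.N$ of~(S2) need to be guessed, while $\PRED$ and the set $\mathcal{M}'$ are then computed deterministically in polynomial time via \ELHb entailment --- producing a polynomial-size $\phi'$; it then computes $\Imc_\Amc$ and guesses a match of $\phi'$ in $\Imc_\Amc$.

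The step I expect to require the most care is verifying this last guess in polynomial time: although the implication in each filter, $(\exists z'.\psi^+)\to(\exists z'.\,\psi^+\land\psi^-\land\Psi)$, formally introduces a quantifier alternation, a \emph{positive} match can be certified without alternation. To check a filter at $d=\pi(z)$, one first tests in polynomial time over the finite interpretation $\Imc_\Amc$ whether $\exists z'.\psi^+(d,z')$ holds; if not, the filter is vacuously satisfied, and if so, the witnessing element for $z'$ together with the recursively obtained witnesses for the finitely many sub-filters in $\Psi$ are simply part of the guessed match. Since $\phi'$ has only polynomially many filter nodes, the entire certificate --- the main assignment plus all nested witnesses --- is of polynomial size and checkable in polynomial time, so the problem is in \NP. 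Finally, the matching lower bounds are inherited from ordinary CQ answering over \ELHb: since $\mwa(q,\Kmc)=\ans(q,\Imc_\Kmc)=\cert(q,\Kmc)$ holds for every CQ~$q$ and NCQs generalize CQs, the known \PTime-hardness in data complexity and \NP-hardness in combined complexity transfer directly.
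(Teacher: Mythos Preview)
Your proof is correct and follows essentially the same approach as the paper: both derive the lower bounds from CQ answering (using $\mwa=\cert$ on CQs), both obtain the data-complexity upper bound from the fact that the bounded rewriting set and $\Imc_\Amc$ are polynomial, and for the \NP upper bound both guess a polynomial-length rewriting sequence and then evaluate~$\phi'$ over~$\Imc_\Amc$ by deterministically checking each filter premise $\exists z'.\psi^+(z,z')$ (a single-variable existential over a polynomial domain) and guessing the witness for the consequent. The paper phrases the last step as ``enumerate all possibilities for~$z'$ to check the premise, then guess an extended match for the consequent'', while you phrase it as ``include the witnesses in the certificate''; these are the same argument.
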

\begin{proof}
  The lower bounds are inherited from certain answer semantics of CQs over \EL KBs \cite{Rosa-DL07} since for CQs the two semantics coincide.
  For data complexity, it suffices to observe that the size and number of the rewritten queries does not depend on the ABox, and the size of~$\Imc_\Amc$ is polynomial in the size of~\Amc, hence evaluating the rewriting over~$\Imc_\Amc$ can be done in polynomial time.

  For combined complexity, first we can guess an element~$\phi'$ of~$\rewo_\Tmc(\phi)$ since by Lemma~\ref{lemma:non-rooted-bound} the number of necessary rewriting steps is bounded polynomially and each rewriting is of polynomial size.
  Now it remains to evaluate~$\phi'$ over~$\Imc_\Amc$, which is possible in \PSpace since $\phi'$ is a first-order query.
  To obtain a bound of~\NP, we need to do a more careful evaluation.
  First, we start by guessing a match of the initial CQ part of~$\phi'$ in~$\Imc_\Amc$.
  For each of the polynomially many (nested) filters
  \[ \big(\exists z'. \psi^+(z,z')\big) \to \big(\exists z'. \psi^+(z,z') \land \psi^-(z,z') \land \Psi\big) \]
  in~$\phi'$, we then enumerate all (polynomially many) possibilities to extend the initial match by mapping~$z'$ into~$\Imc_\Amc$.
  If none of these satisfies~$\psi^+(z,z')$, this (deterministic) check ends the evaluation successfully.
  Otherwise, we can continue with recursively guessing an extended match for $\exists z'. \psi^+(z,z') \land \psi^-(z,z') \land \Psi$, for each nested filter extending the match by one new variable.
  Overall, this takes only polynomial time, which implies the claim.
\end{proof}

As discussed in \Cref{example:cancer-patient} it can be useful to allow \emph{complex assertions} in the ABox:
The assertions of patient $\SkinOfBreastPatient$ can then be stated without introducing a new constant for the disease by stating
$$\exists \text{diagnosedWith}.\text{SkinOfBreastCancer}(\SkinOfBreastPatient).$$
This leads to the introduction of additional acyclic definitions~$\Tmc'$, which are not fixed. The complexity nevertheless remains the same: Since \Tmc does not use the new concept names in~$\Tmc'$, we can apply the rewriting only w.r.t.~\Tmc, and extend~$\Imc_\Amc$ by a polynomial number of new elements that result from applying \Cref{def:canonical-model} only w.r.t.~$\Tmc'$. 

What is more important than the complexity result is that this approach can be used to evaluate NCQs using standard database methods, \eg using views to define the finite interpretation~$\Imc_\Amc$ based on the input data given in~\Amc, and SQL queries to evaluate the elements of $\rewo_\Tmc(\phi)$ over these views \cite{KLT+-IJCAI11}.

\section{\texorpdfstring{Temporal Minimal World Semantics in \dELHbm}{Temporal Minimal World Semantics in TELH_bot\textasciicircum c,lhs,-}}
In the previous part, we introduced the minimal-world semantics and showed how NCQs can be answered efficiently over \ELHb-KBs.
In the remaining part of the paper we extend the minimal-world semantics to answering temporal queries over \dELHbm.

\begin{example}
  \label{ex:patients}
  Continuing the previous example of different cancer patients, we can now also model temporal aspects.
  \emph{Chemotherapy} is a treatment that is usually given in a number of cycles to fight the cancer of a given patient.
  We formalize this knowledge as follows, where the basic time unit is one day:
  \begin{align}
    \ChemoP &\sqsubseteq \CancerP \label{eq:ax0}\\
    \diamondc_{365} \CancerP &\sqsubseteq \CancerP \label{eq:ax1}\\
    \diamondc_{120} \ChemoP &\sqsubseteq \ChemoP \label{eq:ax3}
  \end{align}
  We make the assumption that being a cancer patient is 365-days convex, hence if a patient is reported to have cancer at two time points at most 1 year apart, we assume that they were also suffering from cancer in between the two reports.
  Similarly, we assume that a patient reported to be receiving chemotherapy at most four months apart also received chemotherapy in between.

  Suppose the ABox~\Amc contains assertions $\ChemoP(p_1, i)$, $i\in\{0,167,258\}$, for a patient~$p_1$.
  The set of entailed assertions, denoted by~$\Amc^*$, is illustrated below, where for simplicity we omit the individual name~$p_1$.
   Here, \CancerP and \ChemoP are abbreviated by C and T, respectively. Representative time points $-1$, $60$, $200$, and $259$ have been introduced additionally, and the intervals they represent are shown above them.
      %
  \begin{center}

\newcommand\PMIN{-1}
\newcommand\PMAX{11}
\begin{tikzpicture}[>=latex, yscale=0.75, xscale=1, semithick,
point/.style={circle,fill=white,draw=black,inner sep=1pt}]\footnotesize

\draw[draw=gray] (\PMIN-0.5,0) -- ++(\PMAX+0.3,0);
\node (ldots) at (\PMIN-0.8,0) {\dots};
\node (rdots) at (\PMAX-0.9,0) {\dots};
%
%

\foreach \name/\position/\timepoint/\asserted/\completed in 
  {v-1/-1/-1/{}/{},
   p0/0/0/{T}/{T,C},
   v1/3/60/{}/{C},
   p2/6/167/{T}/{T,C},
   v3/8/200/{}/{T,C},
   p4/10/258/{T}/{T,C},
   v5/11/259/{}/{}} {
  \node (\name) at (-0.2+\position*0.8,0) [point]{\asserted};
  \node at (-0.2+\position*0.8,-0.7) {{\scriptsize $\timepoint$}};
  \node[rotate=0, anchor=north] at ($(\name)+(0,1.3)$) {{\completed}};
  }
  
\node at (\PMIN-1.5,1.0) {$\Amc^*$};
\node at (\PMIN-1.5,-0.7) {$\zaplus$};
\node at (\PMIN-1.5,0) {$\Amc$};

\node at ($(ldots)+(0,0.5)$) {\dots};
\draw[{-]},thin,gray]	($(ldots)+(0.3,0.5)$) -- ($(v-1)+(0.3,0.5)$); 
\draw[{[-]},thin,gray]	($(p0)+(0.5,0.5)$) -- ($(p2)+(-0.5,0.5)$);
\draw[{[-]},thin,gray]	($(p2)+(0.5,0.5)$) -- ($(p4)+(-0.5,0.5)$);
\node at ($(rdots)+(0,0.5)$) {\dots};
\draw[{[-},thin,gray]	($(v5)+(-0.3,0.5)$) -- ($(rdots)+(-0.3,0.5)$);

%
%

\end{tikzpicture}
  \end{center}
  Since there are infinitely many time points, in general $\Amc^*$ is infinite. However, in~\cite{BoFoKo-RRML} this infinite set is represented via a finite set of \emph{representative time points} $\zaplus$, each of which represents a time interval in which the assertions do not change. 
\end{example}

\subsection{Metric Temporal Conjunctive Queries with Negation}

We now consider the reasoning problem of temporal query answering, which generalizes the results from Section~\ref{sec:atemp-rewriting}. We develop a new temporal query language with negation and extend the closed-world semantics for negation from the previous sections to \dELHbm.

We combine atemporal NCQs with MTL operators~\cite{AlHe-JACM94,GuJO-ECAI16,BBK+-FroCoS17} to obtain Metric Temporal NCQs (MTNCQs) and finally show that such queries can be answered efficiently over \dELHbm KBs when using the minimal-world semantics.

\begin{definition}
\emph{Metric temporal conjunctive queries with negation (MTNCQs)} are built by the grammar rule
\begin{equation}
  \phi::= \psi \mid \top \mid \bot \mid \neg \phi \mid \phi \land \phi \mid \phi \lor \phi 
\mid \phi \luntil_{I} \phi \mid \phi \lsince_{I} \phi,
\end{equation}
where $\psi$ is an NCQ, and $I$ is an interval over $\Nbb$.

An MTNCQ $\phi$ is \emph{rooted}/\emph{Boolean }if all NCQs in it are rooted/Boolean.

\end{definition}

We employ the standard semantics shown in~\Cref{fig:tcq-semantics}.
One can define the \emph{next} operator as $\lnext \phi := \top\luntil_{[1,1]}\phi$, and similarly $\lprev \phi := \top\lsince_{[1,1]}\phi$.
We can also express
$$\leventually_I\phi := (\top\lsince_{-(I\cap(-\infty,0])}\phi)\lor(\top\luntil_{I\cap[0,\infty)}\phi)$$ and $\lalways_I \phi := \lnot\leventually_I\lnot\phi$,
and hence, by~\eqref{eq:convex-diamond}, the $\diamondc[n]$-operators from Section~\ref{sec:delhb}.
An \emph{MTCQ} (or \emph{positive MTNCQ}) is an MTNCQ without negation, where we assume that the operator~$\lalways_I$ is nevertheless included as part of the syntax of MTCQs.

\begin{figure}[t]
 \centering
 \setlength{\tabcolsep}{10pt}
\begin{tabular}{lp{8cm}}
  \toprule
  $\phi$ & $\Imf, i \models \phi$ iff \\
  \midrule
  CQ~$\psi$ & $\Imc_i \models \psi$ \\
  $\top$ & true \\
  $\bot$ & false \\
  $\neg \phi$ & $\Imf, i \not\models \phi$ \\
  $\phi \land \psi$ & $\Imf, i \models \phi \text{ and } \Imf, i \models \psi$ \\
  $\phi \lor \psi$ & $\Imf, i \models \phi \text{ or } \Imf, i \models \psi$ \\
  $\phi\luntil_{I} \psi$ & $\exists k \in I$ such that $\Imf, i+k \models \psi$ and $\forall j: 0 \leq j < k: \Imf, i+j \models \phi$ \\
  $\phi\lsince_{I} \psi$ & $\exists k\in I$ such that $\Imf, i-k \models \psi$ and $\forall j: 0 \leq j < k: \Imf, i-j \models \phi$ \\
  \bottomrule
\end{tabular}

\caption{Semantics of (Boolean) MTNCQs for $\Imf=(\Delta^\Imf,(\Imc_i)_{i\in\Zbb})$ and $i\in\Zbb$.}
\label{fig:tcq-semantics}
\end{figure}

Let $\Kmc = (\Tmc, \Amc)$ be a \dELHbm KB, $\phi(\xbf)$ an MTNCQ, \abf a tuple of individual names from~\Amc, $i\in\tem$, and \Imf a temporal interpretation.

The pair $(\abf, i)$ is an \emph{answer} to $\phi (\xbf)$ w.r.t.~\Imf if $\Imf,i \models \phi(\abf)$. The set of all answers for $\phi$ w.r.t.~\Imf is denoted $\ans(\phi,\Imf)$.
The tuple $(\abf,i)$ is a \emph{certain answer} to~$\phi$ w.r.t.~$\Kmc$ if it is an answer in every model of~$\Kmc$; all these tuples are collected in the set $\cert(\phi,\Kmc)$.

\begin{example}
  \label{ex:chemotherapy}
  Consider the criterion ``Patients that received chemotherapy for more than 3 months, but less than 6 months.''
  This can be expressed as an MTNCQ as follows.
  \begin{align*}
      \phi(x) :={} &\lalways_{[-90, 0]} \ChemoP(x)
             \land \neg \lalways_{[-180, 0]} \ChemoP(x)
  \end{align*}
  The negated conjunct expresses that the data does not indicate an ongoing chemotherapy for the past 6 months (minimal-world semantics), rather than that such a treatment is categorically ruled out by some TBox axioms (certain-answer semantics).
\end{example}

\subsection{Minimal-World Semantics for MTNCQs}
\label{sec:temporal-ncqs}

To extend the approach from~Section~\ref{sec:atemp-rewriting} we need to find a \emph{minimal canonical model} of a \dELHbm KB.
%

%

Note that $\dELHbm$ does not allow temporal roles, because temporal roles interfere with the \emph{minimality}:
they may entail the existence of a generic role successor long after it has been superseded by several more specific successors, and hence the generic successor becomes redundant.

In the definition of the model, we make use of entailment in \dELHbm, which can be checked in polynomial time~\cite{BoFoKo-RRML}. Thus, we can exclude w.l.o.g.\ equivalent concept and role names.
Also, for simplicity, in the following we assume w.l.o.g.\ that all TBox inclusions are in normal form~\eqref{eq:normal1}.

In particular, disallowing CIs of the form $\da A\sqsubseteq\exists r.B$ allows us to draw a stronger connection to the original construction in~Section~\ref{sec:atemp-rewriting}; see in particular Step~3(a) in Definition~\ref{def:min-tem-canonical} below.

\begin{definition}\label{def:min-tem-canonical}
  The \emph{minimal temporal canonical model} $\Imf_\Kmc=(\Delta^{\Imf_\Kmc},(\Imc_i)_{i\in \Zbb})$ of a KB 
 $\Kmc = (\Tmc,\Amc)$ is obtained by the following steps.
  \begin{enumerate}
    \item\label{c-domain} Set $\Delta^{\Imf_\Kmc} := \NI$ and $a^{\Imc_{i}} := a$ for all $a \in \NI$ and $i\in \Zbb$.
    \item\label{c-abox} For each time point $i \in \Zbb$, define $A^{\Imc_i} := \{ a \mid \Kmc \models A(a,i) \}$ for all $ A \in \NC$ and $r^{\Imc_i} := \{ (a,b) \mid \Kmc \models r(a,b,i) \}$ for all $ r \in \NR$.
    \item\label{c-successors} Repeat the following steps:
    \begin{enumerate}
      \item\label{c-select-d} Select an element $d \in \Delta^{\Imf_\Kmc}$ that has not been selected before and, for each $i\in \Zbb$,
      let $V_i := \{ \exists r.B \mid d \in A^{\Imc_i},\ d \not\in (\exists r. B)^{\Imc_i},$ $\Kmc\models A \sqsubseteq \exists r.B,\ A,B\in \NC \}$.
       \item\label{c-add-e} For each $\exists r.B$ that is minimal in some~$V_i$, add a fresh element $e_{rB}$ to $\Delta^{\Imf_\Kmc}$. For all $i\in \Zbb$ and $\Kmc\models B\sqsubseteq A$, add $e_{rB}$ to $A^{\Imc_i}$.
      \item\label{c-add-e-2} For all $i\in \Zbb$, minimal $\exists r.B$ in~$V_i$, and $\Kmc\models r\sqsubseteq s$, add $(d,e_{rB})$ to $s^{\Imc_i}$.
    \end{enumerate}
  \end{enumerate}
\end{definition}
We denote by $\aboxi$ the result of executing only Steps~1 and~2 of this definition, \ie restricting $\Imf_\Kmc$ to the named individuals.
Since there are only finitely many elements of~\NI, \NC, and \NR that are relevant for this definition (\ie those that occur in~\Kmc), for simplicity we often treat $\aboxi$ as if it had a finite object (but still infinite time) domain.

In~$\Imf_\Kmc$, there may exist anonymous objects that are not connected to any named individuals in~$\Imc_i$ and are not relevant for the satisfaction of the KB.
For this reason, in the following we consider only rooted MTNCQs, which can be evaluated only over the parts of~$\Imf_\Kmc$ that are connected to the named individuals.
We again show that~$\Imf_\Kmc$ is actually a model of~\Kmc and is canonical in the usual sense that it can be used to answer \emph{positive} queries over~\Kmc under certain answer semantics.

\begin{restatable}{lemma}{modelconstructioncorrectness}
\label{lem:model-construction-correctness}
  Let \Kmc be a consistent \dELHbm KB. Then ${\Imf_\Kmc}$ is a model of $\Kmc$ and, for every rooted MTCQ~$\phi$, we have $\cert(\phi,\Kmc)=\ans(\phi,\Imf_\Kmc)$.
\end{restatable}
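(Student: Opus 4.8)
The plan is to prove the two assertions separately: (1)~that $\Imf_\Kmc\models\Kmc$, and (2)~that $\cert(\phi,\Kmc)=\ans(\phi,\Imf_\Kmc)$ for every rooted MTCQ~$\phi$. For~(2), the inclusion $\cert(\phi,\Kmc)\subseteq\ans(\phi,\Imf_\Kmc)$ is immediate once~(1) is known, since a certain answer is an answer in \emph{every} model of~$\Kmc$, in particular in~$\Imf_\Kmc$. So the real work lies in~(1) and in the inclusion $\ans(\phi,\Imf_\Kmc)\subseteq\cert(\phi,\Kmc)$.

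For~(1) I would go through the shapes of the normal form~\eqref{eq:normal1}. Concept and role assertions hold by Step~\ref{c-abox}. For $\da A\sqsubseteq B$, $A_1\sqcap A_2\sqsubseteq B$, $\exists r.A\sqsubseteq B$, and $r\sqsubseteq s$, the behaviour on named individuals follows because $A^{\Imc_i}$ and $r^{\Imc_i}$ are \emph{exactly} the assertions entailed at time~$i$, which is closed under all consequences of~$\Kmc$; on a fresh element $e_{rB}$ the decisive point is that it is \emph{time-rigid} --- by Step~\ref{c-add-e}, $e_{rB}\in A^{\Imc_i}$ for all~$i$ iff $\Kmc\models B\sqsubseteq A$, and its outgoing edges likewise do not depend on~$i$ --- so that, using $M\subseteq\da M$, the monotonicity of~$\da$, and $\da\emptyset=\emptyset$, $\da\Zbb=\Zbb$, membership $e_{rB}\in(\da A)^{\Imc_i}$ reduces to $\Kmc\models B\sqsubseteq A$ and hence to $\Kmc\models B\sqsubseteq B'$ for the right-hand side~$B'$, giving $e_{rB}\in B'^{\Imc_i}$; the conjunction and $\exists r.A\sqsubseteq B$ cases are analogous. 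For $A\sqsubseteq\exists r.B$ I would argue that Step~\ref{c-successors}, executed fairly, supplies a witness at \emph{every} time point: if $d\in A^{\Imc_i}$ has no $r$-$B$-successor yet, then $\exists r.B\in V_i$, and either $\exists r.B$ is minimal in~$V_i$ (so $e_{rB}$ and the edge $(d,e_{rB})$ at time~$i$ are added) or some minimal $\exists r'.B'\in V_i$ satisfies $\exists r'.B'\sqsubseteq_\Tmc^s\exists r.B$, whence $r'\sqsubseteq_\Tmc r$ and $B'\sqsubseteq_\Tmc B$, so its witness $e_{r'B'}$ is also an $r$-$B$-successor of~$d$ at time~$i$ by Steps~\ref{c-add-e}--\ref{c-add-e-2}.

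For $\ans(\phi,\Imf_\Kmc)\subseteq\cert(\phi,\Kmc)$ I would use a homomorphism argument, but the crucial observation is that it suffices to construct, for \emph{each} model~$\Imc'$ of~$\Kmc$ and \emph{each} time point~$i$, a homomorphism $h_i$ from the part of~$\Imc_i$ reachable from~$\NI$ into~$\Imc'_i$ that fixes all named individuals --- these $h_i$ need \emph{not} be coherent across~$i$. This is enough because in an MTCQ only the answer variables are shared between the time points of the query, and these are assigned named individuals, while each NCQ's existential variables are re-quantified locally at each time point; a routine induction on the structure of~$\phi$ (cases $\land$, $\lor$, $\luntil_I$, $\lsince_I$, and the primitive $\lalways_I$, all monotone and evaluating their CQ leaves at single time points) then yields $\Imf_\Kmc,i\models\phi(\abf)\Rightarrow\Imc',i\models\phi(\abf)$. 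Each $h_i$ is built by induction along the creation order, restricted to the elements reachable from~$\NI$ at time~$i$: the identity on~$\NI$ is a homomorphism by Step~\ref{c-abox}; and whenever a fresh $e_{rB}$ is active at time~$i$ as a successor of~$d$, this is because $\exists r.B$ is minimal in~$V_i^{(d)}$, i.e.\ $d\in A^{\Imc_i}$ with $\Kmc\models A\sqsubseteq\exists r.B$, so $h_i(d)\in A^{\Imc'_i}$ has an $r$-$B$-successor~$d'$ in~$\Imc'_i$, and setting $h_i(e_{rB}):=d'$ respects all atoms at time~$i$ since there $e_{rB}$ behaves exactly like a fresh $B$-successor of~$d$. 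Here the restriction to \emph{rooted} MTCQs is essential: at a given time~$i$ an anonymous element may be disconnected from all named individuals (the edges on the path to it being active only at other time points), and then we neither can nor need to extend~$h_i$ to it.

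The main obstacle is the bookkeeping caused by the tension between the time-rigidity of the anonymous elements and the time-dependence of the edges leading to them: one must check that Step~\ref{c-successors} really satisfies every $A\sqsubseteq\exists r.B$ at every relevant time point even though the \emph{minimal} element of~$V_i$ may change with~$i$ (so that a shared $e_{rB}$ is the witness at some time points and a strictly smaller $e_{r'B'}$ at others), and one must justify why a separate per-time-point homomorphism is sound for MTCQ entailment --- that cross-time coherence may be dropped precisely because MTCQ matches carry only named individuals across time. The remaining ingredients --- fairness of Step~\ref{c-successors} and the inductions on query structure --- are routine.
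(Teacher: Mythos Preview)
Your proposal is correct and follows essentially the same approach as the paper: the paper's proof dismisses~(1) as ``easy to prove'', obtains $\cert\subseteq\ans$ immediately from~(1), and for $\ans\subseteq\cert$ does exactly your per-time-point homomorphism argument (mapping the rooted part of each~$\Imc_i$ into~$\Jmc_i$) followed by a structural induction on~$\phi$. Your write-up simply spells out the details the paper omits---in particular the case analysis for~(1) and the explicit justification that per-time-point homomorphisms suffice because only named individuals are carried across time points---so there is nothing materially different to compare; one minor slip is that the leaves of a (positive) MTCQ are CQs, not NCQs.
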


  \begin{proof*}
    The first claim is easy to prove, and the inclusion $\cert(\phi,\Kmc)\subseteq\ans(\phi,\Imf_\Kmc)$ follows from the fact that $\Imf_\Kmc$ is a model of~\Kmc. For the other inclusion, consider any model $\Jmf=(\Delta^\Jmf,(\Jmc_i)_{i\ge 0})$ of~\Kmc.
    We prove that $(\abf,i)\in\ans(\phi,\Imf_\Kmc)$ implies $(\abf,i)\in\ans(\phi,\Jmf)$ by induction on the structure of~$\phi$.
    \begin{itemize}
      \item If $\phi$ is a rooted CQ, then $\Imc_i\models\phi(\abf)$. Moreover, since $\phi$ is rooted, only the rooted part of $\Imc_i$, consisting of all elements connected to named individuals, is relevant for satisfying~$\phi(\abf)$. It is easy to show that this part can be homomorphically mapped into~$\Jmc_i$, hence $\Jmf,i\models\phi(\abf)$.
      \item If $\phi=\phi_1\lor\phi_2$, then $(\abf,i)\in\ans(\phi_1,\Imf_\Kmc)$ or $(\abf,i)\in\ans(\phi_2,\Imf_\Kmc)$, hence by induction $(\abf,i)\in\ans(\phi_1,\Jmf)$ or $(\abf,i)\in\ans(\phi_2,\Jmf)$, either of which implies that $(\abf,i)\in\ans(\phi,\Jmf)$.
      \item The cases of $\lsince_I$, $\leventually_I$, and $\lalways_I$ are similar, and therefore the claim also extends to $\diamondc[n]$, $\lnext$, and $\lnext^-$.
      \qed
    \end{itemize}
  \end{proof*}

Thus, the following \emph{minimal-world} semantics is compatible with certain answer semantics for positive (rooted) queries, while keeping a tractable data complexity.

\begin{definition}
  The set of \emph{minimal-world answers} to an MTNCQ~$\phi$ over a consistent \dELHbm KB~\Kmc is $\mwa(\phi,\Kmc):=\ans(\phi,\Imf_\Kmc)$.
\end{definition}

\section{A Combined Rewriting for MTNCQs}

Following the approach used for the atemporal case, we now show that rooted MTNCQ answering under minimal-world semantics is combined first-order rewritable. For rewriting atomic queries we use the results from~\cite{BoFoKo-RRML}. Thus, we proceed as follows. First we rewrite $\phi$ into a \emph{metric first-order temporal logic (\MFOTL)} formula~$\rew{\phi}$, which combines first-order formulas via metric temporal operators; for details, see~\cite{BKMZ-JACM15}. The formula $\rew{\phi}$ is obtained by replacing each (rooted) NCQ~$\psi$ with the first-order rewriting~$\rew{\psi}:=\bigvee_{\psi'\in\rewo_\Tmc(\psi)}\psi'$
from Section~\ref{sec:atemp-rewriting}, and hence $\rew{\phi}$ can be evaluated already over $\aboxi$.
Second, we then further rewrite $\rew{\phi}$ into a three-sorted

first-order formula (with explicit variables for (a)~objects, (b)~time points, and (c)~bits of the binary representation of time values), which is then evaluated over a restriction~$\aboxf$ of~$\aboxi$ that contains only finitely many time points. 

These two steps produce a valid rewriting for the query $\phi$.

\begin{restatable}{lemma}{rewritinginifinite}
  \label{lem:rewritinginfinite}
  Let $\Kmc = (\Tmc, \Amc)$ be a consistent \dELHbm KB and $\phi$ be a rooted MTNCQ.
  Then
  $\mwa(\phi, \Kmc) = \ans(\rew{\phi}, \aboxi)$.
\end{restatable}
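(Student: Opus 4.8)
The plan is to prove both inclusions of $\mwa(\phi, \Kmc) = \ans(\rew{\phi}, \aboxi)$ by induction on the structure of the MTNCQ~$\phi$, with the base case (where $\phi$ is an NCQ~$\psi$) being the only substantial step, since the Boolean connectives and metric temporal operators $\luntil_I, \lsince_I$ are interpreted identically in both $\Imf_\Kmc$ and the rewriting target (the MTL layer is simply carried through unchanged). For the base case we want $\ans(\psi, \Imc_i) = \ans(\rew{\psi}, (\aboxi)_i)$ for each time point~$i$, where $(\aboxi)_i$ is the slice of $\aboxi$ at time~$i$ and $\rew{\psi} = \bigvee_{\psi'\in\rewo_\Tmc(\psi)}\psi'$. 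The key observation is that, at a fixed time point~$i$, the interpretation~$\Imc_i$ obtained in Definition~\ref{def:min-tem-canonical} restricted to its rooted part coincides (up to the anonymous-object naming) with the atemporal minimal canonical model of the atemporal KB $\Kmc_i := (\Tmc^{\mathrm{at}}, \Amc^*_i)$, where $\Tmc^{\mathrm{at}}$ consists of the non-temporal CIs and RIs of~$\Tmc$ (plus the role inclusions), and $\Amc^*_i$ is the set of all assertions entailed by~\Kmc at time~$i$. This is precisely where the restriction to \dELHbm matters: because diamonds do not appear on the right-hand side of CIs (no $\da A \sqsubseteq \exists r.B$), the set~$V_i$ in Step~\ref{c-select-d} is generated only by purely atemporal existential axioms $A \sqsubseteq \exists r.B$, so Step~\ref{c-successors} of Definition~\ref{def:min-tem-canonical} at time~$i$ is literally Step~3 of Definition~\ref{def:canonical-model} applied to~$\Kmc_i$ — this is the connection flagged in the remark just before Definition~\ref{def:min-tem-canonical}. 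Hence $(\Imf_\Kmc)_i = \Imc_{\Kmc_i}$ and $(\aboxi)_i = \Imc_{\Amc^*_i}$ as atemporal interpretations.

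Given that identification, the base case reduces to the atemporal correctness result already proved: by Lemma~\ref{lemma:correctness} applied to~$\Kmc_i$ we have $\mwa(\psi, \Kmc_i) = \bigcup_{\psi'\in\rewo_\Tmc(\psi)}\ans(\psi', \Imc_{\Amc^*_i}) = \ans(\rew{\psi}, (\aboxi)_i)$, and $\mwa(\psi,\Kmc_i) = \ans(\psi, \Imc_{\Kmc_i}) = \ans(\psi,(\Imf_\Kmc)_i)$. One subtlety to discharge here is that the atemporal rewriting $\rewo_\Tmc(\psi)$ was defined using the atemporal subsumption relation $\sqsubseteq_\Tmc$; I would note that for entailments of the form $M \sqsubseteq \exists s.N$, $M \sqsubseteq_\Tmc A$, and $r \sqsubseteq_\Tmc s$ used in the rewriting steps, \dELHbm-entailment from~\Tmc agrees with \ELHb-entailment from~$\Tmc^{\mathrm{at}}$ (the temporal CIs $\da A \sqsubseteq B$ only contribute to which concept assertions hold in~$\Amc^*_i$, which is already baked into~$\Amc^*_i$ by Step~\ref{c-abox}), so that $\rewo_{\Tmc^{\mathrm{at}}}(\psi) = \rewo_\Tmc(\psi)$ and $\rew{\psi}$ is the same object in both readings. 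Since $\psi$ is rooted, only the rooted part of $\Imc_{\Kmc_i}$ is relevant for matches of~$\psi$, which justifies working with $\aboxi$ (the restriction to named individuals plus the successors reachable from them) rather than the full~$\Imf_\Kmc$; this is the same point already used in the proof of Lemma~\ref{lem:model-construction-correctness}.

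For the induction step, suppose the claim holds for all proper subformulas. For $\phi = \neg\phi_1$, $\phi = \phi_1 \land \phi_2$, $\phi = \phi_1 \lor \phi_2$, $\phi = \phi_1 \luntil_I \phi_2$, or $\phi = \phi_1 \lsince_I \phi_2$, I would unfold the semantics in Figure~\ref{fig:tcq-semantics} on the left-hand side ($\Imf_\Kmc, i \models \phi$) and the identical MFOTL semantics on the right-hand side ($\aboxi, i \models \rew{\phi}$), and observe that both reduce to the same Boolean/temporal combination of the conditions ``$\Imf_\Kmc, j \models \phi_k$'' resp. ``$\aboxi, j \models \rew{\phi_k}$'' at the appropriate time points~$j$, which coincide by the induction hypothesis. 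The only point needing a word of care is negation: this is exactly where the \emph{minimal}-world semantics is essential — the equivalence $\ans(\phi_1,\Imf_\Kmc) = \ans(\rew{\phi_1},\aboxi)$ is an honest equality of sets (not merely an inclusion), so its complement is preserved, whereas under certain-answer semantics only one inclusion would hold and negation would break the induction.

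I expect the main obstacle to be making the identification $(\Imf_\Kmc)_i = \Imc_{\Kmc_i}$ and $(\aboxi)_i = \Imc_{\Amc^*_i}$ fully rigorous: one must check that the fairness of Step~\ref{c-successors} in Definition~\ref{def:min-tem-canonical} yields the same (isomorphic) tree of anonymous successors at each time point as the atemporal construction, that minimality of $\exists r.B$ in~$V_i$ coincides with minimality w.r.t.\ $\sqsubseteq^s_{\Tmc^{\mathrm{at}}}$, and that $\Amc^*_i$ is indeed the set of \Kmc-entailed assertions at time~$i$ and nothing more — in particular that the temporal axioms do not create additional concept memberships beyond those captured by $A^{\Imc_i}$ in Step~\ref{c-abox}. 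Everything downstream of that identification is a direct appeal to Lemma~\ref{lemma:correctness} plus a routine structural induction over the temporal operators.
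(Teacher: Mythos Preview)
Your proposal is correct and follows essentially the same route as the paper: structural induction on~$\phi$, with the base case reducing the time-$i$ slice of~$\Imf_\Kmc$ (restricted to its rooted part) to the atemporal minimal canonical model of an auxiliary KB built from the entailed assertions at~$i$ and the \enquote{de-temporalized} TBox, after which Lemma~\ref{lemma:correctness} applies; the inductive step is handled by observing that the MTL layer passes through unchanged. Your discussion is in fact more explicit than the paper's on the point that $\rewo_\Tmc(\psi)$ coincides with the rewriting over the atemporal fragment, and you correctly flag that the identification of slices with atemporal canonical models (up to isomorphism on the rooted part) is where the work lies.
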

  \begin{proof*}
    We prove the claim by induction over the structure of~$\phi$.

    Suppose that $\phi$ is a rooted NCQ.
    Since $\phi$ does not contain temporal operators, we can restrict our attention to a single atemporal interpretation~$\Imc_i$ in $\Imf_\Kmc=(\Delta^{\Imf_\Kmc},(\Imc_i)_{i\in\Zbb})$.
    Since $\phi$ is rooted, only the \enquote{rooted} part~$\Imc_i^r$ of~$\Imc_i$, consisting only of those elements connected to named individuals via a sequence of role connections, is relevant for evaluating~$\phi$ (and similarly for $\rew{\phi}$).
    In the construction of $\Imf_\Kmc$ (\Cref{def:min-tem-canonical}), we can observe that $\Imc_i^r$ is uniquely determined by the definition of $A^{\Imc_i}$ and $r^{\Imc_i}$ in Step~\ref{c-abox}. Moreover, $\Imc_i^r$ is isomorphic to the (atemporal) minimal canonical model of~$(\Tmc',\Amc_i)$ as in Definition~\ref{def:canonical-model}, where
    \begin{itemize}
      \item $\Tmc':=\{C\sqsubseteq D \mid \da C\sqsubseteq D\in\Tmc,\da\in\Dpm\}$ and
      \item $\Amc_i:=\{A(a)\mid\Kmc\models A(a,i)\}\cup\{r(a,b)\mid\Kmc\models r(a,b,i)\}$.
    \end{itemize}
    In particular, one can observe that the temporal operators in~\Tmc are irrelevant for the behavior of the anonymous elements in~$\Imc_i$ (note that $\da C\sqsubseteq D$ entails $C\sqsubseteq D$) and we can restrict the attention to those assertions entailed for time point~$i$.
    Hence, by Lemma~\ref{lemma:correctness} we can conclude that
    $(\abf,i)\in\mwa(\phi,\Kmc)=\ans(\phi,\Imf_\Kmc)$ iff $\abf\in\ans(\phi,\Imc_i)=\ans(\phi,\Imc_i^r)=\ans(\rew{\phi},\Imc_{i,\Amc})$ iff $(\abf,i)\in\ans(\rew{\phi},\Imf_\Amc)$, where $\Imc_{i,\Amc}$ is the restriction of~$\Imc_i$ to the named individuals.

    For the remaining cases, it suffices to observe that $\phi$ and $\rew{\phi}$ are built on the same structure of temporal operators, which have the same semantics for both MTNCQs and MFOTL formulas.
    \qed
  \end{proof*}

For the second rewriting step, we restrict ourselves to finitely many time points.
More precisely, we consider the finite structure~$\aboxf$, which is obtained from~$\aboxi$ by restricting the set of time points to~$\zaplus$.
By Lemma~4 in~\cite{BoFoKo-RRML}, the information contained in $\aboxf$ is already sufficient to answer rooted atomic queries.
The idea is that each time point $i$ is assigned a representative time point $|i|$ from~$\zaplus$, at which the same assertions hold. Therefore, the answers to a query at time point $i$ are the same as the answers at $|i|$.

We extend this structure a little, by considering the \emph{two} representatives~$i,j$ for each maximal interval $[i,j]$ in $\Zbb\setminus\tem$.
In this way, we ensure that the \enquote{border} elements are always representatives for their respective intervals.
The size of the resulting structure~$\aboxf$ is polynomial in the size of~\Kmc.

\begin{example}\label{ex:period1}

Continuing \Cref{ex:chemotherapy}, below one can see the finite structure~$\aboxf$ over the representative time points $\{-1,0,1,166,167,168,257,258,259\}$, where for simplicity we omit the individual name.
\begin{center}

\newcommand\PMIN{-1}
\newcommand\PMAX{11}
\newcommand\DISTX{0.1}
\begin{tikzpicture}[>=latex, yscale=0.75, xscale=1, semithick,
point/.style={anchor=center,circle,fill=white,draw=black,inner sep=1pt}]\footnotesize

\draw[draw=gray] (\PMIN-0.5,0) -- ++(\PMAX+0.3,0);
\node (ldots) at (\PMIN-0.8,0) {\dots};
\node (rdots) at (\PMAX-0.9,0) {\dots};
%
%

\foreach \name/\position/\offset/\timepoint/\asserted/\completed in 
  {v-1/-1/+\DISTX/-1/{}/{},
   p0/0/0/0/{T}/{T,C},
   v1a/1/-\DISTX/1/{}/{C},
   v1b/5/\DISTX/166/{}/{C},
   p2/6/0/167/{T}/{T,C},
   v3a/7/-\DISTX/168/{}/{T,C},
   v3b/9/\DISTX/257/{}/{T,C},
   p4/10/0/258/{T}/{T,C},
   v5/11/-\DISTX/259/{}/{}} {
  \node (\name) at (-0.2+\position*0.8+\offset,0) [point]{\asserted};
  \node at (-0.2+\position*0.8+\offset,-0.7) {{\scriptsize $\timepoint$}};
  \node[rotate=0, anchor=north] at ($(\name)+(0,1.3)$) {{\completed}};
  }
  
\node at (\PMIN-1.5,1.0) {$\aboxf$};
\node at (\PMIN-1.5,-0.7) {$\zaplus$};
\node at (\PMIN-1.5,0) {$\Amc$};

\node at ($(ldots)+(0,0.5)$) {\dots};
\draw[{-]},thin,gray]	($(ldots)+(0.2,0.5)$) -- ($(v-1)+(0.2,0.5)$); 

\draw[{[-]},thin,gray]	($(v1a)+(-0.2,0.5)$) -- ($(v1b)+(0.2,0.5)$);
\draw[{[-]},thin,gray]	($(v3a)+(-0.2,0.5)$) -- ($(v3b)+(0.2,0.5)$);
\node at ($(rdots)+(0,0.5)$) {\dots};
\draw[{[-},thin,gray]	($(v5)+(-0.2,0.5)$) -- ($(rdots)+(-0.2,0.5)$);

%
%

\end{tikzpicture}
\end{center}
\end{example}

The rewriting from Lemma~\ref{lem:rewritinginfinite} can refer to time instants outside of~$\zaplus$.
However, when we want to evaluate a pure FO formula over the finite structure~$\aboxf$, this is not possible anymore, because the first-order quantifiers must quantify over the domain of~$\aboxf$.
Moreover, since the query~$\rew{\phi}$ can contain metric temporal operators, we need to keep track of the distance between the time points in~$\tem$.
Hence, in the following we assume that~$\aboxf$ is given as a first-order structure with the domain $\NI\cup\{b_1,\dots,b_n\}\cup\zaplus$
and additional predicates $\bit$ and $\sign$ such that $\bit(i,j)$, $1\leq j\leq n$, 
is true iff the $j$th bit of the binary representation of the time stamp~$i$ is~$1$, and $\sign(i)$ is true iff $i$ is non-negative.

Thus, we now consider three-sorted first-order formulas with the three sorts $\NI$ (for objects), $\{b_1,\dots,b_n\}$ (for bits) and $\zaplus$ (for time stamps).
We denote variables of sort $\zaplus$ by $t,t',t''$.
To simplify the presentation, we do not explicitly denote the sort of all variables, but this is always clear from the context.
Every concept name is now accessed as a binary predicate of sort $\NI\times\zaplus$, \eg $A(a,i)$ refers to the fact that individual~$a$ satisfies~$A$ at time point~$i$. Similarly, role names correspond to ternary predicates of sort $\NI\times\NI\times\zaplus$.

In the following we show that 
the expressions $t'\bowtie t$ and even $t'-t\bowtie m$ for some constant~$m$ and ${\bowtie}\in\{\ge,>,=,<,\le\}$ are definable as first-order formulas using the natural order~$<$ on~$\{1,\dots,m\}$.

More specifically, we define $t'-t\bowtie d$, for some constant~$d<\infty$ and ${\bowtie}\in\{\ge,>,=,<,\le\}$, as first-order formulas with build-in predicates $\bit(t,j)$, $1\leq j\leq n$, which is true iff the $j$-th bit of $t$ is $1$, and $\sign(t)$, which is true iff $t\geq 0$.
W.l.o.g., let $d$ be a non-negative integer; otherwise, the formula can be reformulated as $t-t'\bowtie k$, where $0\leq k=-d$.
First, consider the case of equality. $(t'-t= d)$ is true iff
\begin{align*}
\big(\sign(t')\leftrightarrow\sign^{d}(t) \big) \wedge \forall j \, \big(\bit(t',j) \leftrightarrow \bit^{d}(t,j)\big) \land \neg \ovf^d(t),
\end{align*}
where $\sign^{d}(t)$ checks whether~$t+d$ is non-negative,
$\bit^{d}(t,j)$ says what the $j$th bit of the binary representation of $t+d$ is, and
$\ovf^{d}(t)$ detects whether the addition of~$d$ to~$t$ causes an overflow in the $n$-bit.
These three auxiliary predicates are defined inductively as follows:
\begin{align*}
\ovf^{0}(t) &:= \bot \\
\ovf^{d+1}(t)&:=\ovf^{d}(t)\vee \big(\sign^{d}(t)\wedge \forall j.\; \bit^{d}(t,j)\big) \\
\displaybreak[0] \\
\sign^{0}(t)&:=\sign(t) \\
\sign^{d+1}(t)&:= \sign^{d}(t)\vee \big(\forall j. (\exists j'.(j'<j))\leftrightarrow \neg \bit^{d}(t,j)\big) \\
\displaybreak[0] \\
\bit^{0}(t,j)&:=\bit(t,j) \\
\bit^{d+1}(t,j) &:= \sign^{d}(t)\wedge \big(\bit^{d}(t,j) \,\leftrightarrow
\, \exists j'.(j'<j) \land \neg \bit^{d}(t,j')\big) \\
 &\lor \neg\sign^{d+1}(t)\wedge \big(\bit^{d}(t,j) \,\leftrightarrow\, \exists j'. (j'<j) \land \bit^{d}(t,j')\big)
\end{align*}

The remaining cases ${\bowtie}\in\{\ge,>,<,\le\}$ are obtained similarly to the formulas above.
For example, $t< t'$ can be expressed by looking at the signs and the most significant bit in which they differ, formally:
\begin{align*}
  \big(\neg \sign(t)\land \sign(t')\big)  &\\
  \lor \Big( \big(\sign(t)\leftrightarrow \sign(t')\big) &\land
  \exists j. \big( \forall j'. (j' > j) \to (\bit(t',j) \leftrightarrow \bit(t,j))\big)
   \\
   &\land \big(\bit(t',j) \leftrightarrow \sign(t) \big)
  \land \big(\bit(t',j) \leftrightarrow \neg \bit(t,j) \big)
  \Big).
\end{align*}

\begin{restatable}{lemma}{secondlevelrepresentatives}
\label{lem:second-level-representatives}
  For $\rew{\phi}$ there is a constant $N\in\Nbb$ such that,
  for every subformula~$\psi$ of~$\rew{\phi}$, every maximal interval~$J$ in $\Zbb\setminus\bigcup\{[i-N,i+N]\mid i\in\tem\}$, all $k,\ell\in J$, and all relevant tuples~\abf over~\NI, we have $\aboxi,k\models\psi(\abf)$ iff $\aboxi,\ell\models\psi(\abf)$.
\end{restatable}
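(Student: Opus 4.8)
I would prove this by structural induction on the MFOTL formula~$\rew{\phi}$, showing that each subformula~$\psi$ admits a constant~$N_\psi$ with the stated property; since $\rew{\phi}$ is a fixed finite formula and the construction below makes $N_\psi$ non-decreasing along the subformula ordering, $N := N_\psi$ for $\psi=\rew{\phi}$ is finite and proves the lemma. I would also always take $N_\psi\ge\max\tem-\min\tem$, so that $\Zbb\setminus\bigcup\{[i-N_\psi,i+N_\psi]\mid i\in\tem\}$ has exactly two maximal intervals, a \emph{left tail} $L_\psi:=(-\infty,\min\tem-N_\psi-1]$ and a \emph{right tail} $R_\psi:=[\max\tem+N_\psi+1,\infty)$; it then suffices to show that, for all relevant~$\abf$, the truth value of $\aboxi,k\models\psi(\abf)$ is constant as $k$ ranges over~$L_\psi$ and constant as $k$ ranges over~$R_\psi$. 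Note that this property for a given~$N_\psi$ is preserved under increasing~$N_\psi$, since a larger constant only shrinks the maximal intervals; this is what lets us take the single constant $N=N_{\rew{\phi}}$ at the end.

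For the base case, $\psi=\rew{\chi}$ is the first-order rewriting of a rooted NCQ~$\chi$ occurring in~$\phi$. Here I would use that the assertions entailed by~\Kmc, and hence the structure~$\aboxi$, are constant in time on every maximal interval of~$\Zbb\setminus\tem$: the only inclusions of~\Tmc that are not \enquote{local in time} are those of the form $\da A\sqsubseteq B$ with $\da\in\Ds$, and a straightforward induction over the derivation of assertions, bottoming out at ABox assertions that all live at time points of~$\tem$, shows that each such inclusion produces, within a maximal interval of $\Zbb\setminus\tem$, a truth value that does not vary; this is precisely the fact underlying the representative time points~$\zaplus$ of~\cite{BoFoKo-RRML}. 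Since each of $L_\psi,R_\psi$ (for $N_\psi=\max\tem-\min\tem$) is contained in a single maximal interval of $\Zbb\setminus\tem$, $\aboxi,k\models\rew{\chi}(\abf)$ is independent of the choice of~$k$ within a tail, by Lemma~\ref{lem:rewritinginfinite}. The Boolean cases are immediate, as the truth value of $\neg\psi'$, $\psi_1\land\psi_2$, or $\psi_1\lor\psi_2$ is a Boolean function of the truth values of the immediate subformulas; so $N_{\neg\psi'}:=N_{\psi'}$ and $N_{\psi_1\land\psi_2}:=N_{\psi_1\lor\psi_2}:=\max(N_{\psi_1},N_{\psi_2})$ work.

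The interesting case is a metric operator; say $\psi=\psi_1\luntil_I\psi_2$, the case of~$\lsince_I$ being symmetric under reversing time. Put $N':=\max(N_{\psi_1},N_{\psi_2})$, so that $\psi_1$ and $\psi_2$ are each constant on the left $N'$-tail $L'$ (with truth values $v_1^-,v_2^-$) and on the right $N'$-tail $R'$ (with truth values $v_1^+,v_2^+$), while on the finite interval $[\min\tem-N',\max\tem+N']$ their truth values form some fixed sequences, independent of any evaluation point. If $I=[c,d]$ is bounded, I set $N_\psi:=N'+d$; then for $k\in L_\psi$ the window $[k,k+d]$, which is all that matters for $\psi_1\luntil_{[c,d]}\psi_2$ at~$k$, lies inside~$L'$, so the truth value of~$\psi$ at~$k$ is a fixed Boolean function of $v_1^-,v_2^-,c,d$, and symmetrically on~$R_\psi$. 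If $I=[c,\infty)$, I set $N_\psi:=N'+c$; for $k\in R_\psi$ the window $[k,\infty)$ lies inside~$R'$ and $\psi$ at~$k$ depends only on $v_1^+,v_2^+,c$, whereas for $k\in L_\psi$ the sequence of $(\psi_1,\psi_2)$-values over $[k,\infty)$ is a block of $(v_1^-,v_2^-)$ whose length depends on~$k$, followed by the fixed sequence on the active interval, followed by the constant suffix $(v_1^+,v_2^+)$. A short case distinction on $(v_1^-,v_2^-)$ then shows that satisfaction of $\psi_1\luntil_{[c,\infty)}\psi_2$ at~$k$ is insensitive to the length of that initial block: if $v_1^-$ is false, $\psi$ holds at~$k$ only if $c=0$ and $v_2^-$ holds; if $v_1^-$ and $v_2^-$ both hold, $\psi$ holds at every such~$k$, witnessed by $p=c$, whose target $k+c$ still lies in~$L'$ by the choice of~$N_\psi$; and if $v_1^-$ holds but $v_2^-$ does not, any witness must reach a point of $[\min\tem-N',\infty)$ at which $\psi_2$ holds, every such point is at distance exceeding~$c$ from~$k$ (again by the choice of~$N_\psi$), so whether a witness exists is a condition on the fixed active sequence and on $v_1^+,v_2^+$ only. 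Hence $\psi$ is constant on each of $L_\psi,R_\psi$, which completes the induction.

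The step I expect to be the main obstacle is exactly this last one: for a metric until or since with an \emph{unbounded} interval, the relevant window can never be confined to a time-invariant region, so one cannot reduce to \enquote{the interpretation near~$k$ is already constant} as in the bounded case, but must argue semantically that the eventual truth value does not depend on how long the already-stabilized prefix (or suffix) is. Everything else — the stabilization of~$\aboxi$ outside~$\tem$ (which is essentially the content of the $\zaplus$ machinery in~\cite{BoFoKo-RRML}), the Boolean steps, and the bounded metric case — is routine bookkeeping; it is, however, worth recording explicitly that $\rew{\phi}$ is built on exactly the temporal skeleton of~$\phi$, so the first rewriting step introduces no new or more deeply nested metric intervals and the resulting bound~$N$ is genuinely finite.
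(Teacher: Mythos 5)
Your induction scheme and, in particular, the case analysis showing that an unbounded $\luntil_{[c,\infty)}$ is insensitive to the length of an already-stabilized prefix are sound as far as they go, but the proof as structured only establishes the statement for a \emph{data-dependent} threshold, and that is a genuine gap. By insisting on $N_\psi\ge\max\tem-\min\tem$ you arrange that $\Zbb\setminus\bigcup\{[i-N,i+N]\mid i\in\tem\}$ consists of just the two infinite tails, so you never treat the maximal intervals lying \emph{inside} the (arbitrarily large) gaps between consecutive ABox time points. Those inner gaps are precisely what the lemma is about in the sparse-ABox setting, and the word \enquote{constant} is meant as: $N$ depends only on $\rew{\phi}$, i.e.\ on the query and TBox. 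The paper's proof takes $N_\psi$ to be the sum of the interval bounds occurring in $\psi$ and proves constancy on \emph{every} maximal interval $J$, by transferring a witness for $\psi_1\luntil_I\psi_2$ from $k$ to $\ell$ within $J$ using the induction hypothesis with the smaller constants $N_{\psi_1},N_{\psi_2}\le N_\psi-c$ for the subformulas (distinguishing whether the witness falls inside or beyond~$J$ in the unbounded case).

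The data-dependence is not a cosmetic defect: the lemma's $N$ is subsequently hard-wired into the translation $\trans{\rew{\phi}}{0}{t}$ via the offsets $n,n',n''\in[-N,N]$ and the corresponding disjunctions and conjunctions, so with $N\ge\max\tem-\min\tem$ the rewritten formula grows with~$\Amc$. That destroys the combined-rewriting property (the rewriting must be independent of the data) and the complexity proof, which explicitly bounds the size of $\trans{\rew{\phi}}{0}{t}$ in terms of $\phi$ and $\Tmc$ alone to obtain \PTime data complexity and the \ExpSpace upper bound. To repair your argument, keep the formula-dependent constants ($N_\psi$ a sum of interval bounds suffices) and run your analysis on an arbitrary maximal interval~$J$, which is in general bounded on both sides by \enquote{active} regions: your variable-length-block argument for unbounded $\luntil$ (and its mirror for $\lsince$) is essentially the right idea there too, or one can follow the paper's direct witness-transfer argument; the bounded case then needs the window $[k,k+c_2]$ to be handled via the inductive constants rather than by containment in a tail.
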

\begin{proof*}
We are going to prove a more specific statement. Namely, let $N_\psi$ be the sum of all interval bounds of temporal formulas in a subformula~$\psi$  of~$\rew{\phi}$ (except~$\infty$). Consequently, for the proof we consider instead every maximal interval~$J$ in $\Zbb\setminus\bigcup\{[i-N_\psi,i+N_\psi]\mid i\in\tem\}$. 

  We show this by induction on the structure of~$\psi$, but only consider three representative cases; the other cases are similar.
  \begin{itemize}
    \item If $\psi$ is the rewriting of an NCQ, then $N_\psi=0$ and the semantics of~$\psi$ depends only on the interpretation at a single time point. Since $k$ and $\ell$ belong to the same maximal interval in $\Zbb\setminus\tem$, by Lemma~4 in~\cite{BoFoKo-RRML} and the construction of~$\aboxi$, this interpretation behaves in the same way at $k$ and at~$\ell$.

    \item If $\psi$ is of the form $\psi_1\luntil_{[c_1,c_2]}\psi_2$, then $N_{\psi_1}\le N_\psi-c_2$ and $N_{\psi_2}\le N_\psi-c_2$.
    Assume that $\aboxi,k\models\psi(\abf)$. Then there exists $j\in[c_1,c_2]$ such that
	\begin{equation}\label{eq:lem:until-statement}
	 \aboxi,k+j\models\psi_2(\abf) \text{ and }\aboxi,m\models\psi_1(\abf), \text{ for all } m \text{ with } k\le m<k+j.
\end{equation}
    In case that $j=c_1=0$, we have $\aboxi,k\models\psi_2(\abf)$.
    Since $k$ and $\ell$ are farther than $N_\psi\ge N_{\psi_2}$ from the nearest element of~$\zaplus$, by induction we also have $\aboxi,\ell\models\psi_2(\abf)$ and thus $\aboxi,\ell\models\psi(\abf)$ in this case.
    Hence, we can assume in the following that $j\ge c_1>0$, and thus in particular $\aboxi,k\models\psi_1(\abf)$.

    Since both $k+j$ and $\ell+c_2$ are farther than $N_\psi-c_2\ge N_{\psi_2}$ from the nearest element of~$\zaplus$, by induction we have $\aboxi,\ell+c_2\models\psi_2(\abf)$.
    Moreover, since $\aboxi,k\models\psi_1(\abf)$ and $k$ as well as all elements in $[\ell,\ell+c_2]$ are farther than $N_\psi-c_2\ge N_{\psi_1}$ from the nearest element of~$\zaplus$, by induction we have $\aboxi,m\models\psi_1(\abf)$ for all $m$ with $\ell\le m\le\ell+c_2$.
    Hence, $\aboxi,\ell\models\psi(\abf)$.
    \item If $\psi$ is of the form $\psi_1\luntil_{[c_1,\infty)}\psi_2$, then we have a similar situation as above, except that $j$ is not bounded by~$c_2$. We can again assume that $j>0$ and $\aboxi,k\models\psi_1(\abf)$.

    Let $p$ be the maximal element of~$J$.
    If $k+j>p+c_1$, then $k+j>\ell$ and the distance between $\ell$ and $k+j$ must be at least~$c_1$.
    Moreover, by assumption~\ref{eq:lem:until-statement} we have $\aboxi,m\models\psi_1(\abf)$ for all $m$ with $p<m<k+j$.
    Since $\aboxi,k\models\psi_1(\abf)$ and all elements in~$J$ are farther than $N_\psi\ge N_{\psi_1}$ from the nearest element of~$\zaplus$, by induction we also have $\aboxi,m\models\psi_1(\abf)$ for all $m$ with $\ell\le m\le p$.
    Thus, $\aboxi,\ell\models\psi(\abf)$.

    We now consider the remaining case that $k+j\le p+c_1$.
    Then both $k+j$ and $\ell+c_1$ are farther than $N_\psi-c_1\ge N_{\psi_2}$ from the nearest element of~$\zaplus$, and thus by induction we have $\aboxi,\ell+c_1\models\psi_2(\abf)$.
    By similar arguments as above, we obtain $\aboxi,\ell\models\psi(\abf)$.
    \qed
  \end{itemize}
\end{proof*}

Hence, for evaluating subformulas of~$\rew{\phi}$, it suffices to keep track of time points up to~$N$ steps away from the elements of~$\zaplus$; this includes at least one element from each of the intervals~$J$ mentioned in Lemma~\ref{lem:second-level-representatives}, since every element of~$\tem$ is immediately surrounded by two elements of~$\zaplus$.

We exploit Lemma~\ref{lem:second-level-representatives} in the following definition of the three-sorted first-order formula $\trans{\psi}{n}{t}$ that simulates the behavior of~$\psi(\xbf)$ at the \enquote{virtual} time point $t+n$, where $n\in[-N,N]$.
Whenever we use a formula $\trans{\psi}{n}{t}$, we require that $t$ denotes a representative for $t+n$.
Due to our assumption that each maximal interval from $\Zbb\setminus\tem$ is represented by its endpoints (see Example~\ref{ex:period1}), we know that $t$ is a representative for $t+n$ iff there is no element of~$\zaplus$ between $t$ and $t+n$.
We can encode this check in an auxiliary formula:
\[ \repr{n}{t}:=\lnot\exists t'.\,(t+n\le t'<t)\lor(t<t'\le t+n). \]

\begin{example}
In Example~\ref{ex:period1}, time points $1$ and $166$ are representatives for the missing time points $2$--$165$, and we have $\aboxf\models \repr{1}{1}$ (with $N=1$).
However, for $\phi_\Tmc=\lnext \neg\text{T}(x)$, we have $\aboxi,1\models\phi_\Tmc(p_1)$, but $\aboxi,166\not\models\phi_\Tmc(p_1)$, \ie the behavior at $1$ and $166$ differs.
To distinguish this, we need to refer to the \enquote{virtual} time point~$2$ (see the gray circled \enquote{v}s below) that is not included in~$\aboxf$, via the formula $\trans[p_1]{\neg\text{T}(x)}{1}{1}$. By~\Cref{lem:second-level-representatives}, it is sufficient to consider time point~$2$, because this determines the behavior at $3$--$165$ .

\begin{center}

\newcommand\PMIN{-1}
\newcommand\PMAX{11}
\newcommand\DISTX{0.1}
\begin{tikzpicture}[>=latex, yscale=0.75, xscale=1, semithick,
point/.style={anchor=center,circle,fill=white,draw=black,inner sep=1pt}]\footnotesize

\draw[draw=gray] (\PMIN-0.5,0) -- ++(\PMAX+0.3,0);
\node (ldots) at (\PMIN-0.8,0) {\dots};
\node (rdots) at (\PMAX-0.9,0) {\dots};
%
%

\foreach \name/\position/\offset/\timepoint/\asserted/\completed in 
  {v-1/-1/+\DISTX/-1/{}/{},
   p0/0/0/0/{T}/{T,C},
   v1a/1/-\DISTX/1/{}/{C},
   v1b/5/\DISTX/166/{}/{C},
   p2/6/0/167/{T}/{T,C},
   v3a/7/-\DISTX/168/{}/{T,C},
   v3b/9/\DISTX/257/{}/{T,C},
   p4/10/0/258/{T}/{T,C},
   v5/11/-\DISTX/259/{}/{}} {
  \node (\name) at (-0.2+\position*0.8+\offset,0) [point]{\asserted};
  \node at (-0.2+\position*0.8+\offset,-0.8) {{\scriptsize $\timepoint$}};
  \node[rotate=0, anchor=north] at ($(\name)+(0,1.3)$) {{\completed}};
  }
  
\node at (\PMIN-1.5,1.0) {$\aboxf$};
\node at (\PMIN-1.5,-0.8) {$\zaplus$};
\node at (\PMIN-1.5,0) {$\Amc$};

\node at ($(ldots)+(0,0.5)$) {\dots};
\draw[{-]},thin,gray]	($(ldots)+(0.2,0.5)$) -- ($(v-1)+(0.2,0.5)$); 

\draw[{[-]},thin,gray]	($(v1a)+(-0.2,0.5)$) -- ($(v1b)+(0.2,0.5)$);
\draw[{[-]},thin,gray]	($(v3a)+(-0.2,0.5)$) -- ($(v3b)+(0.2,0.5)$);
\node at ($(rdots)+(0,0.5)$) {\dots};
\draw[{[-},thin,gray]	($(v5)+(-0.2,0.5)$) -- ($(rdots)+(-0.2,0.5)$);

\node (v0) at ($(v-1)+(-0.4,-0.1)$) [circle,fill=white,draw=gray,inner sep=1pt]{\scriptsize\textcolor{gray}{v}};
\node (v1) at ($(v1a)+(0.4,-0.1)$) [circle,fill=white,draw=gray,inner sep=1pt]{{\scriptsize \textcolor{gray}{v}}};
\node (v2) at ($(v1b)+(-0.4,-0.1)$) [circle,fill=white,draw=gray,inner sep=1pt]{\scriptsize\textcolor{gray}{v}};
\node (v3) at ($(v3a)+(0.4,-0.1)$) [circle,fill=white,draw=gray,inner sep=1pt]{{\scriptsize \textcolor{gray}{v}}};
\node (v4) at ($(v3b)+(-0.4,-0.1)$) [circle,fill=white,draw=gray,inner sep=1pt]{\scriptsize\textcolor{gray}{v}};
\node (v6) at ($(v5)+(0.4,-0.1)$) [circle,fill=white,draw=gray,inner sep=1pt]{{\scriptsize \textcolor{gray}{v}}};
\draw[{[->},thin,gray] ($(v-1)+(0.1,-0.2)$) to node[below] {\scriptsize N} ($(v0)+(0.1,-0.1)$);
\draw[{[->},thin,gray] ($(v1a)-(0.1,0.2)$) to node[below] {\scriptsize N} ($(v1)-(0.1,0.1)$);
\draw[{[->},thin,gray] ($(v1b)+(0.1,-0.2)$) to node[below] {\scriptsize N} ($(v2)+(0.1,-0.1)$);
\draw[{[->},thin,gray] ($(v3a)-(0.1,0.2)$) to node[below] {\scriptsize N} ($(v3)-(0.1,0.1)$);
\draw[{[->},thin,gray] ($(v3b)+(0.1,-0.2)$) to node[below] {\scriptsize N} ($(v4)+(0.1,-0.1)$);
\draw[{[->},thin,gray] ($(v5)-(0.1,0.2)$) to node[below] {\scriptsize N} ($(v6)-(0.1,0.1)$);

\end{tikzpicture}
\end{center}

\end{example}

We now define $\trans{\psi}{n}{t}$ recursively, for each subformula~$\psi$ of $\rew{\phi}$.
If $\psi$ is a single rewritten NCQ, then $\trans{\psi}{n}{t}$ is obtained by replacing each atemporal atom $A(x)$ by $A(x,t)$, and similarly for role atoms.
The parameter $n$ can be ignored here, because we assumed that $t$ is a representative for $t+n$, and hence the time points $t$ and $t+n$ are interpreted in \aboxi{} equally.
For conjunctions, we set $\trans{\psi_1\land\psi_2}{n}{t}:=\trans{\psi_1}{n}{t}\land \trans{\psi_2}{n}{t}$ and similarly for the other Boolean constructors.
Finally, we demonstrate the translation for $\luntil$-formulas (the case of $\lsince$-formulas is analogous).
We define $\trans{\psi_1\luntil_{[c_1,c_2]}\psi_2}{n}{t}$ as
\begin{align*}
  &\exists t'.\bigvee_{n'\in[-N,N]}\bigg((t+n+c_1\le t'+n'\le t+n+c_2)\land\repr{n'}{t'}\land\trans{\psi_2}{n'}{t'}\land{} \\
  &\quad \forall t''.\bigwedge_{n''\in[-N,N]}\Big(\big((t+n\le t''+n''<t'+n')\land\repr{n''}{t''}\big)\to \trans{\psi_1}{n''}{t''}\Big)\bigg),
\end{align*}
where $c_2$ may be~$\infty$, in which case the upper bound of $t+n+c_2$ can be removed.

\begin{restatable}{lemma}{rewritingfinite}
\label{lem:rewritingfinite}
  Let $\Kmc = (\Tmc, \Amc)$ be a consistent \dELHbm KB and $\phi$ be an MTNCQ. \\
  Then $\ans(\trans{\rew{\phi}}{0}{t}, \aboxf) = \ans(\rew{\phi}, \aboxi)$.
\end{restatable}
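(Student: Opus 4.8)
The plan is to strengthen the statement so that a structural induction on the subformulas of $\rew{\phi}$ succeeds, and then specialise. Concretely, I would prove: for every subformula $\psi$ of $\rew{\phi}$, every offset $n\in[-N,N]$, every $t\in\zaplus$ with $\aboxf\models\repr{n}{t}$, and every relevant tuple $\abf$ over $\NI$,
\[
  \aboxf,t\models\trans[\abf]{\psi}{n}{t}
  \quad\Longleftrightarrow\quad
  \aboxi,t+n\models\psi(\abf).
\]
Taking $\psi=\rew{\phi}$ and $n=0$ then proves the lemma: $\repr{0}{t}$ holds for every $t\in\zaplus$, so the equivalence specialises to $\aboxf,t\models\trans[\abf]{\rew{\phi}}{0}{t}$ iff $\aboxi,t\models\rew{\phi}(\abf)$ for all relevant $\abf$ and all $t\in\zaplus$, which is exactly the claimed identity of answer sets (recall $\tem\subseteq\zaplus$).

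For the base case, $\psi=\rew{\psi_0}$ is the first-order rewriting of a rooted NCQ $\psi_0$, and $\trans[\abf]{\psi}{n}{t}$ is obtained by timestamping every concept and role atom with~$t$. Since $\aboxf$ is by construction the restriction of $\aboxi$ to the time domain $\zaplus$, the two structures interpret all relevant predicates at the fixed time point $t\in\zaplus$ identically, so $\aboxf,t\models\trans[\abf]{\psi}{n}{t}$ iff $\aboxi,t\models\psi(\abf)$. Moreover, $\aboxf\models\repr{n}{t}$ guarantees that no element of $\zaplus$ lies strictly between $t$ and $t+n$ and that $t+n\notin\zaplus$; by the definition of $\zaplus$ as representative time points (Lemma~4 in~\cite{BoFoKo-RRML}) the assertions entailed at $t$ and at $t+n$ coincide, so $\aboxi,t\models\psi(\abf)$ iff $\aboxi,t+n\models\psi(\abf)$. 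The Boolean cases $\neg,\land,\lor$ are immediate, since $\trans[\abf]{\cdot}{n}{t}$ and $\models$ both commute with the connectives and the induction hypothesis applies verbatim.

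The temporal cases $\psi_1\luntil_I\psi_2$ and $\psi_1\lsince_I\psi_2$ are the crux and, I expect, the main obstacle; I would treat $\luntil_{[c_1,c_2]}$ in detail, the $\lsince$ case being symmetric. Write $i:=t+n$. By the semantics, $\aboxi,i\models(\psi_1\luntil_{[c_1,c_2]}\psi_2)(\abf)$ iff there is $k\in[c_1,c_2]$ with $\aboxi,i+k\models\psi_2(\abf)$ and $\aboxi,m\models\psi_1(\abf)$ for all $m$ with $i\le m<i+k$, whereas $\trans[\abf]{\psi_1\luntil_{[c_1,c_2]}\psi_2}{n}{t}$ existentially quantifies a pair $(t',n')$ with $t'\in\zaplus$, $n'\in[-N,N]$, $i+c_1\le t'+n'\le i+c_2$, and $\repr{n'}{t'}$, asserts $\trans[\abf]{\psi_2}{n'}{t'}$, and universally quantifies $\trans[\abf]{\psi_1}{n''}{t''}$ over all admissible $(t'',n'')$ with $i\le t''+n''<t'+n'$ and $\repr{n''}{t''}$ (the arithmetic side conditions such as $i+c_1\le t'+n'$ being first-order definable over $\bit$ and $\sign$, as shown before the lemma). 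The bridge between the two formulations is exactly \Cref{lem:second-level-representatives}: a time point more than $N$ away from $\tem$ can be \enquote{snapped} to a point of the form $t'+n'$ with $t'\in\zaplus$, $n'\in[-N,N]$ and $\repr{n'}{t'}$ at which $\psi_1$ (resp.~$\psi_2$) behaves identically, while a time point within distance $N$ of $\zaplus$ already admits such a representation --- here one uses that both endpoints of every maximal gap of $\tem$ lie in $\zaplus$, so that the needed $\repr{n'}{t'}$ is satisfiable. Matching the witness $i+k$ and the quantified time points under this correspondence, and applying the induction hypothesis to $\psi_1,\psi_2$, yields both directions: from an $\aboxi$-witness $k$ one constructs an admissible $(t',n')$, and conversely from an admissible $(t',n')$ one reads off the honest witness $k:=(t'+n')-i\in[c_1,c_2]$; in either direction the universal conjunct over $(t'',n'')$ ranges precisely over the representatives of the time points in $[i,i+k)$, which transfers \enquote{$\psi_1$ holds throughout} between the two structures. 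The bookkeeping must respect the available slack: since $N$ is the sum of all finite interval bounds occurring in $\rew{\phi}$, the subformulas $\psi_1,\psi_2$ carry strictly smaller such sums, so the induction hypothesis is available with room at least $N-c_2$, and --- exactly as in the proof of \Cref{lem:second-level-representatives} --- the argument splits into $c_2<\infty$ and $c_2=\infty$, the latter additionally requiring that a witness far beyond the last gap endpoint be pulled back to that endpoint while preserving the $\psi_1$-interval condition. The one genuinely delicate point is to verify that $\repr{\cdot}{\cdot}$ makes the quantified correspondence exact in both directions, \ie that every relevant time point in a gap has a representative of the prescribed shape and that no spurious ones are introduced.
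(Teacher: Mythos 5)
Your proposal is correct and follows essentially the same route as the paper's own proof: the identical strengthened inductive claim (all subformulas, offsets $n\in[-N,N]$, representatives $t$ with $\repr{n}{t}$), the same base case via the representative property of $\zaplus$, trivial Boolean cases, and the same use of Lemma~\ref{lem:second-level-representatives} to snap far-away $\luntil$/$\lsince$ witnesses back to points of the form $t'+n'$ with $n'\in[-N,N]$. Your treatment of the temporal case is, if anything, slightly more explicit about the two-directional matching of witnesses than the paper's sketch, so no gap to report.
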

\begin{proof*}
  We show the following claim by induction on the structure of~$\phi$: For all $i\in\zaplus$, all $n\in[-N,N]$, all relevant tuples~\abf, and all MTNCQs~$\phi$ such that if $\aboxf\models\repr{n}{i}$ then
  \[ \aboxf\models\trans[\abf]{\rew{\phi}}{n}{i} \text{ iff } \aboxi,i+n\models\rew{\phi}(\abf). \]
  Since this includes the case where $i\in\tem$, $n=0$, for which $\aboxf\models\repr{0}{i}$ holds, the statement of the lemma follows.

  If $\phi$ is an NCQ, then
  \[ \aboxf\models\trans[\abf]{\rew{\phi}}{n}{i} \text{ iff } \aboxi,i\models\rew{\phi}(\abf) \text{ iff } \aboxi,i+n\models\rew{\phi}(\abf) \]
  since $i$ is a representative for $i+n$ and a single temporal variable~$t$ is used in $\trans{\rew{\phi}}{n}{t}$ to denote \enquote{current} time point in~$\rew{\phi}$.

  For the Boolean constructors, the claim follows immediately from the semantics of first-order logic.

  We now consider a formula of the form $\phi\luntil_I\psi$.
  By induction, we know that the following hold:
  (i) $\aboxf\models\trans[\abf]{\rew{\psi}}{n'}{i'}$ iff $\aboxi,i'+n'\models\rew{\psi}(\abf)$, for any time point $i'$ with $i'+n'\geq i+n$, and (ii) $\aboxf\models\trans[\abf]{\rew{\phi}}{n''}{i''}$ iff $\aboxi,i''+n''\models\rew{\phi}(\abf)$
  for all time points $i''$ and offsets $n''$ such that $i+n\leq i''+n''<i'+n'$
 (assuming w.l.o.g.\ that $\phi$ and $\psi$ have the same answer variables).

  Hence, the formula $\trans[\abf]{\rew{\phi}\luntil_I\rew{\psi}}{n}{i}$ checks the conditions required for the satisfaction of the $\luntil_I$-expression for all time points in $\bigcup\{[i-N,i+N]\mid i\in\zaplus\}$.
  However, Lemma~\ref{lem:second-level-representatives} tells us that, if $\rew{\psi}$ is satisfied in~$\aboxi$ at some time point $i'+n'$ with $n'>N$, then this is also the case for $n'=N$.
  Similarly, to check whether $\rew{\phi}$ is satisfied at all time points between $i+n$ and $i'+n'$, it suffices to consider the time points up to $N$ away from some element of~$\zaplus$.
  Hence, $\aboxf\models\trans[\abf]{\rew{\phi}\luntil_I\rew{\psi}}{n}{i}$ iff $\aboxi,i+n\models(\rew{\phi}\luntil_I\rew{\psi})(\abf)$.
  \qed
\end{proof*}

This lemma allows us to compute in polynomial time that patient~$p_1$ from~\Cref{ex:patients} is an answer to $\phi(x)$ from~\Cref{ex:chemotherapy} exactly at time points~$257$ and $258$.
Below we summarize our tight complexity results, which by Lemma~\ref{lem:model-construction-correctness} also hold for rooted MTCQs under certain answer semantics.

\begin{restatable}{theorem}{complexity}
  Answering rooted MTNCQs under minimal-world semantics over \dELHbm KBs is \ExpSpace-complete, and \PTime-complete in data complexity.
\end{restatable}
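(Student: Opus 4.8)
The plan is to derive both complexity bounds from the two rewriting steps already established, reducing minimal-world answering to first-order model checking over the finite structure~$\aboxf$. Combining \Cref{lem:rewritinginfinite} and \Cref{lem:rewritingfinite} gives
\[ \mwa(\phi,\Kmc) = \ans(\rew{\phi},\aboxi) = \ans\big(\trans{\rew{\phi}}{0}{t},\aboxf\big), \]
so it suffices to evaluate the three-sorted first-order formula \trans{\rew{\phi}}{0}{t} over the finite first-order structure~$\aboxf$ (with its built-in predicates $\bit$, $\sign$ and the first-order-definable distance comparisons $t'-t\bowtie m$). The bounds for rooted MTCQs under certain answer semantics then follow immediately from \Cref{lem:model-construction-correctness}. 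Since KB consistency in \dELHbm is decidable in \PTime (via entailment, which is \PTime-complete by~\cite{BoFoKo-RRML}), we may assume the input KB to be consistent.

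For the combined \ExpSpace upper bound, I would first check that $\aboxf$ has polynomial size and is computable in polynomial time: the representative time points in~$\zaplus$, and the endpoints of the maximal intervals in $\Zbb\setminus\tem$, are polynomially many even though time stamps are binary-encoded; the number of bit positions is polynomial; and concept/role membership at these time points is decided by polynomially many \dELHbm entailment checks. The formula \trans{\rew{\phi}}{0}{t}, on the other hand, may be of exponential size, for two reasons: \textbf{(i)}~$\rew{\phi}$ replaces each rooted NCQ~$\psi$ by $\bigvee_{\psi'\in\rewo_\Tmc(\psi)}\psi'$, and although each~$\psi'$ has polynomial size (bounded nested filter depth by \Cref{lemma:non-rooted-bound}, polynomially many rewriting steps for rooted queries), there may be exponentially many of them; and \textbf{(ii)}~each metric operator is translated using disjunctions/conjunctions ranging over $n',n''\in[-N,N]$, where the constant~$N$ (the sum of all interval bounds) is exponential in the binary encoding of those bounds, and the same bound~$N$ also occurs inside the auxiliary predicates such as \repr{n}{t}. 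Since the number of nested temporal operators is linear in~$\phi$, these two blow-ups compose to a formula of size at most $2^{\mathrm{poly}}$ — this size accounting is the step that must be done carefully. Generic first-order model checking over a finite structure runs in space polynomial in the combined size of formula and structure; with a formula of size $2^{\mathrm{poly}}$ and a structure of polynomial size, this is $2^{\mathrm{poly}}$ space, i.e.\ in \ExpSpace (and constructing \trans{\rew{\phi}}{0}{t} explicitly likewise needs only exponential space).

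For the data complexity, $\phi$ and~\Tmc are fixed, hence so are~$N$ and the formula \trans{\rew{\phi}}{0}{t}, while $\aboxf$ is still computable in time polynomial in~$|\Amc|$ by the same argument. Evaluating a fixed first-order formula over a finite structure is in $\mathrm{AC}^0$, and in particular in \PTime. The matching \PTime lower bound is inherited from the atemporal case (\Cref{theorem:complexity}), since atemporal NCQs over \ELHb KBs — indeed rooted positive CQs, via \Cref{lem:model-construction-correctness} — are a special case of rooted MTNCQs over \dELHbm KBs. The \ExpSpace lower bound already holds for rooted positive MTCQs: it is inherited from the known \ExpSpace-hardness of answering metric temporal (conjunctive) queries with binary-encoded interval bounds~\cite{GuJO-ECAI16,BBK+-FroCoS17}, and the restriction to rooted queries together with the minimal-world reading of negation do not affect that reduction, which uses no negated atoms.

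The main obstacle is the combined-complexity upper bound: one must argue that the two independent exponential blow-ups — the one coming from the (finite but potentially exponential) rewriting set $\rewo_\Tmc(\psi)$, and the one coming from the exponential offset range $[-N,N]$ forced by the binary encoding of the metric interval bounds — combine only to an exponential-size first-order formula, so that ordinary first-order model checking over the polynomial-size structure~$\aboxf$ still stays within \ExpSpace. Everything else (the polynomial bound on~$\aboxf$ via \PTime entailment, the \PTime data-complexity bound via fixed-query first-order evaluation, and the two lower bounds) is comparatively routine once the rewriting lemmas are in place.
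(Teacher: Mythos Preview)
Your proposal is correct and follows essentially the same route as the paper: reduce via \Cref{lem:rewritinginfinite} and \Cref{lem:rewritingfinite} to first-order model checking over the polynomial-size structure~$\aboxf$, observe that the rewritten formula has at most exponential size (with linear nesting depth), and invoke \PSpace model checking for the \ExpSpace upper bound and fixed-formula evaluation for the \PTime data complexity. The only cosmetic differences are in the lower-bound citations: the paper traces \ExpSpace-hardness directly to propositional MTL~\cite{AlHe-JACM94,FuSp-ICTAC08} and \PTime-hardness to atemporal \EL~\cite{CDD+-AI13}, whereas you cite the metric temporal CQ papers and \Cref{theorem:complexity}; both choices work.
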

\begin{proof*}
  \ExpSpace-hardness is inherited from propositional MTL~\cite{AlHe-JACM94,FuSp-ICTAC08}.
  Moreover, first-order formulas over finite structures can be evaluated in \PSpace \cite{Vard-STOC82}.
  Finally, the size of $\trans{\rew{\phi}}{0}{t}$ is bounded exponentially in the size of~$\phi$ and~\Tmc: Each rewritten NCQ~$\rew{\psi}$ may be exponentially larger than~$\psi$, and each $\trans{\psi_1\luntil_I\psi_2}{n}{t}$ introduces exponentially many disjuncts and conjuncts (but the nesting depth of constructors in this formula is linear in the nesting depth of~$\psi_1\luntil_I\psi_2$).

  For data complexity, hardness is inherited from atemporal \EL \cite{CDD+-AI13}.
  The evaluation of $\text{FO}(<,\bit)$-formulas is in DLogTime-uniform $AC^0$ in data complexity \cite{DBLP:conf/coco/Lindell92}, and the size of our rewriting only depends on the query and the TBox. 
  By Lemmas~\ref{lem:rewritinginfinite} and~\ref{lem:rewritingfinite} and since $\aboxf$ is of size polynomial in the size of~\Amc, deciding whether a tuple~\abf is a minimal-world answer of an MTNCQ w.r.t.\ a \dELHbm KB is possible in~\PTime.
  \qed
\end{proof*}

\section{Related Work}
Until a decade ago, the work on combining ontology and temporal languages was mostly focused on the main reasoning tasks such as concept
subsumption (whether an inclusion between concepts is entailed by a temporal ontology) and knowledge base satisfiability
(whether a given knowledge base consisting of a temporal data instance and a temporal ontology has a model); we refer the reader
to~\cite{DBLP:reference/fai/ArtaleF05,Baader:2003:EDL:885746.885753,LuWZ-TIME08,DBLP:journals/tocl/ArtaleKRZ14} for details.
Particularly, in the presence of a single rigid role, allowing the operator~$\diamondp$ on both sides of \EL CIs makes subsumption undecidable \cite{AKL+-TIME07}.
In \cite{GuJK-IJCAI16}, a variety of restrictions are investigated to regain decidability, such as acyclic TBoxes and restrictions on the occurrences of temporal operators.
In particular, allowing the qualitative operators $\diamondpm,\diamondm,\diamondp,\diamondc$ only on the left-hand side of CIs makes the logic tractable.
Less closely related work considers temporalized DLs where LTL operators are used to combine axioms, but are not allowed within concepts \cite{BaGL-TOCL12}.

Since atomic query answering (whether a concept or role name is entailed by a knowledge base) can be reduced to the satisfiability checking problem, further developments of query languages and the complexities of ontology-mediated query answering over temporal data have recently appeared in the literature.
In the following we briefly summarize recent research on combining ontology and query languages with negation and temporal formalisms.
For a general overview of temporal ontology and query languages, see \cite{LuWZ-TIME08,artale2017ontology}.

In this research we focus on a discrete timeline (over $\Zbb$) and data facts stamped with a single time point.

However, in the literature there are other approaches how to incorporate temporal formalisms in an ontology and the data, \eg dense timelines, like \Qbb or \Rbb~\cite{BKR+-JAIR18,DBLP:conf/ijcai/RyzhikovWZ19}; or interval-based data models, where facts are stamped with a pair of time points denoting the interval in which they are true~\cite{KPP+-IJCAI16,BKR+-JAIR18}.

Discussing this work is beyond the scope of this article.
We only want to mention here that choosing $\Zbb$ rather than $\Qbb$ or $\Rbb$ is not a restriction in our setting since (i) our formalism allows arbitrarily large gaps between time points in the input and (ii) there are no inferences that would require a dense model of time, e.g. computations approaching some time point only in the limit.

Within the discrete time point-based approach, one can distinguish between formalisms with LTL temporal constructs and formalisms employing more refined Metric Temporal Logic (MTL) operators~\cite{DBLP:journals/jacm/AlurFH96}.
MTL extends LTL with temporal intervals for the modal operators, restricting them to a specific time range.

Combining ontology-mediated query answering with LTL operators has been investigated in depth.
In particular, similarly to the query language adapted in this paper, there is a multitude of works~\cite{BaBL-CADE13,DBLP:conf/ijcai/BorgwardtT15,DBLP:conf/ausai/BaaderBL15,BaBL-JWS15} investigating the complexity of answering LTL-CQs that are obtained from LTL formulas by replacing occurrences of propositional variables by arbitrary conjunctive queries (CQs).
Moreover, as in this article, the research~\cite{DBLP:conf/frocos/BorgwardtLT13,borgwardt2015temporalizing} focuses on the rewritability properties of LTL-CQs.
An orthogonal approach for query rewriting over a temporalized \textit{DL-Lite}  ontology was proposed in~\cite{AKWZ-IJCAI13,AKK+-IJCAI15}. Here the focus mainly lies on increasing expressivity of an ontology language by allowing a concept or a role to be prefixed with LTL constructs.
Only recently, also metric variants of LTL-CQs have been considered \cite{GuJO-ECAI16,BBK+-FroCoS17,Thos-KR18}.

Negation in queries with the classical open-world semantics results in non-tractable (mostly \coNP or even undecidable) query evaluation~\cite{Rosa-ICDT07,GIKK-JWS15}. Moreover, prior work~\cite{BoTh-GCAI15,borgwardt2020temporal} on temporalized ontology-mediated query answering with negation shows that the high complexity of temporal query answering with negation is mostly due to the open-world assumption for negation in a query language.
There are several approaches how to introduce negation in ontology-mediated query answering without losing tractability, \eg closed predicates, epistemic semantics, Skolemization. For details, we refer the reader to Section~\ref{sec:ncqs}, where we motivate the semantics adopted in the paper.
Indeed, as we show by this work, by changing semantics for negation, we can apply efficient (in data complexity) algorithms for temporal query answering with negation.

\section{Conclusion}

Dealing with the absence of information is an important and at the same time challenging task. In many real-world scenarios, it is not clear whether a piece of information is missing because it is unknown or because it is false.

EHRs mostly talk about positive diagnoses and it would be impossible to list all the negative diagnoses, \ie the diseases a patient does not suffer from.
Moreover, EHRs and clinical trial criteria contain an inherent temporal component.
We showed that such a setting cannot be handled adequately by existing logic-based approaches, mostly because they do not deal with closed-world negation over anonymous objects.
We introduced a novel semantics for answering metric temporal conjunctive queries with negation and showed that it is well-behaved also for anonymous objects.
Moreover, we demonstrated combined first-order rewritability, which allows us to answer MTNCQs by using conventional relational database technologies.
The data complexity result of~\PTime shows that MTNCQ answering over \dELHb KBs is not inherently more difficult than CQ answering in \EL.
Similarly, the combined complexity does not increase over the lower bound of \ExpSpace inherited from metric temporal logic.

We are working on an optimized implementation of this method with the aim to deal with queries over large ontologies such as SNOMED\,CT.
First results show that this is feasible, in spite of the theoretical complexity of \ExpSpace.%
\footnote{\url{https://github.com/wko/quelk}}
There is already a prototype system that can translate simple clinical trial criteria from clinicaltrials.gov into MTNCQs \cite{XFB+-ODLS19}.
On the theoretical side, we will further develop our approach to also represent numeric information, such as measurements and dosages of medications, which are important for evaluating eligibility criteria of clinical trials \cite{crowe2015designing,BoJi-JBI18}.
Similar to negation, inequalities in CQs also lead to an increase in complexity under certain answer semantics \cite{GIKK-JWS15}. It would be interesting to investigate their behavior under closed-world semantics, also because inequalities are related to counting queries, \eg counting the number of diseases or treatments of a patient.
Another open problem is to extend our results to temporal logics that support \emph{temporal roles}, \ie where temporal operators can also be applied to role names.

\paragraph{Acknowledgments} We thank the anonymous reviewers for their helpful comments on an earlier draft of the manuscript. This work was supported by the DFG grant BA 1122/19-1~(GOASQ), and grant 389792660
as part of TRR~248 (\url{https://perspicuous-computing.science}).

\bibliographystyle{acmtrans}
\bibliography{bibs}
\end{document}